\newcommand{\sv}[1]{#1}
\newcommand{\lv}[1]{}
\newcommand{\E}{{\bf E}}
\newcommand{\pr}{\mathbf{Pr}}
\newcommand{\eps}{{\varepsilon}}
\newcommand{\etal}{{\it et al. }}
\newcommand{\veps}{\varepsilon}
\newcommand{\cs}[1]{{\mu_{#1}}}
\newcommand{\R}{\mathbb{R}}
\newcommand{\OA}{\mathcal{O}}
\begin{document}

\title{Approximate Clustering with Same-Cluster Queries}
%
%
\author{Nir Ailon\inst{1} \and Anup Bhattacharya\inst{2} \and Ragesh Jaiswal\inst{2} \and Amit Kumar\inst{2}}
%
%
%
\institute{
Technion, Haifa, Israel.\thanks{Email address: \email{nailon@cs.technion.ac.il}}
\and
Department of Computer Science and Engineering, \\
Indian Institute of Technology Delhi.\thanks{Email addresses: \email{\{anupb, rjaiswal, amitk\}@cse.iitd.ac.in}}
}
\sv{\thispagestyle{empty}}
\maketitle     

\begin{abstract}
Ashtiani \etal proposed a Semi-Supervised Active Clustering framework (SSAC), where the learner is allowed to make adaptive queries to a domain expert. The queries are of the kind ``{\it do two given points belong to the same optimal cluster?}", and the answers to these queries are assumed to be consistent with a unique optimal solution. There are many clustering contexts where such {\em same cluster} queries are feasible. Ashtiani \etal exhibited the power of such queries by showing that any instance of the $k$-means clustering problem, with additional {\em margin} assumption, can be solved efficiently  if one is allowed $O(k^2 \log{k} + k \log{n})$ same-cluster queries. This is interesting since the $k$-means problem, even with the margin assumption, is $\mathsf{NP}$-hard.

In this paper, we extend the work of Ashtiani \etal to the approximation setting showing that a few of such same-cluster queries enables one to get a polynomial-time  $(1 + \veps)$-approximation algorithm for the $k$-means problem without any margin assumption on the input dataset. 
Again, this is interesting since the $k$-means problem is $\mathsf{NP}$-hard to approximate within a factor $(1 + c)$ for a fixed constant $0 < c < 1$. The number of same-cluster queries used is $\textrm{poly}(k/\veps)$ which is independent of the size $n$ of the dataset. 
Our algorithm is based on the $D^2$-sampling technique, also known as the $k$-means++ seeding algorithm. 
We also give a conditional lower bound on the number of same-cluster queries showing that if the Exponential Time Hypothesis (ETH) holds, then any such efficient query algorithm needs to make $\Omega \left(\frac{k}{poly \log k} \right)$ same-cluster queries. 
Our algorithm can be extended for the case when the oracle is faulty, that is, it gives wrong answers to queries with some bounded probability.
Another result we show with respect to the $k$-means++ seeding algorithm is that a small modification to the $k$-means++ seeding algorithm within the SSAC framework converts it to a constant factor approximation algorithm instead of the well known $O(\log{k})$-approximation algorithm.
\end{abstract}

\section{Introduction}
Clustering is extensively used in data mining and is typically the first task performed when trying to understand large data. Clustering basically involves partitioning given data into groups or clusters such that data points within the same cluster are similar as per some similarity measure. Clustering is usually performed in an unsupervised setting where data points do not have any labels associated with them. The partitioning is done using some measure of similarity/dissimilarity between data elements.
In this work, we extend the work of Ashtiani \etal\cite{akd16} who explored the possibility of performing clustering in a {\em semi-supervised active learning} setting for center based clustering problems such as $k$-median/means. In this setting, which they call Semi-Supervised Active Clustering framework or SSAC in short, the clustering algorithm is allowed to make adaptive queries of the form:
\begin{quote}
{\it do two points from the dataset belong to the same optimal cluster?}.
\end{quote} where query answers are assumed to be consistent with a unique optimal solution.
Ashtiani \etal\cite{akd16} started the study of understanding the strength of this model. 
Do hard clustering problems become easy in this model? They explore such questions in the context of center-based clustering problems. Center based clustering problems such as $k$-means are extensively used to analyze large data clustering problems. Let us define the $k$-means problem in the Euclidean setting.

\begin{definition}[$k$-means problem]
Given a dataset $X \subseteq \R^d$ containing $n$ points, and a positive integer $k$, find a set of $k$ points $C \subseteq \R^d$ (called centers) such that the following cost function is minimized:
\[
\Phi(C, X) = \sum_{x \in X} \min_{c \in C} D(x, c).
\]
$D(x, c)$ denotes the squared Euclidean distance between $c$ and $x$. That is, $D(x, c) = ||x - c||^2$.
\end{definition}

Note that the $k$ optimal centers $c_1, ..., c_k$ of the $k$-means problem define $k$ clusters of points in a natural manner. All points for which the closest center is $c_i$ belong to the $i^{th}$ cluster. This is also known as the {\em Voronoi partitioning} and the clusters obtained in this manner using the optimal $k$ centers are called the optimal clusters. Note that the optimal center for the $1$-means problem for any dataset $X \subseteq \R^d$ is the centroid of the dataset denoted by $\mu(X) \stackrel{def.}{=} \frac{\sum_{x \in X} x}{|X|}$.
This means that if $X_1, ...., X_k$ are the optimal clusters for the $k$-means problem on any dataset $X \subseteq \R^d$ and $c_1, ..., c_k$ are the corresponding optimal centers, then $\forall i, c_i = \mu(X_i)$.
The $k$-means problem has been widely studied and various facts are known about this problem.
The problem is tractable when either the number $k$ of clusters or the dimension $d$ equal to $1$.
However, when $k>1$ or $d > 1$, then the problem is known to be
$\mathsf{NP}$-hard~\cite{das08,V09,mnv12}.
There has been a number of works of {\em beyond the worst-case} flavour for $k$-means problem in which it is typically assumed that the dataset satisfies some {\em separation} condition, and then the question is whether this assumption can be exploited to design algorithms providing better guarantees for the problem.
Such questions have led to different definitions of separation and also some interesting results for datasets that satisfy these separation conditions (e.g., \cite{OstrovskyRSS06,BalcanBG09,abs12}).
Ashtiani \etal\cite{akd16} explored one such separation notion that they call the $\gamma$-margin property. 

\begin{definition}[$\gamma$-margin property]
Let $\gamma > 1$ be a real number.
Let $X \subseteq \R^d$ be any dataset and $k$ be any positive integer. 
Let $P_X = \{X_1, ..., X_k\}$ denote $k$ optimal clusters for the $k$-means problem. Then this optimal partition of the dataset $P_X$ is said to satisfy the $\gamma$-margin property iff for all $i \neq j \in \{1, ..., k\}$ and $x \in X_i$ and $y \in X_{j}$, we have:
$$\gamma \cdot ||x - \mu(X_i)|| < ||y - \mu(X_i)||.$$
\end{definition}
Qualitatively, what this means is that every point within some cluster is closer to its own cluster center than a point that does not belong to this cluster. This seems to be a very strong separation property. Ashtiani \etal\cite{akd16} showed that the $k$-means clustering problem is $\mathsf{NP}$-hard even when restricted to instances that satisfy the $\gamma$-margin property for $\gamma = \sqrt{3.4} \approx 1.84$.
Here is the formal statement of their hardness result.

\begin{theorem}[Theorem 10 in \cite{akd16}]
Finding an optimal solution to $k$-means objective function is $\mathsf{NP}$-hard when $k = \Theta(n^{\veps})$ for any $\veps  \in (0, 1)$, even when there is an optimal clustering that satisfies the $\gamma$-margin property for $\gamma = \sqrt{3.4}$.
\end{theorem}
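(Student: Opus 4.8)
\smallskip\noindent\textbf{Proof idea.} The plan is to exhibit a polynomial-time reduction from a known $\mathsf{NP}$-hard combinatorial problem — for concreteness one may take $3$-SAT or Exact Cover by $3$-sets — to exact $k$-means, engineered so that (i) the optimum of the constructed instance drops below an explicit threshold $T$ exactly when the source instance is a YES instance, and (ii) in that case the optimal Voronoi partition is forced to be a single geometrically rigid clustering whose clusters are so tight internally and so far apart that the $\gamma$-margin property holds with $\gamma=\sqrt{3.4}$. First I would design a constant-size \emph{choice gadget} for each combinatorial decision (a variable assignment, or the selection of a set in a cover): a small group of points in $\R^d$, with $d$ polynomial in the instance size, whose only near-optimal internal clusterings are those encoding a legal choice, and whose intended clusters are essentially rigid configurations (e.g. a pair or triple of points at fixed mutual distance) so that the within-cluster radius $\max_{x\in X_i}\|x-\mu(X_i)\|$ is a precisely computable quantity. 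Placing distinct gadgets on mutually orthogonal coordinate blocks then makes every point of another cluster far from $\mu(X_i)$; the point of choosing the numerology carefully is to make the resulting separation ratio come out to be at least $\sqrt{3.4}$ rather than merely some unspecified constant $>1$.

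Second, I would verify the two directions. For the forward direction, a satisfying assignment or valid cover yields the intended clustering, whose cost I compute and take as the threshold $T$. For the converse, I argue that \emph{any} clustering of cost at most $T$ must respect the gadget structure — each gadget splits among clusters only in a legal way — because any deviation (merging points across gadgets, or breaking a rigid configuration) strictly increases the $1$-means cost of the affected clusters by more than the available slack; this is where the orthogonal placement and the gadget tightness do the work. Reading off the combinatorial object from such a clustering completes the equivalence and simultaneously shows the optimal clustering in the YES case satisfies the $\gamma$-margin property with the claimed $\gamma$.

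Third, to reach the regime $k=\Theta(n^\veps)$ for an arbitrary fixed $\veps\in(0,1)$, note that the basic reduction produces $k$ roughly linear in the point count $n_0$. I then apply a blow-up: replace each point by a bundle of $t$ copies placed within distance $\delta$ of each other, with $\delta$ astronomically small relative to all inter-gadget distances, and $t=\Theta\!\left(n_0^{(1-\veps)/\veps}\right)$. The new instance has $n=t\,n_0$ points, still $k=\Theta(n_0)=\Theta(n^\veps)$ clusters, and $n$ is polynomial in $n_0$, so the reduction stays polynomial. Because the bundles are so tight, every bundle lies inside a single cluster of any sufficiently good solution, so correctness is preserved; and since the cluster centroids move by at most $\delta$, the margin bound survives up to a negligible additive perturbation (alternatively one starts the analysis from $\gamma'=\sqrt{3.4}+o(1)$ and absorbs the loss).

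The main obstacle is the \emph{quantitative} margin analysis in step two. It is not hard to get a reduction whose optimal clusters are well separated; the difficulty is pinning the worst-case ratio $\|y-\mu(X_i)\|\big/\|x-\mu(X_i)\|$ (over $x\in X_i$, $y\notin X_i$) down to the precise value $\sqrt{3.4}=\sqrt{17/5}$ while \emph{simultaneously} leaving a usable cost gap between legal and illegal clusterings. Those two demands — rigidity and separation for the margin, versus a clean threshold for the hardness — pull against each other, and balancing them is the delicate part; the rest is gadget bookkeeping and the routine blow-up argument.
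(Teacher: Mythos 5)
This theorem is quoted in the paper as Theorem~10 of Ashtiani~\emph{et~al.}~\cite{akd16}; the paper you are reading does not reprove it, so there is no in-paper argument to compare your sketch against. Judging your proposal on its own terms: the overall shape (reduction from an $\mathsf{NP}$-hard covering/satisfiability problem, a threshold $T$ separating YES from NO instances, verifying the margin constant in the YES case, and a copy blow-up to hit $k=\Theta(n^\veps)$) is the right shape, and the blow-up calculation $t=\Theta\!\big(n_0^{(1-\veps)/\veps}\big)$ is fine. But as written there are two substantive gaps that prevent this from being a proof rather than a plan.

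First, you explicitly defer the construction of the gadget whose margin ratio comes out to $\sqrt{3.4}$, calling it ``the delicate part.'' That gadget \emph{is} the theorem. The statement is not ``$\mathsf{NP}$-hard with some margin $>1$'' but $\mathsf{NP}$-hard with the specific constant $\sqrt{3.4}$, and the whole content of Ashtiani~\emph{et~al.}'s argument is a concrete coordinate embedding whose worst-case ratio $\|y-\mu(X_i)\|/\|x-\mu(X_i)\|$ evaluates to exactly $\sqrt{17/5}$ in the YES case. A proof that names this as the hard step and then stops has not yet engaged with the part that needs proving. Second, the ``mutually orthogonal coordinate blocks'' idea, taken at face value, destroys the reduction: if each choice gadget lives on its own orthogonal block and contributes independently to the $k$-means cost, then the instance decomposes and an optimal solution is obtained by optimizing each gadget separately in polynomial time, so there is nothing $\mathsf{NP}$-hard left. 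Reductions of this type (from X3C or similar) work precisely because the gadgets \emph{share} points or coordinates so that a local choice in one gadget constrains the others (e.g.\ an element point is shared among all sets containing it, forcing a global exact-cover structure). Your sketch needs a coupling mechanism and a soundness argument that any clustering below the threshold respects both the local rigidity and the global combinatorial constraint; the orthogonality device gives you separation for the margin but simultaneously removes the coupling you need for hardness. Resolving that tension, rather than flagging it, is what a complete proof would have to do.
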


In the context of the $k$-means problem, the {\em same-cluster} queries within the SSAC framework are decision questions of the form: {\it Are points $x,y$ such that $x \neq y$ belong to the same optimal cluster?}
\footnote{In case where the optimal solution is not unique, the same-cluster query answers are consistent with respect to any fixed optimal clustering.}
Following is the main question explored by Ashitiani \etal\cite{akd16}:

\begin{quote}
{\it Under the $\gamma$-margin assumption, for a fixed $\gamma \in (1,\sqrt {3.4}]$, how many queries must be made in the SSAC framework for $k$-means to become tractable?}
\end{quote}

Ashtiani \etal\cite{akd16} addressed the above question and gave a query algorithm.
Their algorithm, in fact, works for a more general setting where the clusters are not necessarily optimal. 
In the more general setting, there is a {\em target} clustering $\bar{X} = \bar{X}_1, ...,\bar{X}_k$ of the given dataset $X \subseteq \R^d$ (not necessarily optimal clusters) such that these clusters satisfy the $\gamma$-margin property (i.e., for all $i, x \in \bar{X}_i$, and $y \notin \bar{X}_i, \gamma \cdot ||x - \mu(\bar{X}_i)|| < ||y - \mu(\bar{X}_i)||$) and the goal of the query algorithm is to output the clustering $\bar{X}$. Here is the main result of Ashtiani \etal

\begin{theorem}[Theorems 7 and 8 in \cite{akd16}]\label{thm:ash-main}
Let $\delta \in (0, 1)$ and $\gamma > 1$. Let $X \subseteq \R^d$ be any dataset containing $n$ points, $k$ be a positive integer, and $X_1,...,X_k$ be any target clustering of $X$ that satisfies the $\gamma$-margin property. Then there is a query algorithm $A$ that makes $O\left(k \log{n} + k^2\frac{\log{k} + \log{1/\delta}}{(\gamma - 1)^4} \right)$ same-cluster queries and with probability at least $(1 - \delta)$ outputs the clustering $X_1,...,X_k$. The running time of algorithm $A$ is $O\left(k n\log{n} + k^2\frac{\log{k} + \log{1/\delta}}{(\gamma - 1)^4} \right)$.
\end{theorem}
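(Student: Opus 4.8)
The plan is to peel off the target clusters one at a time, in (roughly) decreasing order of size, using three ingredients: a geometric consequence of the $\gamma$-margin property, a concentration bound for empirical centroids, and a binary search that locates the boundary of a cluster once its centre is approximately known. Throughout, $S$ denotes the set of points not yet assigned; initially $S=X$, and we maintain the invariant that every not-yet-output cluster $X_i$ is entirely contained in $S$.

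First I would record the geometric fact. Write $r_i:=\max_{x\in X_i}\|x-\mu(X_i)\|$. If $\hat c$ satisfies $\|\hat c-\mu(X_i)\|<\tfrac{\gamma-1}{2}\,r_i$, then by the triangle inequality and $\|y-\mu(X_i)\|>\gamma r_i$ for $y\notin X_i$, every $x\in X_i$ has $\|x-\hat c\|\le\tfrac{\gamma+1}{2}r_i<\|y-\hat c\|$; hence, when the points of $S$ are sorted by distance to $\hat c$, the cluster $X_i$ is exactly the length-$|X_i|$ prefix of that order. So, given one confirmed member $z\in X_i$, a binary search over this sorted list, testing ``is $q_j$ in the same cluster as $z$?'', finds the prefix in $O(\log n)$ same-cluster queries and extracts $X_i$ exactly; summed over $k$ phases this yields the $k\log n$ term.

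Second, to obtain such a $\hat c$ I would use the standard variance bound: the empirical centroid $\hat\mu$ of $m$ uniform samples from a finite set $Y$ has $\mathbf{E}[\|\hat\mu-\mu(Y)\|^2]=\tfrac1m\cdot\tfrac1{|Y|}\sum_{y\in Y}\|y-\mu(Y)\|^2\le \tfrac{r^2}{m}$ with $r=\max_{y}\|y-\mu(Y)\|$; taking $m=\Theta\big(\tfrac{\log k+\log(1/\delta)}{(\gamma-1)^4}\big)$ and a median-of-means/Markov argument gives $\|\hat\mu-\mu(Y)\|<\tfrac{\gamma-1}{2}r$ with probability $1-\delta/(2k)$, exactly the accuracy step one needs (the precise power of $1/(\gamma-1)$ is what the concentration analysis yields). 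Two points remain: collecting $m$ uniform samples of a cluster using same-cluster queries, and making sure that cluster is large enough for this to be cheap. For the first: repeatedly draw a uniform point from $S$ and query it against a fixed representative of the target cluster $X_i$; each draw is a uniform sample of $X_i$ with probability $|X_i|/|S|$, so one sample costs $|S|/|X_i|$ draws in expectation. For the second: pick the representative by drawing a uniform $p\in S$; the still-present clusters of size $<|S|/(2k)$ together contain fewer than $|S|/2$ points, so with probability $\ge\tfrac12$ the cluster $X_p$ has size $\ge|S|/(2k)$, in which case a Chernoff bound shows that a budget of $O(km)$ draws collects $m$ samples of $X_p$ with high probability; if the budget is exhausted we discard $p$ and retry, and $O(\log(k/\delta))$ retries suffice over the whole run.

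Assembling: run $k$ phases; in each, find a representative $p$ of a cluster $X_i$ with $|X_i|=\Omega(|S|/k)$, collect $m$ uniform samples of $X_i$ to form $\hat c$, binary-search the boundary of $X_i$ around $\hat c$, output $X_i$, and delete it from $S$; since only whole clusters are removed, the invariant persists and after $k$ phases $S=\emptyset$. Conditioning each phase on its centroid estimate being accurate and union-bounding over the $k$ phases (each failing with probability $O(\delta/k)$) gives overall success probability $\ge1-\delta$. The query count is $O(\log n)$ per phase for the boundary search plus $O(km)$ per phase for sample collection (the retry overhead is absorbed), i.e.\ $O\big(k\log n+k^2\tfrac{\log k+\log(1/\delta)}{(\gamma-1)^4}\big)$; the running time is dominated by sorting each $S$ by distance to $\hat c$, giving the extra $O(kn\log n)$. \emph{Main obstacle:} keeping the query count independent of $n$ except for $k\log n$. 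Estimating a centre by pairwise-querying a large random sample, or extracting a cluster that is a tiny fraction of $S$, is too expensive; the two key moves are (a) reducing centre-finding to collecting only $\poly(1/(\gamma-1))\cdot\log(k/\delta)$ cluster samples via the variance bound, and (b) always working with a cluster of mass $\Omega(|S|/k)$ so each such sample costs only $O(k)$ expected queries — and checking that the ``random $p$, retry on budget overflow'' device realises (b) without circularly presupposing knowledge of the clustering.
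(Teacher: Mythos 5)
Note first that the paper you are reading only \emph{states} this theorem, citing \cite{akd16} without reproducing the argument; so the comparison is against Ashtiani et al.'s proof rather than anything in the present text. Your reconstruction follows the same high-level plan as \cite{akd16}: use the $\gamma$-margin to show that, once a centre $\hat c$ is within $\tfrac{\gamma-1}{2}r_i$ of $\mu(X_i)$, the cluster $X_i$ is exactly a prefix of $S$ sorted by distance to $\hat c$ (so a binary search with $O(\log n)$ same-cluster queries extracts it); obtain such a $\hat c$ from $\mathrm{poly}((\gamma-1)^{-1})\log(k/\delta)$ uniform in-cluster samples taken against a random representative of a cluster of mass $\Omega(|S|/k)$; and iterate with a union bound over the $k$ phases. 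The only substantive divergence is cosmetic (you grow samples around a random $p\in S$ with a retry budget, whereas \cite{akd16} draw a batch, partition it by pairwise queries, and keep the largest part — both cost $O(k)$ queries per useful sample), plus your concentration step (median-of-means or a dimension-free vector Bernstein bound) gives $(\gamma-1)^{-2}$ rather than the stated $(\gamma-1)^{-4}$; since that is a \emph{stronger} dependence it is consistent with the theorem as quoted, and the looser exponent in \cite{akd16} appears to be an artifact of their particular concentration lemma rather than something your proof needs to reproduce.
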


The above result is a witness to the power of the SSAC framework.
We extend this line of work by examining the power of same-cluster queries in the {\em approximation algorithms} domain.
Our results do not assume any separation condition on the dataset (such as $\gamma$-margin as in \cite{akd16}) and they hold for {\em any} dataset.

Since the $k$-means problem is $\mathsf{NP}$-hard, an important line of research work has been to obtain approximation algorithms for the problem.
There are various efficient approximation algorithms for the $k$-means problem, the current best approximation guarantee being $6.357$ by Ahmadian \etal\cite{ANSW2016}.
A simple approximation algorithm that gives an $O(\log{k})$ approximation guarantee is the $k$-means++ seeding algorithm (also known as $D^2$-sampling algorithm) by Arthur and Vassilvitskii~\cite{ArthurV07}.
This algorithm is commonly used in solving the $k$-means problem in practice.
As far as {\em approximation schemes} or in other words $(1 + \veps)$-approximation algorithms (for arbitrary small $\veps < 1$) are concerned, the following is known:
It was shown by Awasthi \etal\cite{acks15} that there is some fixed constant $0 < c <1$ such that there cannot exist an efficient $(1 + c)$ factor approximation unless $\mathsf{P} = \mathsf{NP}$.
This result was improved by Lee \etal\cite{lsw17} where it was shown that it is $\mathsf{NP}$-hard to approximate the $k$-means problem within a factor of $1.0013$. 
However, when either $k$ or $d$ is a fixed constant, then there are Polynomial Time Approximation Schemes (PTAS) for the $k$-means problem.\footnote{This basically means an algorithm that runs in time polynomial in the input parameters but may run in time exponential in $1/\veps$.}
Addad \etal\cite{addad16} and Friggstad \etal\cite{friggstad16} gave a PTAS for the $k$-means problem in constant dimension.
For fixed constant $k$, various PTASs are known~\cite{kss,FeldmanMS07,jks,jky15}.
Following is the main question that we explore in this work:
\begin{quote}
{\it For arbitrary small $\veps > 0$, how many same-cluster queries must be made in an efficient $(1+\veps)$-approximation algorithm for $k$-means in the SACC framework? The running time should be polynomial in all input parameters such as $n, k, d$ and also in $1/\veps$.}
\end{quote}

Note that this is a natural extension of the main question explored by Ashtiani \etal\cite{akd16}. 
Moreover, we have removed the separation assumption on the data. 
We provide an algorithm that makes $\textrm{poly}(k/\veps)$ same-cluster queries and runs in time $O(nd \cdot \textrm{poly}(k/\veps))$. 
More specifically, here is the formal statement of our main result:

\begin{theorem}[Main result: query algorithm]\label{thm:main1}
Let $0 < \veps \leq 1/2$, $k$ be any positive integer, and $X \subseteq \R^d$. Then there is a query algorithm {\tt A} that runs in time $O(nd k^{9}/\veps^4)$ and with probability at least $0.99$ outputs a center set $C$ such that $\Phi(C, X) \leq (1 + \veps) \cdot \Delta_k(X)$. Moreover, the number of same-cluster queries used by {\tt A} is $O(k^{9}/\veps^4)$. 
Here $\Delta_k(X)$ denotes the optimal value of the $k$-means objective function.
\end{theorem}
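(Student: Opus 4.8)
The plan is to adapt the $D^2$-sampling (i.e.\ $k$-means++ seeding) paradigm that underlies a sampling-based PTAS for $k$-means with constant $k$, using same-cluster queries to replace the exhaustive ``subset guessing'' step of that PTAS. Throughout, write $X_1,\dots,X_k$ for the optimal clusters, $c_i=\mu(X_i)$, $\Delta_i=\Phi(\{c_i\},X_i)$, so $\Delta_k(X)=\sum_i\Delta_i$. First run ordinary $D^2$-sampling once to obtain, with probability $0.999$, a quantity $\hat\Delta$ with $\Delta_k(X)\le\hat\Delta\le O(\log k)\cdot\Delta_k(X)$; this fixes the scale. Call a cluster $X_i$ \emph{covered} w.r.t.\ a center set $C$ if $\Phi(C,X_i)\le(1+\veps)\Delta_i+\tfrac{\veps}{k\log k}\hat\Delta$; this condition is monotone (preserved as centers are added to $C$), is achievable for every cluster (a $(1+\veps)$-approximate $1$-means center already satisfies the multiplicative part), and if all clusters are covered then $\Phi(C,X)\le(1+2\veps)\Delta_k(X)$. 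The main algorithm keeps a center set $C$, initially empty, and runs for $k$ rounds; in round $t$ it $D^2$-samples a multiset $M_t$ of $N=\poly(k/\veps)$ points w.r.t.\ the current $C$ (uniform sampling when $C=\emptyset$), uses $\poly(k/\veps)$ same-cluster queries — against the points of $M_t$ and against $\poly(k/\veps)$ fresh uniform samples — to split $M_t$ into its optimal-cluster classes and to estimate the relevant cluster statistics (cluster sizes, and the normalized costs $\Phi(C,X_i)/\Phi(C,X)$, the latter read off from the class fractions of $M_t$ together with the exactly computable $\Phi(C,X)$), picks a class $Q$ that these estimates flag as coming from an uncovered cluster, forms a candidate center from an $O(1/\veps)$-size random sub-multiset of $Q$ \emph{together with} a number of copies of the current centers read off from the sampled values $D(\cdot,C)$ (see ingredient~(3)), adds that center to $C$, and continues.

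Three ingredients drive correctness. (1) \emph{One-means sampling, robust form:} if $m=O(1/\veps)$ points are drawn i.i.d.\ from a distribution on a finite $Y\subseteq\R^d$ that is the uniform distribution on $Y$ up to bounded multiplicative distortion, then their centroid $c$ satisfies $\Phi(\{c\},Y)\le(1+\veps)\Delta_1(Y)$ with constant probability, which an extra $\log k$ factor in $m$ boosts to $1-\tfrac1{100k}$. (2) \emph{$D^2$-sampling hits uncovered clusters:} as long as some cluster is uncovered, either $\Phi(C,X)\le2(1+2\veps)\Delta_k(X)$ (so $C$ is already a good approximation and the remaining rounds can only help) or $\sum_{i\text{ uncovered}}\Phi(C,X_i)\ge\tfrac12\Phi(C,X)$, since the cost charged to covered clusters totals at most $(1+2\veps)\Delta_k(X)$; in the latter case a single $D^2$-sample lands in \emph{some} uncovered cluster with probability $\ge\tfrac12$, so with $N=\poly(k/\veps)$ a Chernoff bound plus a union bound over the $k$ rounds give that in every round some uncovered cluster contributes $\ge m$ points to $M_t$, and since covered clusters get comparatively few $D^2$-hits the computed statistics let the algorithm identify such a class. (3) \emph{Decoding the non-uniform conditional law:} conditioned on landing in $X_j$, a $D^2$-sampled point has law $x\mapsto D(x,C)/\Phi(C,X_j)$, which is \emph{not} uniform; using $D(x,C)\le 2D(x,c_j)+2D(c_j,C)$ one decomposes it as a mixture of the ``optimal'' law $x\mapsto D(x,c_j)/\Delta_j$ on $X_j$ — to which a variant of~(1) applies — and a component essentially a point mass at the current center nearest $c_j$, which is exactly why the candidate centroid is formed from the sampled points of $Q$ together with the right number of copies of the current centers, that number being estimable from the sampled $D(\cdot,C)$ values. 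Combining (2) with (3) and (1): in each round, with probability $1-O(1/k)$, the targeted cluster $X_j$ either was already covered (and stays covered) or becomes covered, via a center $c$ with $\Phi(\{c\},X_j)\le(1+\veps)\Delta_j$.

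Hence after $k$ rounds all $k$ clusters are covered with probability $\ge(1-\tfrac1{100k})^{O(k)}\ge0.99$ (after tuning constants; for small $k$ one may instead repeat the whole procedure $O(1)$ times and return the cheapest output, each evaluation costing $O(ndk)$), so $\Phi(C,X)\le(1+2\veps)\Delta_k(X)$, and rescaling $\veps$ by a constant gives the stated $(1+\veps)$ bound. For the resources: the preliminary $D^2$-sampling costs $O(ndk)$; each of the $k$ main rounds recomputes the $D^2$-distribution in $O(nd)$ time, draws $N=\poly(k/\veps)$ samples, and spends $\poly(k/\veps)$ additional time and $\poly(k/\veps)$ same-cluster queries on partitioning, matching, statistics and the candidate construction — so the total is $O(nd\cdot\poly(k/\veps))$ time and $\poly(k/\veps)$ queries, matching $O(ndk^9/\veps^4)$ and $O(k^9/\veps^4)$ for an appropriate polynomial choice of $N$ and $m$.

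I expect ingredient~(3) to be the real obstacle: because the conditional $D^2$-sampling law inside a cluster is genuinely non-uniform, simply averaging the sampled points of $X_j$ need not approximate $\mu(X_j)$, and making the decomposition quantitative — choosing how many copies of which current centers to fold into the centroid, and proving via a suitably generalized $1$-means sampling lemma that the result is a $(1+\veps)$-approximate center for $X_j$ — is the delicate core of the argument. Two secondary points also require care: the additive $\tfrac{\veps}{k\log k}\hat\Delta$ slack in ``covered'' must be set so that the per-round $(1+\veps)$ errors \emph{add} (contributing a single extra $\veps\,\Delta_k(X)$ overall) rather than compounding to $(1+\veps)^k$ across the $k$ sequential rounds, and the variance of the cost/size estimates used to pick the class to target must be controlled well enough — using that $\Phi(C,X)$ itself is known exactly and that per-point $D^2$-contributions are bounded when normalized by $\Phi(C,X)$ — to distinguish covered from far-from-covered clusters.
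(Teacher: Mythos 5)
Your high-level plan — $D^2$-sample, use same-cluster queries to split the sample by optimal cluster, extract a roughly uniform sample from one uncovered cluster, and apply a $1$-means sampling lemma — is the same skeleton as the paper's algorithm. But the two ingredients you yourself flag as delicate are exactly where your version diverges from, and falls short of, the paper's, and both gaps are real.

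First, ingredient (3) is not just an obstacle to be smoothed over; it is where the paper does something concretely different. Your ``robust form'' of Inaba's lemma (ingredient (1)) is false as stated: if the sampling distribution on $X_j$ is only within a bounded multiplicative factor of uniform, the sample centroid is \emph{biased} — its expectation is $\sum_y p(y) y \neq \mu(X_j)$ — and the bias does not vanish as the sample grows, so no amount of sampling yields a $(1+\veps)$-approximate $1$-means center in general. The paper avoids this entirely: rather than decomposing the non-uniform conditional law as a mixture and ``folding in copies of current centers,'' it rejection-samples its way to a genuinely uniform sample. Concretely, among the $D^2$-sampled points landing in the chosen uncovered cluster $X_j$, it selects the reference point $s$ with the \emph{smallest} $\Phi(C,\{s\})$; with good probability $s$ lies in the subset $Y_j$ of points whose conditional sampling probability is at most $2/m_j$, while every point of $X_j$ has conditional probability at least $(\veps/64)/m_j$ (Lemma~\ref{lemma:cond-prob}). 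Then each fresh $D^2$-sample $x$ from $X_j$ is retained with probability $\frac{\veps}{128}\cdot\frac{\Phi(C,\{s\})}{\Phi(C,\{x\})}\le 1$, which makes the retention-weighted law exactly uniform on $X_j$. Inaba's lemma (Lemma~\ref{lemma:inaba}) is then applied to this \emph{truly} uniform sample. Your mixture-plus-point-mass idea has no analogous correctness argument: you never specify what ``the right number of copies'' is, you cannot estimate it without already knowing cluster statistics, and even if you could, adding mass at the nearest current center does not reproduce $\mu(X_j)$.

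Second, your ingredient (2) uses the wrong threshold. You split on whether the uncovered $D^2$-mass is at least $\tfrac12$, and in the complementary case conclude only $\Phi(C,X)\le 2(1+O(\veps))\Delta_k(X)$, which is a $2$-approximation, not a $(1+\veps)$-approximation; asserting ``remaining rounds can only help'' does not close this gap, since in that regime you have not shown the $D^2$-sample still finds the uncovered clusters reliably. The paper instead works under a $(k,\veps)$-irreducibility assumption and proves (contradiction with irreducibility otherwise) that the uncovered $D^2$-mass is always at least $\veps/4$, so every round makes progress. The assumption is then removed cleanly at the end: if $X$ is not $(k, \veps/((4+\veps/2)k))$-irreducible, a solution with fewer centers is already a $(1+\veps/4)$-approximation, and the inductive analysis applies to that smaller $i$-means instance. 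This substitution of $\veps\mapsto O(\veps/k)$ is also what turns the $O(ndk^5/\veps^4)$ time and $O(k^5/\veps^4)$ queries of Theorem~\ref{thm:main-query-irreducible} into the $O(ndk^9/\veps^4)$ and $O(k^9/\veps^4)$ bounds in the statement — a derivation your proposal leaves to ``an appropriate polynomial choice.''
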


Note that unlike Theorem~\ref{thm:ash-main}, our bound on the number of same-cluster queries is independent of the size of the dataset. We find this interesting and the next natural question we ask is whether this bound on the number of same-cluster queries is {\em tight} in some sense. In other words, does there exist a query algorithm in the SSAC setting that gives $(1 + \veps)$-approximation in time polynomial in $n, k, d$ and makes significantly fewer queries than the one given in the theorem above?
We answer this question in the negative by establishing a conditional lower bound on the number of same-cluster queries under the assumption that ETH (Exponential Time Hypothesis) \cite{IP01,IPZ01} holds.
The formal statement of our result is given below.

\begin{theorem}[Main result: query lower bound]\label{thm:main2}
If the Exponential Time Hypothesis (ETH) holds, then there exists a constant $c>1$ such that any $c$-approximation query algorithm for the $k$-means problem that runs in time $\textrm{poly}(n, d, k)$ makes at least $\frac{k}{\textrm{poly}\log k}$ queries.
\end{theorem}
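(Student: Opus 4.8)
The plan is to argue by contradiction: a query algorithm that is too frugal with same-cluster queries, combined with its polynomial running time, would yield a genuinely fast (exact-ish) algorithm for a problem ruled out by ETH. Concretely, I would start from the ETH-based hardness of approximating $k$-means used above — the Awasthi--Vassilvitskii--Charikar--Krishnaswamy / Lee--Schmidt--Wright lines \cite{acks15,lsw17} — but sharpened to a fine-grained statement: there is a constant $c>1$ and instances of $k$-means with $n$ points, dimension $d=\poly\log n$ (or $d$ small), and $k$ clusters such that no algorithm running in time $2^{o(k)}\cdot\poly(n,d)$ can distinguish optimum value $\le \opt$ from $> c\cdot \opt$. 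Such a statement follows from ETH by the standard reductions from $3$-SAT / vertex cover / set cover through $k$-means, where the number of clusters $k$ plays the role of the ETH ``size'' parameter; I would cite or adapt the reduction in \cite{acks15} (and the PCP-flavored amplification in \cite{lsw17}) to get the gap version with $k = \Theta(\text{formula size})$, so that a $2^{o(k)}$ algorithm contradicts ETH.

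The second ingredient is a brute-force simulation of the query algorithm. Suppose $A$ is a $c$-approximation query algorithm running in time $T=\poly(n,d,k)$ that makes at most $q$ same-cluster queries. I would run $A$ but, instead of consulting the oracle, branch on both possible answers (``same cluster'' / ``different cluster'') at each of the $q$ query points; this produces a computation tree with at most $2^q$ leaves, each a complete execution of $A$ with some fixed answer string, producing a candidate center set. The key observation is that along the branch whose answers are consistent with the true optimal clustering, $A$ outputs (with probability $\ge 0.99$, boosted by repetition) a $c$-approximate center set; and for \emph{any} branch, the cost $\Phi(C,X)$ of the produced center set $C$ is something we can evaluate deterministically in time $O(nkd)$. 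Hence taking the minimum-cost center set over all $2^q$ leaves is at least as good as the best branch, so it is a $c$-approximation to $k$-means, computed in total time $2^q\cdot T \cdot \poly(n,k,d)$. If $q = o(k/\poly\log k)$, then — using $d = \poly\log n$ and $k$ possibly up to $\Theta(n^{\eps})$ so that $\poly(n,d,k)$ itself is sub-$2^{\eps' k}$ — this bound is $2^{o(k)}\cdot\poly(n)$, contradicting the ETH-hardness from the previous paragraph. Choosing the parameter regime $k$ vs.\ $n$ carefully (e.g.\ $k = \Theta(n^{\eps})$ as in Theorem~10 of \cite{akd16}, or $k=\Theta(\log n)$) ensures $\poly(n,d,k)$ is absorbed into the $2^{o(k)}$ slack.

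A couple of technical points need care. First, the success probability: $A$ is randomized, so a single run of the consistent branch succeeds only with probability $0.99$; I would amplify by running $O(\log(1/\delta))$ independent copies and taking the best output over all copies and all $2^q$ branches — still only a $\poly$ factor overhead, and since we take the minimum cost over everything, boosting the consistent branch is all we need. Second, and this is where I expect the main obstacle: I must make sure the ETH-hardness instance can be realized with parameters for which ``$2^q\cdot\poly(n,d,k)$ with $q=o(k/\poly\log k)$'' genuinely falls below the ETH lower bound $2^{\Omega(k)}$ (or $2^{\Omega(k/\poly\log k)}$, depending on which reduction one uses — Set-Cover-based reductions lose polylog factors in $k$, which is exactly why the theorem statement has $\poly\log k$ in the denominator rather than a clean $k$). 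In other words, the $\poly\log k$ gap in the statement is not cosmetic: it is dictated by the best known fine-grained blow-up in ETH $\Rightarrow$ hardness-of-approximation-of-$k$-means reductions, and the proof must invoke a version of that reduction (Dinur--Steurer-style set cover hardness under ETH feeding into the $k$-means gap construction) whose parameter tradeoff matches. Verifying that such a reduction with the right quantitative shape is available in the literature, and that the gap constant $c$ it produces is the same $c$ we need, is the crux; everything else is the routine branching/simulation argument above.
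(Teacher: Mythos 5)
Your proposal follows essentially the same route as the paper: establish, via Dinur's PCP theorem and the gap reduction chain of Awasthi \etal ($3$-SAT $\to$ Vertex Cover $\to$ triangle-free Vertex Cover $\to$ $k$-means), that under ETH no $c$-approximation algorithm for $k$-means runs in time $\poly(n,d)\cdot 2^{o(k/\poly\log k)}$, and then brute-force simulate the query algorithm over all $2^q$ possible answer strings, evaluating $\Phi(C,X)$ at each leaf and returning the cheapest center set. Your closing observation --- that the $\poly\log k$ in the denominator is forced by the PCP blow-up and is not cosmetic --- is precisely the quantitative tradeoff the paper tracks through its chain of Hypotheses~1--5.
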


\paragraph{Faulty query setting}
The existence of a same-cluster oracle that answers such queries perfectly may be too strong an assumption.
A more reasonable assumption is the existence of a {\em faulty} oracle that can answer incorrectly but only with bounded probability.
Our query approximation algorithm can be extended to the setting where answers to the same-cluster queries are {\em faulty}.
More specifically, we can get wrong answers to queries independently but with probability at most some constant $q<1/2$. 
Also note that in our model the answer for a same-cluster query does not change with repetition. This means that one cannot ask the same query multiple times and amplify the probability of correctness.
We obtain $(1+\eps)$-approximation guarantee for the $k$-means clustering problem in this setting.
The main result is given as follows.

\begin{theorem} \label{thm:faulty} 
Let $0<\eps\leq 1/2$, $k$ be any positive integer, and $X\subseteq \R^d$. 
Consider a faulty SSAC setting where the response to every same-cluster query is incorrect with probability at most some constant $q < 1/2$.
In such a setting, there is a query algorithm {\tt $A^E$} that 
with probability at least $0.99$, outputs a center set $C$ such that $\Phi(C,X)\leq (1+\veps) \cdot \Delta_k(X)$. Moreover, the number of same-cluster queries used by {\tt $A^E$} is $O(k^{15}/\veps^8)$. 
\end{theorem}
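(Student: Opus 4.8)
The plan is to run the query algorithm {\tt A} of Theorem~\ref{thm:main1} essentially unchanged, but to replace every same-cluster query it issues by a small \emph{batch} of faulty queries whose majority outcome simulates a correct answer. Since {\tt A} makes only $O(k^9/\veps^4)$ queries, it suffices to make each simulated answer correct with probability $1 - 1/\poly(k/\veps)$: a union bound over all simulated answers then shows that, with probability at least $0.999$, the faulty run produces exactly the same transcript as some correct run of {\tt A}, after which Theorem~\ref{thm:main1} delivers both the $(1+\veps)$-approximation guarantee on the output center set and the $O(nd\cdot\poly(k/\veps))$ running time (the extra cost being the query count times $O(d)$).

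The workhorse is a noise-robust membership test. Suppose $S$ is a set of at least $t := \Theta\left(\log(k/\veps)/(1-2q)^2\right)$ \emph{distinct} points, all lying in a single optimal cluster $X_i$, and we wish to decide whether a point $x$ also lies in $X_i$. We query $x$ against $t$ uniformly chosen members of $S$ and return the majority answer. The true answer to all $t$ of these queries is ``yes'' if and only if $x \in X_i$, and the $t$ physical answers are wrong independently, each with probability at most $q < 1/2$ --- this is precisely where distinctness is needed, since repeating a single query always returns the same (possibly wrong) answer, so amplification is only possible by comparing $x$ against many different honestly sampled partners. Hence by Hoeffding's inequality the majority answer is wrong with probability at most $\exp\left(-t(1-2q)^2/2\right) \le 1/\poly(k/\veps)$.

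The only subtlety is that {\tt A} does not hand us pre-certified ``pure'' reference sets $S \subseteq X_i$; it constructs such sets itself, using same-cluster queries, as it $D^2$-samples. So we bootstrap. Whenever {\tt A} draws a $D^2$-sample $S$ and needs to split it into its constituent clusters (in the noiseless algorithm a single representative per group suffices), we instead query all $\binom{|S|}{2}$ internal pairs and recover the partition by \emph{certified growth}: maintain a collection of groups, each of size at least $t$ and internally consistent under majority vote; a new point is added to a group only if it passes the robust membership test against $t$ random members of that group; points placeable in no group are held in a pending pool, and a new group is spawned once the pool contains $t$ points that are pairwise ``yes'' under majority. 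A routine argument shows that, with probability $1 - 1/\poly(k/\veps)$, every optimal cluster with $|X_i \cap S| \ge t$ is recovered exactly, and --- by exactly the sampling and cost bookkeeping already present in the proof of Theorem~\ref{thm:main1} --- the clusters whose intersection with $S$ is smaller can be dropped without changing the final $k$-means cost by more than a $(1+\veps)$ factor. Conditioned on all of these (polynomially many) good events, the faulty algorithm reproduces a correct execution of {\tt A}.

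I expect this bootstrapping step to be the main obstacle: one has to show that certified growth never poisons a group with a point from the wrong cluster, never requests a reference set it has not yet certified, and still terminates having identified every cluster that matters for the cost. The increase in query complexity from $O(k^9/\veps^4)$ to $O(k^{15}/\veps^8)$ then accounts for itself: in each round we replace single-representative comparisons on a $D^2$-sample of size $m = \poly(k/\veps)$ by all $\binom{m}{2}$ internal pair queries, a $\poly(k/\veps)$-factor overhead, with a further constant factor $\Theta(1/(1-2q)^2)$ from majority voting and a logarithmic slack so the union bound over all simulated answers survives.
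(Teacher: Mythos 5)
Your high-level plan is close to the paper's: both run the same $D^2$-sampling machinery, both query \emph{all} pairs inside each sample rather than a single representative, and both decide ``covered vs.\ uncovered'' by majority voting against a bloc of candidate members (which is precisely the paper's \texttt{IsCovered} subroutine). Where you and the paper diverge is exactly at the step you flag as ``the main obstacle'': how to recover a reliable partition of a $D^2$-sample from the noisy pairwise answers. You propose an ad-hoc ``certified growth'' procedure; the paper instead observes that the all-pairs query graph is by construction an instance of a stochastic block model (edges within a cluster present with probability $\ge 1-q$, across clusters with probability $\le q$), and simply invokes the exact-recovery algorithm of Ailon et al., which recovers all blocks of size $\Omega(\sqrt{|S|})$ with high probability for any fixed $q<1/2$. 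The paper then sizes $N$ and $L$ so that some uncovered cluster $X_j$ contributes $\Omega(\sqrt{|S|})$ points to $S$ (via Corollary~\ref{cor:1} and a Chernoff bound), which is exactly the condition that recovery algorithm needs.

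The gap in your proposal is that ``certified growth'' is not well defined at its base case, and as stated it would not work. Your robust membership test is sound \emph{once a pure reference group of $t=\Theta(\log(k/\veps)/(1-2q)^2)$ distinct points exists}: querying a new $x$ against $t$ distinct members of that group gives $t$ independent answers, and Hoeffding amplifies. But the rule ``spawn a new group once the pending pool contains $t$ points that are pairwise `yes' under majority'' is circular: each pair can only be queried once (the model explicitly forbids amplification by repetition, as you yourself note), so ``majority'' is unavailable for a bare pair, and a single noisy bit per pair has error $q$, not $1/\mathrm{poly}(k/\veps)$. Concretely, for $t$ genuinely same-cluster points the probability that all $\binom{t}{2}$ internal edges read ``yes'' is $(1-q)^{\binom{t}{2}} \to 0$, so a clique criterion rejects true seeds; and a near-clique criterion would need a spectral/combinatorial dense-subgraph argument that you do not supply. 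That argument is precisely what the SBM exact-recovery theorem packages up, so the paper's reduction to Ailon et al.\ is not a cosmetic shortcut but the missing ingredient. Everything else in your sketch — the per-step union-bound budget over $\mathrm{poly}(k/\veps)$ simulated answers, the $\binom{|S|}{2}$-factor query blowup, and the $\Theta(1/(1-2q)^2)\cdot\log$ overhead from majority voting — matches the paper's accounting, though the exponent increase from $k^9/\veps^4$ to $k^{15}/\veps^8$ in the final bound comes chiefly from rerunning the analysis with error parameter $O(\veps/k)$ to drop the irreducibility assumption, not solely from the all-pairs overhead.
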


The previous theorems summarise the main results of this work which basically explores the power of same-cluster queries in designing fast $(1+\veps)$-approximation algorithms for the $k$-means problem.
We will give the proofs of the above theorems in Sections~\ref{sec:main1}, \ref{sec:main2}, and \ref{sec:faulty}.
There are some other simple and useful contexts, where the SSAC framework gives extremely nice results.
One such context is the popular $k$-means++ seeding algorithm.
This is an extremely simple sampling based algorithm for the $k$-means problem that samples $k$ centers in a sequence of $k$ iterations.
We show that within the SSAC framework, a small modification of this sampling algorithm converts it to one that gives constant factor approximation instead of $O(\log{k})$-approximation~\cite{ArthurV07} that is known.
This is another witness to the power of same-cluster queries.
We begin the technical part of this work by discussing this result in Section~\ref{sec:kmpp}.
Some of the basic techniques involved in proving our main results will be introduced while discussing this simpler context.

\paragraph{Other related work}
Clustering problems have been studied in different semi-supervised settings. Basu \etal\cite{BBM2004} explored {\em must-link} and {\em cannot-link} constraints in their semi-supervised clustering formulation. 
In their framework, must-link and cannot-link constraints were provided explicitly as part of the input along with the cost of violating these constraints. 
They gave an active learning formulation for clustering in which an oracle answers whether two query points belong to the same cluster or not, and gave a clustering algorithm using these queries. 
However, they work with a different objective function and there is no discussion on theoretical bounds on the number of queries.
In contrast, in our work we consider the $k$-means objective function and provide bounds on approximation guarantee, required number of adaptive queries, and the running time.
Balcan and Blum \cite{BB2008} proposed an interactive framework for clustering with {\em split/merge} queries. 
Given a clustering $C=\{C_1,\ldots\}$, a user provides feedback by specifying that some cluster $C_l$ should be split, or clusters $C_i$ and $C_j$ should be merged. 
Awasthi \etal\cite{ABV2014} studied a local interactive algorithm for clustering with split and merge feedbacks.
Voevodski \etal\cite{VBRTX2014} considered \textit{one versus all} queries where query answer for a point $s \in X$ returns distances between $s$ to all points in $X$. 
For a $k$-median instance satisfying $(c,\eps)$-approximation stability property \cite{BalcanBG09}, the authors found a clustering close to true clustering using only $O(k)$ one versus all queries.
Vikram and Dasgupta \cite{VD16} designed an interactive bayesian hierarchical clustering algorithm. 
Given dataset $X$, the algorithm starts with a candidate hierarchy $T$, and an initially empty set $C$ of constraints. 
The algorithm queries user with a subtree $T|_S$ of hierarchy $T$ restricted to constant sized set $S \subset X$ of leaves. 
User either accepts $T|_S$ or provides a counterexample triplet $(\{a,b\},c)$ which the algorithm adds to its set of constraints $C$, and updates $T$. 
They consider both random and adaptive ways to select $S$ to query $T|_S$.

\paragraph{Our Techniques}
We now give a brief outline of the new ideas needed for our results. Many algorithms for the $k$-means problem proceed by iteratively finding approximations to the optimal centers. 
One such popular algorithm is the $k$-means++ seeding algorithm~\cite{ArthurV07}. In this algorithm, one builds a set of potential centers iteratively. 
We start with a set $C$ initialized to the empty set. 
At each step, we sample a point with probability proportional to the square of the distance from $C$, and add it to $C$. 
Arthur and Vassilvitskii~\cite{ArthurV07} showed that if we continue this process till $|C|$ reaches 
$k$, then the corresponding $k$-means solution has expected cost $O(\log{k})$ times the optimal $k$-means cost.
Aggarwal \etal\cite{AggarwalDK09} showed that if we continue this process till $|C|$ reaches $\beta k$, for some constant $\beta > 1$, then the corresponding $k$-means solution (where we actually open all the centers in $C$) has cost which is within constant factor of the optimal $k$-means cost with high probability. 
Ideally, one would like to stop when size of $C$ reaches $k$ and obtain a constant factor approximation guarantee.
We know from previous works~\cite{ArthurV07,BR13,BJN16} that this is not possible in the classical (unsupervised) setting.
In this work, we show that one can get such a result in the SSAC framework. 
A high-level way of analysing the $k$-means++ seeding algorithm is as follows.
We first observe that if we randomly sample a point from a cluster, then the expected  cost of assigning all points of this cluster to the sampled point is within a constant factor of the cost of assigning all the points to the mean of this cluster. Therefore, it suffices to select a point chosen uniformly at random from each of the clusters. Suppose the set $C$ contains such samples for the 
first $i$ clusters (of an optimal solution). If the other clusters are far from these $i$ clusters, then it is likely that the next point added to $C$ belongs to a new cluster (and perhaps is close to a uniform sample). 
However to make this more probable, one needs to add several points to $C$. 
Further, the number of samples that needs to be added to $C$ starts getting worse as the value of $i$ increases. Therefore, the algorithm needs to build $C$ till its size becomes $O(k \log k)$. In the SSAC framework, we can {\em tell} if the next point added in $C$ belongs to a new cluster or not. Therefore, we can always ensure that $|C|$ does not exceed $k$. To make this idea work, we need to extend the induction argument of Arthur and Vassilvitskii~\cite{ArthurV07} -- details are given in Section~\ref{sec:kmpp}.

We now explain the ideas for the PTAS for $k$-means. We consider the special case of $k=2$. 
Let $X_1$ and $X_2$ denote the optimal clusters with $X_1$ being the larger cluster. 
Inaba \etal\cite{inaba} showed that if we randomly sample about $O(1/\veps)$ points from a cluster, and let $\mu'$ denote the mean of this subset of sampled points, then the cost of assigning 
all points in the cluster to $\mu'$ is within $(1+\veps)$ of the cost of assigning all these points to their actual mean (whp).
Therefore, it is enough to get uniform samples of size about $O(1/\veps)$ from each of the clusters. 
Jaiswal \etal\cite{jks} had the following approach for obtaining a $(1+ \veps)$-approximation algorithm
for $k$-means (with running time being $nd \cdot f(k,\veps)$, where $f$ is an exponential function of $k/\veps$) -- suppose we sample   about $O(1/\veps^2)$ points from the input, call this sample $S$. 
It is likely to contain at least $O(1/\veps)$  from $X_1$, but we do not know which points in $S$ are from $X_1$. 
Jaiswal \etal addressed this problem by cycling over all subsets of $S$. 
In the SSAC model, we can directly partition $S$ into $S \cap X_1$ and $ S \cap X_2$ using $|S|$ same-cluster queries. 
Having obtained such a sample $S$, we can get a close approximation to the mean of $X_1$. 
So assume for sake of simplicity that we know $\mu_1$, the mean of $X_1$. 
Now we are faced with the problem of obtaining a uniform sample from $X_2$.  
The next idea of Jaiswal \etal is to sample points with probability proportional to square of distance from $\mu_1$.
This is known as $D^2$-sampling.
Suppose we again sample about $O(1/\veps^2)$ such points, call this sample $S'$. 
Assuming that the two clusters are far enough (otherwise the problem only gets easier), they show that $S'$ will contain about $O(1/\veps^2)$ points from $X_2$ (with good probability). 
Again, in the SSAC model, we can find this subset by $|S'|$ queries -- call this set $S''$. 
However, the problem is that $S''$ may not represent a uniform sample from $X_2$. 
For any point $e \in X_2$, let $p_e$ denote the conditional probability of sampling $e$ given that a point from $X_2$ is sampled when sampled using $D^2$-sampling.
They showed $p_e$ is at least  $\frac{\veps}{m}$, where $m$ denotes the size of $X_2$. 
In order for the sampling lemma of Inaba \etal\cite{inaba} to work, we cannot work with approximately uniform sampling. 
The final trick of Jaiswal \etal was to show that one can in fact get a uniform sample of size about $O(\veps |S''|) = O(1/\veps)$ from $S''$. 
The idea is as follows -- for every element $e \in S''$, we {\em retain} it with probability $\frac{\veps}{p_e m}$ (which is at most $1$), otherwise we remove it from $S''$. It is not difficult to see that this gives a uniform sample from $X_2$. 
The issue is that we do not know $m$. 
Jaiswal \etal again cycle over all subsets of $S'$ -- we know that there is a (large enough) subset of $S'$ which will behave like a uniform sample from $X_2$. 
In the SSAC framework, we first identify the subset of $S'$ which belongs to $X_2$, call this $S''$ (as above). 
Now we prune some points from $S''$ such that the remaining points behave like a uniform sample. This step is non-trivial because as indicated above, we do not know the value $m$. 
Instead, we first show that $p_e$ lies between $\frac{\veps}{m}$ and $\frac{2}{m}$ for most of the points of $X_2$. Therefore, $S''$ is likely  to contain such a {\em nice} point, call it $v$. 
Now, for every point $e \in S''$, we retain it with probability $\frac{\veps p_e}{2p_v}$ (which we know is at most $1$). 
This gives a uniform sample of sufficiently large size from $X_2$. 
For $k$ larger than $2$, we generalize the above ideas using a non-trivial induction argument.

\section{$k$-means++ within SSAC framework}\label{sec:kmpp}
The $k$-means++ seeding algorithm, also known as the $D^2$-sampling algorithm, is a simple sampling procedure that samples $k$ centers in $k$ iterations. The description of this algorithm is given below.

\begin{table}[h]
\centering
\begin{tabular}{| l || l |}
\hline
{\tt $k$-means++($X$,$k$)} & {\tt Query-$k$-means++($X, k$)} \\
\hspace{0.1in} - Randomly sample a point $x \in X$ & \hspace{0.1in} - Randomly sample a point $x \in X$\\
\hspace{0.1in} - $C \leftarrow \{x\}$ & \hspace{0.1in} - $C \leftarrow \{x\}$ \\
\hspace{0.1in} - for $i$ = $2$ to $k$ & \hspace{0.1in} - for $i$ = $2$ to $k$ \\
\hspace{0.3in} - Sample $x \in X$ using distribution $D$ & \hspace{0.3in} - for $j$ = $1$ to $\lceil \log{k} \rceil$\\
\hspace{0.4in} defined as $D(x) = \frac{\Phi(C, \{x\})}{\Phi(C, X)}$ & \hspace{0.5in} - Sample $x \in X$ using distribution $D$\\
\hspace{0.3in} - $C \leftarrow C \cup \{x\}$ & \hspace{0.6in}  defined as $D(x) = \frac{\Phi(C, \{x\})}{\Phi(C, X)}$\\
\hspace{0.1in} - return($C$) & \hspace{0.5in} - if({\tt NewCluster($C, x$)})\{$C \leftarrow C\cup \{x\}$; break\}\\
 & \hspace{0.1in} - return($C$)\\\cline{2-2}
 & {\tt NewCluster($C, x$)}\\
 & \hspace{0.1in} - If($\exists c \in C$ s.t. {\tt SameCluster($c, x$)}) return(false)\\
 & \hspace{0.1in} - else return(true)\\
\hline
\end{tabular}
\caption{$k$-means++ seeding algorithm (left) and its adaptation in the SSAC setting (right)}
\label{table:1}
\end{table}

The algorithm picks the first center randomly from the set $X$ of points and after having picked the first $(i-1)$ centers denoted by $C_{i-1}$, it picks a point $x \in X$ to be the $i^{\text{th}}$ center with probability proportional to $\min_{c \in C_{i-1}} ||x - c||^2$. The running time of $k$-means++ seeding algorithm is clearly $O(nkd)$. Arthur and Vassilvitskii~\cite{ArthurV07} showed that this simple sampling procedure gives an $O(\log{k})$ approximation in expectation for any dataset. Within the SSAC framework where the algorithm is allowed to make same-cluster queries, we can make a tiny addition to the $k$-means++ seeding algorithm to obtain a query algorithm that gives constant factor approximation guarantee and makes only $O(k^2 \log{k})$ same-cluster queries. The description of the query algorithm is given in Table~\ref{table:1} (see right). 
In iteration $i>1$, instead of simply accepting the sampled point $x$ as the $i^{\text{th}}$ center (as done in $k$-means++ seeding algorithm), the sampled point $x$ is accepted only if it belongs to a cluster other than those to which centers in $C_{i-1}$ belong (if this does not happen, the sampling is repeated for at most $\lceil \log{k} \rceil$ times). Here is the main result that we show for the {\tt query-$k$-means++} algorithm.

\begin{theorem}\label{thm:kmpp}
Let $X \subseteq \R^d$ be any dataset containing $n$ points and $k>1$ be a positive integer.
Let $C$ denote the output of the algorithm {\tt Query-$k$-means++($X, k$)}. Then
\[
\E[\Phi(C, X)] \leq 24 \cdot \Delta_k(X),
\]
where $\Delta_k(X)$ denotes the optimal cost for this $k$-means instance.
Furthermore, the algorithm makes $O(k^2 \log{k})$ same-cluster queries and the running time of the algorithm is $O(nkd + k \log{k} \log{n} + k^2 \log{k})$.
\end{theorem}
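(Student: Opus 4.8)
The plan is to adapt the inductive potential argument of Arthur and Vassilvitskii~\cite{ArthurV07} so that it exploits the $\lceil\log k\rceil$ repetitions performed inside each outer iteration. Fix the optimal clusters $X_1,\dots,X_k$ with centroids $\mu(X_j)$ and set $\Delta^{(j)}=\Phi(\{\mu(X_j)\},X_j)$, so that $\Delta_k(X)=\sum_{j}\Delta^{(j)}$. First I would recall the two facts underlying the $k$-means++ analysis: (i) a uniformly random $x\in X_j$ has $\E[\Phi(\{x\},X_j)]=2\Delta^{(j)}$; and (ii) for \emph{any} current center set $C$, a point $x$ drawn from $X_j$ with probability proportional to $D(x,C)$ satisfies $\E[\Phi(C\cup\{x\},X_j)]\le 8\Delta^{(j)}$. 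Call $X_j$ \emph{covered} by $C$ if $C$ meets $X_j$. By construction {\tt Query-$k$-means++} keeps $|C|\le k$, only ever adds points in previously uncovered clusters, and --- the key point for reusing (ii) --- conditioned on an outer iteration succeeding, the point it accepts is distributed exactly as a single $D^2$-sample conditioned on landing in the union of the currently uncovered clusters, since the $\le\lceil\log k\rceil$ attempts are i.i.d.\ and we keep the first one that hits a fresh cluster.

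The core is an inductive claim in the spirit of the main lemma of~\cite{ArthurV07}: for a state $C$ with $u\ge1$ uncovered clusters whose union is $U$ and with $r\ge u$ outer iterations remaining, $\E[\Phi(C_{\mathrm{final}},X)\mid C]$ is at most an absolute constant times $\Phi(C,X\setminus U)+8\sum_{X_j\subseteq U}\Delta^{(j)}$, plus a leftover term for clusters that are never hit. The new ingredient is that an outer iteration is \emph{wasted} only with probability $q^{\lceil\log k\rceil}$, where $q=\Phi(C,X\setminus U)/\Phi(C,X)$ is the fraction of the current cost on covered clusters, rather than with probability $q$ as in~\cite{ArthurV07}; running the induction through the usual case split on the first successful attempt --- it either hits an uncovered cluster $X_j$, where fact (ii) lets us recurse with $u-1$ uncovered and $r-1$ iterations, or all $\lceil\log k\rceil$ attempts miss and we recurse with $u$ and $r-1$ --- replaces the harmonic sum of~\cite{ArthurV07} by one whose $s$-th increment is damped by the extra power $\lceil\log k\rceil$. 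The reason this collapses to a constant is a dichotomy: if $\Phi(C,X)$ is already within a constant multiple of $\Delta_k(X)$ we are done since $\Phi(C,X)$ never increases; otherwise the uncovered clusters carry a constant fraction of the cost, so $q$ is bounded away from $1$, hence $q^{\lceil\log k\rceil}\le 1/k$ and the iteration covers a fresh cluster except with probability $\le 1/k$, and a union bound over the $\le k$ iterations shows that with constant probability every cluster gets covered, in which case (i)--(ii) bound the cost by $2\Delta^{(j_0)}+8\sum_{j\ne j_0}\Delta^{(j)}\le 8\Delta_k(X)$. \textbf{The main obstacle} I anticipate is turning this dichotomy into a clean bound in expectation: since each outer iteration adds at most one center, even a single wasted iteration already strands a cluster, so one cannot merely call the bad event rare; one must show that whenever a wasted iteration is not rare the stranded mass is tiny (a $\le 1/\lceil\log k\rceil$ fraction of an already-controlled covered cost) and bake this into the potential's leftover term so that no $\log k$ factor reappears --- this careful bookkeeping is what forces the constant $24$.

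The query and time bounds are routine. Each call to {\tt NewCluster}$(C,x)$ uses at most $|C|\le k$ same-cluster queries, it is invoked at most $\lceil\log k\rceil$ times per outer iteration, and there are $k-1$ outer iterations, for $O(k^2\log k)$ queries. For the running time, maintain $\Phi(C,\{x\})=\min_{c\in C}D(x,c)$ for every $x\in X$, refreshing all $n$ values in $O(nd)$ time whenever a center is added, for a total of $O(nkd)$; rebuild a prefix-sum array over these weights after each addition in $O(n)$ time (total $O(nk)$, absorbed into $O(nkd)$); each of the at most $k\lceil\log k\rceil$ $D^2$-samples is then a binary search costing $O(\log n)$, contributing $O(k\log k\log n)$; and the same-cluster queries contribute $O(k^2\log k)$. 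Summing yields the claimed $O(nkd+k\log k\log n+k^2\log k)$.
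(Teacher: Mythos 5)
Your plan matches the paper's proof: both run the Arthur--Vassilvitskii induction with the two added observations that (a) when the covered fraction $p=\Phi(C,X_c)/\Phi(C,X)<1/2$ the $\lceil\log k\rceil$ inner retries drive the per-iteration miss probability to $p^{\lceil\log k\rceil}<1/k$, and (b) when $p\ge 1/2$ one has $\Phi(C,X)\le 2\Phi(C,X_c)$ so the current cost is already controlled and the induction closes without recursing; your query count and running-time accounting are also the same. One correction to your anticipated bookkeeping, so you do not chase the wrong invariant: the paper never shows the stranded mass is a $1/\lceil\log k\rceil$ fraction of the covered cost---its Lemma~\ref{lemma:av-induction} instead carries a leftover term $\tfrac{u-t}{u}\Phi(C,X_u)$ and absorbs the $\le 1/k$ miss probability (times the per-step residual $\Phi(C,X_u)/u\le \Phi(C,X_c)/u$) into a multiplier that grows linearly as $(2+t/k)$ rather than harmonically, which is exactly what replaces the $O(\log k)$ factor by the constant $3$ and, combined with Lemma~\ref{lemma:av1}, gives $24$.
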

The bound on the number of same-cluster queries is trivial from the algorithm description. 
For the running time, it takes $O(nd)$ time to update the distribution $D$ which is updated $k$ times. 
This accounts for the $O(nkd)$ term in the running time. 
Sampling an element from a distribution $D$ takes $O(\log{n})$ time (if we maintain the cumulative distribution etc.) and at most $O(k \log{k})$ points are sampled. 
Moreover, determining whether a sampled point belongs to an uncovered cluster takes $O(k)$ time. 
So, the overall running time of the algorithm is $O(nkd + k \log{k} \log{n} + k^2 \log{k})$.
We prove the approximation guarantee in the remaining discussion. We will use the following terminology. Let the optimal $k$ clusters for dataset $X$ are given as $X_1, ..., X_k$. For any $i$, $\Delta_i(X)$ denotes the optimal cost of the $i$-means problem on dataset $X$. Given this, note that $\Delta_k(X) = \sum_{i=1}^{k} \Delta_1(X_i)$. For any non-empty center set $C$, we say that a point $x$ is sampled from dataset $X$ using $D^2$-sampling w.r.t. center set $C$ if the sampling probability of $x \in X$ is given by $D(x) = \frac{\Phi(C, \{x\})}{\Phi(C, X)}$.

The proof of Theorem~\ref{thm:kmpp} will mostly follow $O(\log{k})$-approximation guarantee proof of $k$-means++ seeding by Arthur and Vassilvitskii~\cite{ArthurV07}. The next two lemmas from \cite{ArthurV07} are crucially used in the proof of approximation guarantee.

\begin{lemma}\label{lemma:av1}
Let $A$ be any optimal cluster and let $c$ denote a point sampled uniformly at random from $A$. Then
$\E[\Phi(\{c\}, A)] \leq 2 \cdot \Delta_1(A)$.
\end{lemma}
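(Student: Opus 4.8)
The plan is to reduce everything to the classical ``parallel axis'' (bias--variance) identity for squared Euclidean distances. First I would write out the expectation explicitly: since $c$ is uniform in $A$,
\[
\E[\Phi(\{c\}, A)] = \frac{1}{|A|} \sum_{c \in A} \Phi(\{c\}, A) = \frac{1}{|A|} \sum_{c \in A} \sum_{x \in A} \|x - c\|^2.
\]
So the whole statement is about double-summing squared distances over $A \times A$.

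Next I would invoke (or quickly rederive) the standard identity that for any finite point set $A \subseteq \R^d$ and any $z \in \R^d$,
\[
\sum_{x \in A} \|x - z\|^2 = \sum_{x \in A} \|x - \mu(A)\|^2 + |A| \cdot \|z - \mu(A)\|^2 = \Delta_1(A) + |A| \cdot \|z - \mu(A)\|^2,
\]
which follows by expanding $\|x-z\|^2 = \|x - \mu(A)\|^2 + 2\langle x - \mu(A), \mu(A) - z\rangle + \|\mu(A) - z\|^2$ and noting that $\sum_{x \in A}(x - \mu(A)) = 0$ kills the cross term. Applying this with $z = c$ and summing over $c \in A$ gives
\[
\sum_{c \in A}\sum_{x \in A} \|x - c\|^2 = |A| \cdot \Delta_1(A) + |A| \sum_{c \in A} \|c - \mu(A)\|^2 = |A|\cdot \Delta_1(A) + |A|\cdot \Delta_1(A) = 2|A|\cdot\Delta_1(A),
\]
where the last step again uses that $\Delta_1(A) = \sum_{c \in A}\|c - \mu(A)\|^2$ since the optimal $1$-means center of $A$ is its centroid $\mu(A)$. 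Dividing by $|A|$ yields $\E[\Phi(\{c\}, A)] = 2 \cdot \Delta_1(A)$, which is in particular at most $2\cdot\Delta_1(A)$.

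There is essentially no serious obstacle here; the only thing to be careful about is correctly using the fact (already recorded in the paper) that the optimal $1$-means solution for any point set is attained at its centroid, so that both occurrences of $\sum_{c \in A}\|c - \mu(A)\|^2$ and the defining sum of $\Delta_1(A)$ coincide. One could alternatively phrase the argument probabilistically: if $Z$ is a uniform random point of $A$, then $\E\|x - Z\|^2 = \|x - \mu(A)\|^2 + \E\|Z - \mu(A)\|^2$ for each fixed $x$, and summing over $x \in A$ gives the same bound; this is the form that matches how the lemma is used later in the $D^2$-sampling analysis.
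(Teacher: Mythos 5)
Your proof is correct, and it is essentially the standard argument from Arthur and Vassilvitskii, which this paper cites (as Lemma~\ref{lemma:av1}) without reproving: expand $\E[\Phi(\{c\},A)]$ as a double sum over $A\times A$, apply the bias--variance identity (the paper's Lemma~\ref{lemma:fact}) with $z=c$, sum over $c\in A$, and use $\Delta_1(A)=\sum_{c\in A}\|c-\mu(A)\|^2$. You correctly observe that the computation in fact yields equality $\E[\Phi(\{c\},A)] = 2\Delta_1(A)$, which is slightly stronger than the stated inequality.
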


\begin{lemma}\label{lemma:av2}
Let $C$ be any arbitrary set of centers and let $A$ be any optimal cluster. Let $c$ be a point sampled with $D^2$-sampling with respect to the center set $C$. Then $\E[\Phi(C \cup \{c\}, A) | c \in A] \leq 8 \cdot \Delta_1(A)$.
\end{lemma}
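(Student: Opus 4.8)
The plan is to carry over the argument of Arthur and Vassilvitskii~\cite{ArthurV07}, translated into the notation of this paper. Write $\Phi(C,\{a\}) = \min_{c'\in C}\|a-c'\|^2$ for the squared distance of $a$ to its nearest center in $C$, and set $m = |A|$. Conditioned on the event $c\in A$, the $D^2$-sampling distribution gives the sampled point the value $a_0\in A$ with probability $\Phi(C,\{a_0\})/\Phi(C,A)$; and when the sampled point equals $a_0$, each $a\in A$ contributes $\min(\Phi(C,\{a\}),\|a-a_0\|^2)$ to $\Phi(C\cup\{a_0\},A)$. Hence
\begin{equation}\label{eq:av2-start}
\E[\Phi(C\cup\{c\},A)\mid c\in A] = \frac{1}{\Phi(C,A)}\sum_{a_0\in A}\Phi(C,\{a_0\})\sum_{a\in A}\min\!\big(\Phi(C,\{a\}),\,\|a-a_0\|^2\big).
\end{equation}

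The first step is to replace the ``hard'' weights $\Phi(C,\{a_0\})$ by an average over $A$ using the triangle inequality. For any $a,a_0\in A$, choosing the center of $C$ closest to $a$ gives $\min_{c'\in C}\|a_0-c'\| \le \|a_0-a\| + \min_{c'\in C}\|a-c'\|$, and squaring with $(u+v)^2\le 2u^2+2v^2$ yields $\Phi(C,\{a_0\}) \le 2\|a-a_0\|^2 + 2\,\Phi(C,\{a\})$. Averaging this over $a\in A$,
\begin{equation}\label{eq:av2-avg}
\Phi(C,\{a_0\}) \;\le\; \frac{2}{m}\sum_{a\in A}\|a-a_0\|^2 \;+\; \frac{2}{m}\,\Phi(C,A) \qquad\text{for every } a_0\in A.
\end{equation}

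Substituting \eqref{eq:av2-avg} into \eqref{eq:av2-start} splits the expectation into two parts. In the part produced by the term $\frac{2}{m}\Phi(C,A)$, the factor $\Phi(C,A)$ cancels the $1/\Phi(C,A)$ out front, and bounding $\min(\Phi(C,\{a\}),\|a-a_0\|^2)\le\|a-a_0\|^2$ leaves at most $\frac{2}{m}\sum_{a_0\in A}\sum_{a\in A}\|a-a_0\|^2$. In the part produced by $\frac{2}{m}\sum_{a}\|a-a_0\|^2$, we instead bound $\min(\Phi(C,\{a\}),\|a-a_0\|^2)\le\Phi(C,\{a\})$, so that $\sum_{a\in A}\min(\cdots)\le\Phi(C,A)$ again cancels the $1/\Phi(C,A)$, and this part too is at most $\frac{2}{m}\sum_{a_0\in A}\sum_{a\in A}\|a-a_0\|^2$. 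Thus \eqref{eq:av2-start} is at most $\frac{4}{m}\sum_{a_0\in A}\sum_{a\in A}\|a-a_0\|^2$. Finally, expanding every term $\|a-a_0\|^2$ around $\mu(A)$ and using $\sum_{a\in A}(a-\mu(A))=0$ together with $\Delta_1(A)=\sum_{a\in A}\|a-\mu(A)\|^2$ gives the identity $\sum_{a_0,a\in A}\|a-a_0\|^2 = 2m\cdot\Delta_1(A)$, whence $\E[\Phi(C\cup\{c\},A)\mid c\in A] \le \frac{4}{m}\cdot 2m\cdot\Delta_1(A) = 8\cdot\Delta_1(A)$.

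The only genuinely delicate point is the choice, in each of the two parts of the split, of how to weaken the $\min$: bounding it by $\|a-a_0\|^2$ in the first part and by $\Phi(C,\{a\})$ in the second is exactly what makes both parts collapse to the same double sum $\sum_{a_0,a\in A}\|a-a_0\|^2$, which the identity then evaluates. Everything else is routine manipulation of sums.
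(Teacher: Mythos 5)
Your proof is correct and is precisely the argument of Lemma 3.2 in Arthur and Vassilvitskii~\cite{ArthurV07}, which is the source the paper cites for this lemma without reproducing the proof. The decomposition of the conditional expectation, the power-mean/triangle-inequality averaging of $\Phi(C,\{a_0\})$ over $A$, the two complementary weakenings of the $\min$, and the identity $\sum_{a_0,a\in A}\|a-a_0\|^2=2|A|\cdot\Delta_1(A)$ all match the standard proof.
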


The first lemma says that a randomly sampled center from $X$ provides a good approximation (in expectation) to the cost of the cluster to which it belongs. The second lemma says that for any center set $C$, given that a center $c$ that is $D^2$-sampled from $X$ w.r.t. $C$ belong to an optimal cluster $A$, the conditional expectation of the cost of the cluster $A$ with respect to center set $C \cup \{c\}$ is at most $8$ times the optimal cost of cluster $A$. Using the above two lemmas, let us try to qualitatively see why the $k$-means++ seeding algorithm behave well. The first center belongs to some optimal cluster $A$ and from Lemma~\ref{lemma:av1} we know that this center is good for this cluster. At the time the $i^{\text{th}}$ center is $D^2$-sampled, there may be some optimal clusters which are still costly with respect to the center set $C_{i-1}$. But then we can argue that it is likely that the $i^{\text{th}}$ sampled center $c$ will belong to one of these costly clusters, and conditioned on the center being from one such cluster $A$, the cost of this cluster after adding $c$ to the current center set is bounded using Lemma~\ref{lemma:av2}. The formal proof of $O(\log{k})$ approximation guarantee in \cite{ArthurV07} involves setting up a clever induction argument. We give a similar induction based argument to prove Theorem~\ref{thm:kmpp}. We prove the following lemma (similar to Lemma 3.3 in \cite{ArthurV07}). We will need the following definitions: 
For any center set $C$, an optimal cluster $A$ is said to be ``covered" if at least one point from $A$ is in $C$, otherwise $A$ is said to be ``uncovered". 
Let $T$ be a union of a subset of the optimal clusters, then we will use the notation $\Phi_{OPT}(T) \stackrel{def.}{=} \sum_{X_i \subseteq T} \Delta_1(X_i)$.

\begin{lemma}\label{lemma:av-induction}
Let $C \subseteq X$ be any set of centers such that the number of uncovered clusters w.r.t. $C$ is $u>0$. Let $X_u$ denote the set of points of the uncovered clusters and $X_c$ denote set of the points of the covered clusters. Let us run $t$ iterations of the outer for-loop in {\tt Query-$k$-means++} algorithm such that $t \leq u \leq k$. Let $C'$ denote the resulting set of centers after running $t$ iterations of the outer for-loop. Then the following holds:
\begin{equation}\label{eqn:induction}
\E[\Phi(C', X)] \leq (\Phi(C, X_c) + 8 \cdot \Phi_{OPT}(X_u)) \cdot \left(2 + \frac{t}{k} \right) + \frac{u - t}{u} \cdot \Phi(C, X_u).
\end{equation}
\end{lemma}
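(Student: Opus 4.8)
The plan is to prove the inequality by induction on the pair $(t,u)$, following the structure of the proof of Lemma~3.3 in \cite{ArthurV07} but keeping track of the inner loop of length $\lceil\log k\rceil$. Abbreviate $a = \Phi(C,X_c)$, $b = \Phi(C,X_u)$, $b^* = \Phi_{OPT}(X_u)$ and $\alpha = a + 8b^*$, so the claim reads $\E[\Phi(C',X)]\le \alpha(2 + t/k) + \tfrac{u-t}{u}b$. For $t=0$ we have $C'=C$ and $\Phi(C,X)=a+b\le 2\alpha+b$. For $t=u=1$ there is a single uncovered cluster $A=X_u$, which one outer iteration covers with probability $q=1-(a/(a+b))^{\lceil\log k\rceil}$; conditioned on that event the accepted center has the law of a $D^2$-sample conditioned on lying in $A$, so Lemma~\ref{lemma:av2} gives expected cost of $A$ at most $8\Delta_1(A)=8b^*$, and since adding centers only lowers cost, $\E[\Phi(C',X)]\le q(a+8b^*)+(1-q)(a+b)\le a+8b^*+(1-q)b$; finally $(1-q)b=(a/(a+b))^{\lceil\log k\rceil}b\le\tfrac{ab}{a+b}\le a$, so this is at most $2\alpha$.

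For the inductive step ($t\ge1$, $u\ge2$) I would condition on the outcome of the first of the $t$ outer iterations. Let $p=(a/(a+b))^{\lceil\log k\rceil}$ be the probability that all $\lceil\log k\rceil$ inner $D^2$-samples land in already-covered clusters; in this ``wasted'' case $C$ is unchanged, and the remaining $t-1$ iterations run on the same configuration $(C,u)$, so I invoke the inductive hypothesis with $(t-1,u)$. Otherwise the iteration accepts a point of some currently-uncovered cluster $A$; a short calculation gives that this occurs for a given $A$ with probability $(1-p)\Phi(C,A)/b$, and conditioned on it the accepted center $c$ is a $D^2$-sample conditioned on $c\in A$. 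Now $A$ is covered, so the remaining $t-1$ iterations run on $(C\cup\{c\},u-1)$ and I invoke the inductive hypothesis with $(t-1,u-1)$, bounding $\Phi(C\cup\{c\},X_c\cup A)\le a+\Phi(C\cup\{c\},A)$, $\Phi(C\cup\{c\},X_u\setminus A)\le b-\Phi(C,A)$ and $\Phi_{OPT}(X_u\setminus A)=b^*-\Delta_1(A)$; taking the expectation over $c$ and using Lemma~\ref{lemma:av2} as $\E[\Phi(C\cup\{c\},A)\mid c\in A]\le 8\Delta_1(A)$ makes the factor multiplying $(2+\tfrac{t-1}{k})$ collapse to exactly $\alpha$ in both branches.

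Assembling, the successful branch is summed over the uncovered clusters $A$; the resulting quadratic term $\sum_A\Phi(C,A)^2$ is bounded below by $b^2/u$ (Cauchy--Schwarz), and since it enters with a negative sign this pulls the residual-cost coefficient down to $\tfrac{u-t}{u}$. After simplification one is left with $\E[\Phi(C',X)]\le\alpha(2+\tfrac{t-1}{k})+\tfrac{u-t}{u}b+\tfrac{p\,b}{u}$, so the induction closes once we show $\tfrac{p\,b}{u}\le\tfrac{\alpha}{k}$, that is, $\left(\tfrac{\Phi(C,X_c)}{\Phi(C,X)}\right)^{\lceil\log k\rceil}\Phi(C,X_u)\le\tfrac{u}{k}\bigl(\Phi(C,X_c)+8\Phi_{OPT}(X_u)\bigr)$.

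I expect this final inequality — the bound on the cost of the ``wasted'' iterations — to be the main obstacle, and it is precisely where the length-$\lceil\log k\rceil$ inner loop matters: a single $D^2$-sample misses the uncovered clusters with probability $\Phi(C,X_c)/\Phi(C,X)$, so taking the $\lceil\log k\rceil$-th power makes $p\cdot b$ small enough that the leading factor grows by only $1/k$ per iteration — yielding the $t/k$ term — rather than incurring the harmonic increments of the classical $k$-means++ analysis. To verify it I would maximize $p\cdot b=(a/(a+b))^{\lceil\log k\rceil}b$ over all configurations with $a$ fixed (the maximum occurs when $b$ is a $\Theta(1/\log k)$ fraction of $a$) and compare against $\tfrac{u}{k}\alpha\ge\tfrac{a}{k}$, handling the regimes where $b$ is large and where $b$ is small relative to $a$ separately. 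Everything else is routine bookkeeping with the inductive hypothesis and Lemma~\ref{lemma:av2}.
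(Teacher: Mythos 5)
Your proposal follows the paper's route closely: same abbreviations, same base cases, same decomposition of an outer iteration into a ``wasted'' branch (probability $p=(a/(a+b))^{\lceil\log k\rceil}$, invoke the hypothesis at $(t-1,u)$) and a ``successful'' branch (accept a $D^2$-conditioned sample from some uncovered $A$, invoke the hypothesis at $(t-1,u-1)$, apply Lemma~\ref{lemma:av2} so the $\alpha$-coefficient reassembles), and the same Cauchy--Schwarz step $\sum_A \Phi(C,A)^2 \ge b^2/u$. You also correctly isolate the residual obligation $\tfrac{p\,b}{u} \le \tfrac{\alpha}{k}$. So the skeleton is right.

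The gap is in the last paragraph. The residual inequality you plan to ``verify'' is \emph{false} over the full range of configurations, and your own extremal calculation shows it: the maximum of $p\,b=(a/(a+b))^{\lceil\log k\rceil}b$ with $a$ fixed is attained at $b\approx a/(\lceil\log k\rceil-1)$ and has value of order $a/(e\log k)$, which for large $k$ is far larger than $a/k\le \alpha u/k$. Splitting into ``$b$ large'' versus ``$b$ small'' regimes and checking the residual inequality in each cannot rescue this, because in the regime $b<a$ (equivalently $\Phi(C,X_c)/\Phi(C,X)\ge 1/2$) the bound $p\,b/u\le\alpha/k$ simply does not hold near the maximizer. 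What the paper does instead is \emph{escape the induction entirely} in that regime: when $\Phi(C,X_c)/\Phi(C,X)\ge 1/2$ one has $\Phi(C,X)\le 2\Phi(C,X_c)$, hence $\Phi(C',X)\le\Phi(C,X)\le 2\Phi(C,X_c)\le 2\alpha$, which is already below the target RHS for every $(t,u)$ with no inductive hypothesis needed. Only after restricting to $\Phi(C,X_c)/\Phi(C,X)<1/2$ does the inner loop give $p=\left(\tfrac{\Phi(C,X_c)}{\Phi(C,X)}\right)^{\lceil\log k\rceil}<2^{-\lceil\log k\rceil}\le 1/k$, which is exactly the estimate needed to absorb the wasted-iteration term into the $+1/k$ growth of the leading coefficient. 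Add that case split at the head of the inductive step and the rest of your plan goes through as you describe.
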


\begin{proof}
Let us begin by analysing what happens when starting with $C$, one iteration of the outer for-loop in {\tt query-$k$-means++} is executed. The following two observations will be used in the induction argument:\\
\underline{{\it Observation 1}}: If $\frac{\Phi(C, X_c)}{\Phi(C, X)} \geq 1/2$, then we have $\frac{\Phi(C, X_c)}{\Phi(C, X_c) + \Phi(C, X_u)} \geq 1/2$ which implies that $\Phi(C, X_u) \leq \Phi(C, X_c)$, and also $\Phi(C, X) \leq 2 \cdot \Phi(C, X_c)$.\\
\underline{{\it Observation 2}}: If $\frac{\Phi(C, X_c)}{\Phi(C, X)} < 1/2$, then the probability that no point will be added after one iteration is given by $\left(\frac{\Phi(C, X_c)}{\Phi(C, X)} \right)^{\lceil \log{k} \rceil}  < \left(\frac{1}{2}\right)^{\log{k}} = \frac{1}{k}$.

We will now proceed by induction.
We show that if the statement holds for $(t-1, u)$ and $(t-1, u-1)$, then the statement holds for $(t, u)$.
In the basis step, we will show that the statement holds for $t=0$ and $u > 0$ and $u = t = 1$.\\
\underline{\bf Basis step}: Let us first prove the simple case of $t=0$ and $u > 0$. In this case, $C' = C$.
So, we have $\E[\Phi(C', X)] = \Phi(C, X)$ which is at most the RHS of (\ref{eqn:induction}).
Consider the case when $u = t = 1$. This means that there is one uncovered cluster and one iteration of the outer for-loop is executed. If a center from the uncovered cluster is added, then $\E[\Phi(C', X)] \leq \Phi(C, X_c) + 8 \cdot \Phi_{OPT}(X_u)$ and if no center is picked, then $\Phi(C', X) = \Phi(C, X)$.
The probability of adding a center from the uncovered cluster is given by $p = 1 - \left( \frac{\Phi(C, X_c)}{\Phi(C, X)} \right)^{\log{k}}$. So, we get $\E[\Phi(C', X)] \leq p \cdot (\Phi(C, X_c) + 8 \cdot \Phi_{OPT}(X_u)) + (1-p) \cdot \Phi(C, X)$. Note that this is upper bounded by the RHS of (\ref{eqn:induction}) by observing that $1 - p \leq \frac{\Phi(C, X_c)}{\Phi(C, X)}$.\\
\underline{\bf Inductive step}: As stated earlier, we will assume that the statement holds for $(t-1, u)$ and $(t-1, u-1)$ and we will show that the statement holds for $(t,u)$. Suppose $p \stackrel{def.}{=} \frac{\Phi(C, X_c)}{\Phi(C, X)} \geq \frac{1}{2}$, then $\Phi(C, X) \leq 2 \cdot \Phi(C, X_c)$ and so $\Phi(C', X) \leq \Phi(C,X) \leq 2 \cdot \Phi(C, X_c)$ which is upper bounded by the RHS of (\ref{eqn:induction}). So, the statement holds for $(t,u)$ (without even using the induction assumption). So, for the rest of the discussion, we will assume that $p < 1/2$. Let us break the remaining analysis into two cases -- (i) no center is added in the next iteration of the outer for-loop, and (ii) a center is added. In case (i), $u$ does not change, $t$ decreases by $1$, and the covered and uncovered clusters remain the same after the iteration. So the contribution of this case to $\E[\Phi(C', X)]$ is at most
\begin{equation}\label{eqn:ind-1}
p^{\lceil \log{k} \rceil} \cdot \left( (\Phi(C, X_c) + 8 \cdot \Phi_{OPT}(X_u)) \cdot \left(2 + \frac{t - 1}{k} \right) + \frac{u - t + 1}{u} \cdot \Phi(C, X_u)\right)
\end{equation}
Now, consider case (ii). Let $A$ be any uncovered cluster w.r.t. center set $C$. For any point $a \in A$, let $p_a$ denote the conditional probability of sampling $a$ conditioned on sampling a point from $A$. Also, let $\phi_a$ denote the cost of $A$ given $a$ is added as a center. That is, $\phi_a = \Phi(C \cup \{a\}, A)$.
The contribution of $A$ to the expectation $\E[\Phi(C', X)]$ using the induction hypothesis is:
\begin{dmath*}
(1 - p^{\lceil \log{k} \rceil}) \cdot \frac{\Phi(C, A)}{\Phi(C, X_u)} \sum_{a \in A} p_a \cdot \left( \left(\Phi(C, X_c) + \phi_a  + 8 \cdot \Phi_{OPT}(X_u) - 8 \cdot \Delta_1(A)\right) \cdot \left(2 + \frac{t-1}{k}\right) +
\frac{u-t}{u-1} \cdot (\Phi(C, X_u) - \Phi(C, A))\right)
\end{dmath*}
This is at most
\begin{dmath*}
(1 - p^{\lceil \log{k}\rceil}) \cdot \frac{\Phi(C, A)}{\Phi(C, X_u)}  \left( \left(\Phi(C, X_c)  + 8 \cdot \Phi_{OPT}(X_u) \right) \cdot \left(2 + \frac{t-1}{k}\right) +
\frac{u-t}{u-1} \cdot (\Phi(C, X_u) - \Phi(C, A))\right)
\end{dmath*}
The previous inequality follows from the fact that $\sum_{a \in A} p_a \phi_a \leq 8 \cdot \Delta_1(A)$ from Lemma~\ref{lemma:av2}. Summing over all uncovered clusters, the overall contribution in case (ii) is at most:
\begin{dmath*}
(1 - p^{\lceil \log{k} \rceil}) \cdot  \left( \left(\Phi(C, X_c)  + 8 \cdot \Phi_{OPT}(X_u) \right) \cdot \left(2 + \frac{t-1}{k}\right) +
\frac{u-t}{u-1} \cdot \left(\Phi(C, X_u) - \frac{\Phi(C, X_u)}{u} \right)\right)
\end{dmath*}
The above bound is obtained using the fact that $\sum_{A \textrm{ is uncovered}} \Phi(C, A)^2 \geq \frac{1}{u} \Phi(C, X_u)^2$. So the contribution is at most
\begin{dmath}\label{eqn:ind-2}
(1 - p^{\lceil \log{k} \rceil}) \cdot  \left( \left(\Phi(C, X_c)  + 8 \cdot \Phi_{OPT}(X_u) \right) \cdot \left(2 + \frac{t-1}{k}\right) +  \frac{u-t}{u} \cdot \Phi(C, X_u) \right)
\end{dmath}
Combining inequalities (\ref{eqn:ind-1}) and (\ref{eqn:ind-2}), we get the following:
\begin{eqnarray*}
\E[\Phi(C', X)] &\leq& \left( \left(\Phi(C, X_c)  + 8 \cdot \Phi_{OPT}(X_u) \right) \cdot \left(2 + \frac{t-1}{k}\right) +  \frac{u-t}{u} \cdot \Phi(C, X_u) \right) + p^{\lceil \log{k} \rceil} \cdot \frac{\Phi(C, X_u)}{u} \\
&=& \left( \left(\Phi(C, X_c)  + 8 \cdot \Phi_{OPT}(X_u) \right) \cdot \left(2 + \frac{t-1}{k}\right) +  \frac{u-t}{u} \cdot \Phi(C, X_u) \right) + \\
&& \left( \frac{\Phi(C, X_c)}{\Phi(C, X)}\right)^{\lceil \log{k} \rceil} \cdot \frac{\Phi(C, X_u)}{u}\\
&\leq& \left( \left(\Phi(C, X_c)  + 8 \cdot \Phi_{OPT}(X_u) \right) \cdot \left(2 + \frac{t-1}{k}\right) +  \frac{u-t}{u} \cdot \Phi(C, X_u) \right) +  \frac{\Phi(C, X_c)}{k u} \\
&& \textrm{(using the Observation 2, that is $p^{\lceil \log{k} \rceil} \leq 1/k$)}\\
&\leq& \left(\Phi(C, X_c)  + 8 \cdot \Phi_{OPT}(X_u) \right) \cdot \left(2 + \frac{t}{k}\right) +  \frac{u-t}{u} \cdot \Phi(C, X_u)
\end{eqnarray*}
This completes the inductive argument and the proof.\qed
\end{proof}

Let us now conclude the proof of Theorem~\ref{thm:kmpp} using the above lemma. Consider the center set $C$ before entering the outer for-loop. This contains a single center $c$ chosen randomly from the dataset $X$. Let $c$ belong to some optimal cluster $A$. Let $C'$ denote the center set after the execution of the outer for-loop completes. Applying the above lemma with $u = t = k-1$, we get:
\begin{eqnarray*}
\E[\Phi(C', X)] &\leq& (\Phi(C, A) + 8 \cdot \Delta_k(X) - 8 \cdot \Delta_1(A)) \cdot \left( 2 + \frac{k-1}{k}\right) \\
&\leq& 3 \cdot (2 \cdot \Delta_1(A) + 8 \cdot \Delta_k(X) - 8 \cdot \Delta_1(A)) \quad \textrm{(using Lemma~\ref{lemma:av1})}\\
&\leq& 24 \cdot \Delta_k(X)
\end{eqnarray*}

\section{Query Approximation Algorithm (proof of Theorem~\ref{thm:main1})}\label{sec:main1}
As mentioned in the introduction, our query algorithm is based on the $D^2$-sampling based algorithm of Jaiswal \etal\cite{jks,jky15}. 
The algorithm in these works give a $(1+\veps)$-factor approximation for arbitrary small $\veps > 0$. 
The running time of these algorithms are of the form $nd \cdot f(k, \veps)$, where $f$ is an exponential function of $k/\veps$. 
We now show that it is possible to get a running time which is polynomial in $n,k,d, 1/\veps$ in the SSAC model. 
The main ingredient in the design and analysis of the sampling algorithm is the following lemma by Inaba \etal\cite{inaba}.

\begin{lemma}[\cite{inaba}]\label{lemma:inaba}
Let $S$ be a set of points obtained by independently sampling $M$ points uniformly at random with replacement from a point set $X \subset \mathbb{R}^d$. Then for any $\delta > 0$,
\[
\pr \left[ \Phi(\{\mu(S)\}, X) \leq \left(1 + \frac{1}{\delta M} \right) \cdot \Delta_1(X)\right] \geq (1 - \delta).
\]
Here $\mu(S)$ denotes the geometric centroid of the set $S$.
That is $\mu(S) = \frac{\sum_{s \in S} s}{|S|}$
\end{lemma}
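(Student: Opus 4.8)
The plan is to reduce the statement to one expectation computation followed by Markov's inequality, the workhorse being the \emph{parallel axis identity}: for any point $p \in \R^d$ and finite $X \subset \R^d$,
$\Phi(\{p\}, X) = \Delta_1(X) + |X| \cdot \|p - \mu(X)\|^2$.
This is an exact equality, obtained by writing $\|x - p\|^2 = \|x - \mu(X)\|^2 + 2\langle x - \mu(X), \mu(X) - p\rangle + \|\mu(X) - p\|^2$ and summing over $x \in X$; the cross term vanishes since $\sum_{x \in X}(x - \mu(X)) = 0$. Instantiating with $p = \mu(S)$ and writing $n = |X|$, this gives $\Phi(\{\mu(S)\}, X) = \Delta_1(X) + n\,\|\mu(S) - \mu(X)\|^2$, so it suffices to prove that $n\,\|\mu(S) - \mu(X)\|^2 \le \frac{1}{\delta M}\,\Delta_1(X)$ with probability at least $1-\delta$.

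Next I would compute $\E[\|\mu(S) - \mu(X)\|^2]$. Writing $S = \{s_1,\dots,s_M\}$ with the $s_i$ i.i.d.\ uniform on $X$, we have $\mu(S) - \mu(X) = \frac{1}{M}\sum_{i=1}^M (s_i - \mu(X))$, a sum of i.i.d.\ mean-zero vectors. Expanding $\|\cdot\|^2$ and using independence to kill the cross terms, $\E[\|\mu(S)-\mu(X)\|^2] = \frac{1}{M}\,\E[\|s_1 - \mu(X)\|^2] = \frac{1}{M}\cdot\frac{1}{n}\sum_{x \in X}\|x-\mu(X)\|^2 = \frac{\Delta_1(X)}{Mn}$. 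Hence the nonnegative random variable $n\,\|\mu(S)-\mu(X)\|^2$ has expectation $\Delta_1(X)/M$, and Markov's inequality gives $\pr\!\left[n\,\|\mu(S)-\mu(X)\|^2 > \frac{\Delta_1(X)}{\delta M}\right] < \delta$. On the complementary event, of probability at least $1-\delta$, the parallel axis identity yields $\Phi(\{\mu(S)\}, X) \le \left(1 + \frac{1}{\delta M}\right)\Delta_1(X)$, which is the claim.

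There is no real obstacle here; every step is elementary. The one point that deserves care is that the parallel axis identity is an \emph{exact} equality rather than an inequality, since the whole argument rests on trading the event about $\Phi(\{\mu(S)\},X)$ for the event about the deviation $\|\mu(S)-\mu(X)\|^2$. It is worth noting that the same computation also gives the cleaner in-expectation statement $\E[\Phi(\{\mu(S)\},X)] = \left(1 + \frac{1}{M}\right)\Delta_1(X)$; the factor $\frac{1}{\delta M}$ in the high-probability version is simply the loss incurred by Markov's inequality, and in the intended applications $M$ will be taken of order $1/\veps$ (and $\delta$ a small constant) so that this factor becomes $O(\veps)$.
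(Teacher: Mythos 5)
Your proof is correct. The paper does not reprove this lemma --- it cites Inaba, Katoh, and Imai and uses the statement as a black box --- but your argument (parallel-axis identity to reduce to bounding $n\,\|\mu(S)-\mu(X)\|^2$, then computing $\E[\|\mu(S)-\mu(X)\|^2]=\Delta_1(X)/(Mn)$ from i.i.d.\ mean-zero terms, then Markov) is precisely the standard proof from the cited source, and every step checks out.
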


Our algorithm {\tt Query-$k$-means} is described in Table~\ref{fig:algo}. It maintains a set $C$ of potential centers of the clusters. In each iteration of step~{\bf(3)}, it adds one more candidate center to the set $C$ (whp), and so, the algorithm stops when $|C|$ reaches $k$. 
For sake of explanation, assume  that optimal clusters are $X_1, X_2, \ldots, X_k$ with means $\mu_1, \ldots, \mu_k$ respectively. 
Consider the $i^{\textrm{th}}$ iteration of step~{\bf(3)}. 
At this time $|C|=i-1$, and it has good approximations to means of $i-1$ clusters among $X_1, \ldots. X_k$.
Let us call these clusters {\em covered}.
In Step~{\bf(3.1)}, it samples $N$ points, each with probability proportional to square of distance from $C$ ($D^2$-sampling). 
Now, it partitions this set, $S$, into $S\cap X_1, \ldots, S \cap X_k$ in the procedure {\tt UncoveredClusters}, and then picks the partition with the largest size such that the corresponding optimal cluster $X_j$ is not one of the $(i-1)$ covered clusters.
Now, we would like to get a uniform sample from $X_j$ -- recall that $S \cap X_j$ does not represent a uniform sample. However, as mentioned in the introduction, we need to find an element $s$ of $X_j$ for which the probability of being in sampled is small enough. Therefore, we pick the element in $S \cap X_j$ for which this probability is smallest (and we will show that it has the desired properties). The procedure~{\tt UncoveredCluster} returns this element $s$. Finally, we choose a subset $T$ of $S \cap X_j$ in the procedure {\tt UniformSample}. 
This procedure is designed such that each element of $X_j$ has the same probability of being in $T$. 
In step {\bf (3.4)}, we check whether the multi-set $T$ is of a desired minimum size. 
We will argue that the probability of $T$ not containing sufficient number of points is very small.
If we have $T$ of the desired size, we take its mean and add it to $C$ in Step~{\bf (3.6)}.

\begin{table}
\centering
\begin{tabular}{| l | l |}
\hline
{\bf Constants}: $N = \frac{(2^{12})k^3}{\eps^2}$, $M = \frac{64k}{\veps}$, $L = \frac{(2^{23})k^2}{\veps^4}$ & \\
\cline{1-1}
{\tt Query-$k$-means($X, k, \veps$)} &
{\tt UncoveredCluster($C, S, R$)}\\ 
\hspace*{0.1in} {\bf (1)} $R \leftarrow \emptyset$ &
\hspace*{0.1in} - For all $i \in \{1, ..., k\}$: $S_i \leftarrow \emptyset$\\
\hspace*{0.1in} {\bf (2)} $C \leftarrow \emptyset$ &
\hspace*{0.1in} - $i \leftarrow 1$ \\
\hspace*{0.1in} {\bf (3)} for $i$ = $1$ to $k$  &
\hspace*{0.1in} - For all $y \in R$: \{$S_i \leftarrow y$; $i$++\}\\
\hspace*{0.3in} {\bf (3.1)} $D^2$-sample a multi-set $S$ of $N$ points &
\hspace*{0.1in} - For all $s \in S$: \\
\hspace*{0.6in} from $X$ with respect to center set $C$ &
\hspace*{0.3in} - If ($\exists j, y$ s.t. $y \in S_j$ \& {\tt SameCluster($s, y$)})\\
\hspace*{0.3in} {\bf (3.2)} $s \leftarrow $ {\tt UncoveredCluster($C, S, R$)} &
\hspace*{0.5in} - $S_j \leftarrow S_j \cup \{s\}$\\
\hspace*{0.3in} {\bf (3.3)} $T \leftarrow $ {\tt UniformSample($X, C, s$)} &
\hspace*{0.3in} - else \\
\hspace*{0.3in} {\bf (3.4)} If ($|T| < M$) continue &
\hspace*{0.5in} - Let $i$ be any index s.t. $S_i$ is empty \\
\hspace*{0.3in} {\bf (3.5)} $R \leftarrow R \cup \{s\}$ &
\hspace*{0.5in} - $S_i \leftarrow \{s\}$\\
\hspace*{0.3in} {\bf (3.6)} $C \leftarrow C \cup \mu(T)$  &
\hspace*{0.1in} - Let $S_i$ be the largest set such that $i > |R|$\\
\hspace*{0.1in} {\bf (4)} return($C$) &
\hspace*{0.1in} - Let $s \in S_i$ be the element with smallest \\ \cline{1-1}
{\tt UniformSample($X, C, s$)} &
 \hspace{0.2in} value of $\Phi(C, \{s\})$ in $S_i$ \\
\hspace*{0.1in} - $T \leftarrow \emptyset$  &
\hspace*{0.1in} - return($s$)\\
\hspace*{0.1in} - For $i$ = $1$ to $L$: & \\
\hspace*{0.3in} - $D^2$-sample a point $x$ from $X$ with respect to center set $C$ & \\
\hspace*{0.3in} - If ({\tt SameCluster($s, x$)}) & \\
\hspace*{0.5in} - With probability $\left(\frac{\veps}{128} \cdot \frac{\Phi(C, \{s\})}{\Phi(C, \{x\})} \right)$ add $x$ in multi-set $T$ & \\
\hspace*{0.1in} - return($T$) & \\
 \hline
\end{tabular}
\caption{Approximation algorithm for $k$-means(top-left frame).
Note that $\mu(T)$ denotes the centroid of $T$ and $D^2$-sampling w.r.t. empty center set $C$ means just uniform sampling. The algorithm {\tt UniformSample($X, C, s$)} (bottom-left) returns a uniform sample of size $\Omega(1/\veps)$ (w.h.p.) from the optimal cluster to which point $s$ belongs.}
\label{fig:algo}
\end{table}

\noindent
We now formally prove the approximation guarantee of the {\tt Query-$k$-means} algorithm.

\begin{theorem}\label{thm:main-query}
Let $0 < \veps \leq 1/2$, $k$ be any positive integer, and $X \subseteq \R^d$. 
There exists an algorithm that runs in time $O(n d k^9/\veps^4)$ and with probability at least $\frac{1}{4}$ outputs a center set $C$ such that $\Phi(C, X) \leq (1 + \veps) \cdot \Delta_k(X)$. Moreover, the number of same-cluster queries used by 
the algorithm is $O(k^9/\veps^4)$.
\end{theorem}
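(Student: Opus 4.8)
The plan is to follow the $D^2$-sampling / induction framework of Jaiswal \etal~\cite{jks,jky15}, replacing their exhaustive enumeration over subsets of the sampled set by same-cluster queries. Fix the optimal clustering $X_1,\dots,X_k$ with means $\mu_1,\dots,\mu_k$, so $\Delta_k(X)=\sum_i\Delta_1(X_i)$. The goal is to show that, with constant probability, at termination $C$ contains for every $i$ a point $c_i$ with $\Phi(\{c_i\},X_i)\le (1+\veps)\Delta_1(X_i)$; summing over $i$ then gives $\Phi(C,X)=\sum_x\min_{c\in C}\|x-c\|^2\le\sum_i\Phi(\{c_i\},X_i)\le (1+\veps)\Delta_k(X)$, since here the per-cluster errors \emph{add} rather than multiply. (The success probability $\tfrac14$ can be boosted to $0.99$, giving Theorem~\ref{thm:main1}, by running the algorithm $O(1)$ times and returning the cheapest $C$, which costs no further queries.)

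The argument is an induction over the iterations of step~{\bf(3)}, in the spirit of the proof of Lemma~\ref{lemma:av-induction}: maintain the invariant that after $i-1$ ``successful'' iterations $C$ holds a $(1+\veps)$-approximate center for $i-1$ distinct optimal (``covered'') clusters, and analyze the next iteration by splitting on whether the uncovered clusters carry a large or a small fraction of the current potential $\Phi(C,X)$. If $\Phi(C,X_u)$ is already a small fraction of $\Phi(C,X)$, then the covered part alone certifies $\Phi(C,X)\le (1+O(\veps))\Delta_k(X)$ and we are finished regardless of the remaining iterations. Otherwise some uncovered cluster has $D^2$-cost at least $\tfrac1u\Phi(C,X_u)=\Omega(\tfrac1k)\Phi(C,X)$; since $N=\Theta(k^3/\veps^2)$ is large, a Chernoff bound shows that whp the largest uncovered class $S_{j^*}$ produced by {\tt UncoveredCluster} satisfies $|S\cap X_{j^*}|=\Omega(k^2/\veps^2)$ and that the selected cluster $X_{j^*}$ itself has cost $\Omega(\tfrac1k)\Phi(C,X)$.

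The heart of the proof is the analysis of the pair ({\tt UncoveredCluster}, {\tt UniformSample}). Write $Y=X_{j^*}$, $m=|Y|$, and let $s$ be the minimum-$\Phi(C,\cdot)$ element of $S\cap Y$. A direct computation shows that in each of the $L=\Theta(k^2/\veps^4)$ rounds of {\tt UniformSample}, every fixed $e\in Y$ with $\Phi(C,\{e\})\ge\tfrac{\veps}{128}\Phi(C,\{s\})$ is added to $T$ with the \emph{same} probability $\tfrac{\veps}{128}\cdot\tfrac{\Phi(C,\{s\})}{\Phi(C,X)}$, because the factor $\Phi(C,\{e\})$ coming from the $D^2$-sampling probability is exactly cancelled by the retention probability. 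Hence $T$ restricted to $Y':=\{e\in Y:\Phi(C,\{e\})\ge\tfrac{\veps}{128}\Phi(C,\{s\})\}$ is an exact uniform sample from $Y'$, and the size test in step~{\bf(3.4)} passes (i.e. $|T|\ge M=64k/\veps$) with high probability. Inaba's lemma (Lemma~\ref{lemma:inaba}) with $\delta=\Theta(1/k)$ then gives $\Phi(\{\mu(T)\},Y')\le (1+\veps)\Delta_1(Y')$, and it remains to upgrade this to $\Phi(\{\mu(T)\},Y)\le (1+\veps)\Delta_1(Y)$, i.e. to argue that discarding $B:=Y\setminus Y'$ is harmless. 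For this I would invoke optimality of the clustering: since $C$ already carries good centers of the covered clusters and every $e\in Y$ is closer to $\mu_{j^*}$ than to any other optimal mean, one gets $\Phi(C,\{e\})=\Omega(\|e-\mu_{j^*}\|^2)$ up to the approximation slack, so the ``light'' points of $B$ all lie very close to $\mu_{j^*}$ and their removal moves both the centroid and the $1$-means cost of $Y$ by only a $(1+O(\veps))$ factor. Finally one bounds the per-iteration failure probability (the Chernoff estimate on $S$, the lower bound on $|T|$, and Inaba) by $O(1/k)$ and takes a union bound over the $k$ iterations; the bounds $O(k^9/\veps^4)$ on the number of {\tt SameCluster} calls and $O(nd\,k^9/\veps^4)$ on the running time then follow by directly counting the queries and the $D^2$-sampling work in {\tt UncoveredCluster} and {\tt UniformSample}.

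The step I expect to be the real obstacle is controlling the discarded set $B$ together with the size of $T$: a priori $Y$ could contain a few points extremely far from $C$, forcing $\Phi(C,\{s\})$ (hence the threshold $\tfrac{\veps}{128}\Phi(C,\{s\})$) to be large and $B$ to be a constant fraction of $Y$, or conversely $s$ could be so light that $|T|<M$ and the iteration makes no progress. Ruling out both requires the quantitative link $\Phi(C,\{e\})\gtrsim\|e-\mu_{j^*}\|^2$ for $e\in Y$ (which is exactly where the invariant and the optimality of the clustering enter), combined with a careful choice of thresholds so that, conditioned on the invariant, $s$ lands in a window where simultaneously $|T|\ge M$ whp and $\Phi(C,B)=O(\veps)\,\Phi(C,Y)$. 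The specific constants $N$, $M$, $L$ baked into the algorithm are precisely what makes this window nonempty and keeps the per-iteration failure probability at $O(1/k)$.
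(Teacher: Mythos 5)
Your high-level outline (induction over iterations, $D^2$-sample, detect an uncovered cluster by queries, a rejection step that equalizes per-point probabilities, Inaba, per-iteration failure $O(1/k)$, union bound) matches the paper. But there are two concrete gaps, and the second is fatal to the argument as you have written it.

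First, the paper does \emph{not} prove the theorem directly for all $X$. It first assumes $X$ is $(k,\veps)$-irreducible and proves a version with $O(ndk^5/\veps^4)$ time and $O(k^5/\veps^4)$ queries (Theorem~\ref{thm:main-query-irreducible}). Irreducibility is load-bearing: it yields the separation $\|\mu_i-\mu_j\|^2\ge\veps(r_i+r_j)$ (Lemma~\ref{lemma:query-1}), which in turn gives both the lower bound $\Phi(C_i,X_{i+1}\cup\cdots\cup X_k)\ge(\veps/4)\Phi(C_i,X)$ (so \emph{some} uncovered cluster always carries a $\veps/(4k)$ fraction of the $D^2$-mass) and the per-point conditional-probability lower bound of Lemma~\ref{lemma:cond-prob}. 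Only at the end is irreducibility removed, by running the whole argument with error parameter $\Theta(\veps/k)$; the $k^9/\veps^4$ in the theorem is exactly $k^5/(\veps/k)^4$. Your proposal never mentions irreducibility, and your replacement (``if $\Phi(C,X_u)$ is small we are already done'') is not a substitute: if $\Phi(C,X_u)$ is, say, $\veps^{10}\Phi(C,X)$ you are not done, yet no uncovered cluster need have enough $D^2$-mass to be reliably detected by the sample $S$, and Lemma~\ref{lemma:cond-prob} no longer holds without the separation.

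Second and more importantly, the step you flag as ``the real obstacle'' — controlling the discarded set $B=Y\setminus Y'$ — is not an obstacle the paper has to face, because the paper arranges for $B=\emptyset$. Lemma~\ref{lemma:cond-prob} shows that for \emph{every} $x\in X_j$ (not just the heavy ones), $\Phi(C_i,\{x\})/\Phi(C_i,X_j)\ge\frac{\veps}{64m_j}$; and the point $s$ returned by {\tt UncoveredCluster}, being the minimizer of $\Phi(C,\cdot)$ over $S\cap X_j$, lies (whp) in the ``nice'' set $Y_j$ where $\Phi(C_i,\{s\})/\Phi(C_i,X_j)\le 2/m_j$. Combining these, $\frac{\veps}{128}\cdot\frac{\Phi(C,\{s\})}{\Phi(C,\{x\})}\le 1$ for all $x\in X_j$, so the retention probability is valid on the whole cluster and $T$ is an exact uniform sample from $X_j$, not from a truncated $Y'$. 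Your proposed patch — ``every $e\in Y$ is closer to $\mu_{j^*}$ than to any other optimal mean, hence $\Phi(C,\{e\})=\Omega(\|e-\mu_{j^*}\|^2)$'' — does not follow: $\Phi(C,\{e\})$ measures distance to $C$, not to the other optimal means, and some $c_t\in C$ could sit much closer to $e$ than $\mu_{j^*}$ is to $\mu_t$; bounding this requires precisely the invariant $P(i)$ together with the irreducibility separation, which is what Lemma~\ref{lemma:cond-prob} packages. Without it, neither ``$B$ is small'' nor ``$|T|\ge M$ whp'' is established, so the inductive step doesn't close.
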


Note that the success probability of the algorithm may be boosted by repeating it a constant number of times. 
This will also prove our main theorem (that is, Theorem~\ref{thm:main1}). We will assume that the dataset $X$ satisfies $(k, \veps)$-irreducibility property defined next. 
We will later drop this assumption using a simple argument and show that the result holds for {\em all} datasets. 
This property was also used in some earlier works~\cite{kss, jks}.

\begin{definition}[Irreducibility]
Let $k$ be a positive integer and $0 < \veps \leq 1$. A given dataset $X \subseteq \R^d$ is said to be $(k, \veps)$-irreducible iff
\[
\Delta_{k-1}(X) \geq (1 + \veps) \cdot \Delta_k(X).
\]
\end{definition}

Qualitatively, what the irreducibility assumption implies is that the optimal solution for the $(k-1)$-means problem does not give a $(1 + \veps)$-approximation to the $k$-means problem.
The following useful lemmas are well known facts.

\begin{lemma}\label{lemma:fact}
For any dataset $X \subseteq \R^d$ and a point $c \in \R^d$, we have:
\[
\Phi(\{c\}, X) = \Phi(\mu(X), X) + |X| \cdot ||c - \mu(X)||^2.
\]
\end{lemma}

\begin{lemma}[Approximate Triangle Inequality] \label{lemma:approx}
For any three points $p,q,r\in \R^d$, we have 
\[ ||p-q||^2\leq 2(||p-r||^2+||r-q||^2)\]
\end{lemma}


Let $X_1, ..., X_k$ be optimal clusters of the dataset $X$ for the $k$-means objective. 
Let $\cs{1}, ..., \cs{k}$ denote the corresponding optimal $k$ centers. 
That is, $\forall i, \cs{i} = \mu(X_i)$.
For all $i$, let $m_i = |X_i|$.
Also, for all $i$, let $r_i = \frac{\sum_{x \in X_i} ||x - \cs{i}||^2}{m_i}$.
The following useful lemma holds due to irreducibility.\footnote{This is Lemma~4 from \cite{jks}. We give the proof here for self-containment.}
\begin{lemma}\label{lemma:query-1}
For all $1 \leq i < j \leq k$, $||\cs{i} - \cs{j}||^2 \geq \veps \cdot (r_i + r_j)$.
\end{lemma}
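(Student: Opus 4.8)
We need to show that for optimal clusters $X_1, \ldots, X_k$ with means $\mu_i$ and average squared radii $r_i = \frac{1}{m_i}\sum_{x \in X_i}\|x-\mu_i\|^2$, under the $(k,\veps)$-irreducibility assumption, we have $\|\mu_i - \mu_j\|^2 \geq \veps(r_i + r_j)$ for all $i < j$.

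**Plan.** The idea is a standard "merge two clusters" argument. Fix $i < j$. Suppose toward proving the bound that we merge clusters $X_i$ and $X_j$ into one cluster $X_i \cup X_j$, and use the remaining $k-2$ optimal centers $\mu_\ell$ ($\ell \neq i,j$) together with a single new center for the merged cluster. This gives a feasible solution to the $(k-1)$-means problem. By irreducibility, its cost is at least $(1+\veps)\Delta_k(X) \geq \Delta_k(X) + \veps \sum_\ell \Delta_1(X_\ell) \geq \Delta_k(X) + \veps(\Delta_1(X_i) + \Delta_1(X_j))$. On the other hand, the cost of this $(k-1)$-means solution equals $\Delta_k(X) - \Delta_1(X_i) - \Delta_1(X_j) + \Delta_1(X_i \cup X_j)$, since only the clusters $X_i, X_j$ get recomputed. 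Combining, $\Delta_1(X_i \cup X_j) - \Delta_1(X_i) - \Delta_1(X_j) \geq \veps(\Delta_1(X_i) + \Delta_1(X_j))$, which we can weaken to $\Delta_1(X_i \cup X_j) - \Delta_1(X_i) - \Delta_1(X_j) \geq \veps(\Delta_1(X_i) + \Delta_1(X_j)) \geq 0$; but actually we want a bound in terms of $r_i, r_j$, so we keep the $\veps$ factor and note $\Delta_1(X_i) = m_i r_i$, $\Delta_1(X_j) = m_j r_j$.

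**Key computation.** The remaining step is to compute $\Delta_1(X_i \cup X_j) - \Delta_1(X_i) - \Delta_1(X_j)$ exactly. Using Lemma~\ref{lemma:fact} (the parallel-axis / bias-variance identity), the mean of $X_i \cup X_j$ is $\mu = \frac{m_i \mu_i + m_j \mu_j}{m_i + m_j}$, and
\[
\Delta_1(X_i \cup X_j) = \Delta_1(X_i) + m_i \|\mu - \mu_i\|^2 + \Delta_1(X_j) + m_j\|\mu - \mu_j\|^2.
\]
A short calculation gives $m_i\|\mu - \mu_i\|^2 + m_j \|\mu - \mu_j\|^2 = \frac{m_i m_j}{m_i + m_j}\|\mu_i - \mu_j\|^2$. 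Hence $\frac{m_i m_j}{m_i+m_j}\|\mu_i - \mu_j\|^2 \geq \veps(m_i r_i + m_j r_j)$. Now divide both sides by $\frac{m_i m_j}{m_i + m_j}$: the left side becomes $\|\mu_i - \mu_j\|^2$ and the right side becomes $\veps\left(\frac{m_i+m_j}{m_j}r_i + \frac{m_i+m_j}{m_i}r_j\right) = \veps\left(r_i + \frac{m_i}{m_j}r_i + r_j + \frac{m_j}{m_i}r_j\right) \geq \veps(r_i + r_j)$, since the dropped terms are nonnegative. This yields the claim.

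**Main obstacle.** There is no serious obstacle here; the only thing to be careful about is making sure the $(k-1)$-means solution obtained by merging is genuinely feasible (it uses exactly $k-1$ centers, one per surviving cluster plus one for the union) and that its cost is \emph{exactly} $\Delta_k(X) - \Delta_1(X_i) - \Delta_1(X_j) + \Delta_1(X_i\cup X_j)$ — this uses that in the optimal $k$-means clustering each $X_\ell$ is assigned to $\mu_\ell = \mu(X_\ell)$, so $\Delta_k(X) = \sum_\ell \Delta_1(X_\ell)$, and that reassigning points of $X_i \cup X_j$ to the single best center for that set costs exactly $\Delta_1(X_i \cup X_j)$ while leaving the other clusters' contributions unchanged. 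I would also note explicitly that when $k=1$ the lemma is vacuous, and that $(k,\veps)$-irreducibility is exactly what makes the inequality $\Delta_{k-1}(X) \geq (1+\veps)\Delta_k(X)$ available.
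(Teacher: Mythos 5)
Your proof is correct but takes a genuinely different route from the paper. Both arguments hinge on a \emph{merge two clusters} construction, but the choices and framing differ. The paper argues by contradiction: it assumes $\|\cs{i}-\cs{j}\|^2 < \veps(r_i+r_j)$, takes WLOG $m_i \geq m_j$, drops $\cs{j}$ from the optimal center set, and uses Lemma~\ref{lemma:fact} to show $\Phi(\{\cs{i}\}, X_i\cup X_j) \leq (1+\veps)(m_i r_i + m_j r_j)$, whence the remaining $k-1$ centers give a $(1+\veps)$-approximation — contradicting irreducibility. You instead argue directly: you replace $\cs{i},\cs{j}$ by the \emph{optimal} merged center $\mu(X_i\cup X_j)$, compute the exact cost increment $\frac{m_i m_j}{m_i+m_j}\|\cs{i}-\cs{j}\|^2$ via Lemma~\ref{lemma:fact}, and lower-bound it using $\Delta_{k-1}(X)\geq(1+\veps)\Delta_k(X)$. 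Your route avoids the WLOG on cluster sizes, avoids the contradiction framing, and actually yields a stronger bound, namely $\|\cs{i}-\cs{j}\|^2 \geq \veps\bigl(\frac{m_i+m_j}{m_j}r_i + \frac{m_i+m_j}{m_i}r_j\bigr)$, from which the claimed inequality follows by dropping nonnegative terms. One small imprecision worth noting: the cost of the $(k-1)$-center set $\{\mu_\ell:\ell\neq i,j\}\cup\{\mu(X_i\cup X_j)\}$ is only \emph{at most} $\Delta_k(X)-\Delta_1(X_i)-\Delta_1(X_j)+\Delta_1(X_i\cup X_j)$ (the Voronoi partition induced by the new centers may reassign points across clusters), not exactly equal; but since this quantity upper-bounds $\Delta_{k-1}(X)$, the inequality runs in the direction you need and the argument is unaffected.
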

\begin{proof}
For the sake of contradiction, assume that $||\cs{i} - \cs{j}||^2 < \veps \cdot (r_i + r_j)$.
WLOG assume that $m_i > m_j$.
We have:
\begin{eqnarray*}
\Phi(\{\cs{i}\}, X_i \cup X_j) &=& m_i r_i + m_j r_j + m_j ||\cs{i} - \cs{j}||^2 \quad \textrm{(using Lemma~\ref{lemma:fact})} \\
&\leq& m_i r_i + m_j r_j + m_j \cdot \veps \cdot (r_i + r_j) \\
&\leq& (1+\veps) \cdot m_i r_i + (1+\veps)\cdot m_j r_j \quad \textrm{(since $m_i > m_j$)}\\
&\leq& (1+ \veps) \cdot \Phi(\{\cs{i}, \cs{j}\}, X_i \cup X_j)
\end{eqnarray*}
This implies that the center set $\{\cs{1}, ..., \cs{k}\} \setminus \{\cs{j}\}$ gives a $(1 + \veps)$-approximation to the $k$-means objective which contradicts with the $(k, \veps)$-irreducibility assumption on the data.
\qed
\end{proof}

Consider the algorithm {\tt Query-$k$-means} in Figure~\ref{fig:algo}. 
Let $C_i = \{c_1, ..., c_i\}$ denote the set of centers at the end of the $i^{\textrm{th}}$ iteration of the for loop. That is, $C_i$ is the same as variable $C$ at the end of iteration $i$. We will prove Theorem~\ref{thm:main-query} by inductively arguing that for every $i$, there are $i$ distinct clusters for which centers in $C_i$ are good in some sense. 
Consider the following invariant:
\begin{quote}
$\mathbf{P(i)}$: 
There exists a set of $i$ distinct clusters $X_{j_1}, X_{j_2}, ..., X_{j_i}$ such that 
$$\forall r \in \{1, ..., i\}, \Phi(\{c_r\}, X_{j_r}) \leq \left(1+\frac{\veps}{16} \right) \cdot \Delta_1(X_{j_r}).$$
\end{quote}

Note that a trivial consequence of $P(i)$ is $\Phi(C_i, X_{j_1} \cup ... \cup X_{j_i}) \leq (1+\frac{\veps}{16}) \cdot \sum_{r=1}^{i} \Delta_1(X_{j_r})$.
We will show that for all $i$, $P(i)$ holds with probability at least $(1 - 1/k)^i$.
Note that the theorem follows if $P(k)$ holds with probability at least $(1-1/k)^k$.
We will proceed using induction.
The base case $P(0)$ holds since $C_0$ is the empty set.
For the inductive step, assuming that $P(i)$ holds with probability at least $(1-1/k)^i$ for some arbitrary $i \geq 0$, we will show that $P(i+1)$ holds with probability at least $(1-1/k)^{i+1}$.
We condition on the event $P(i)$ (that is true with probability at least $(1 - 1/k)^{i}$).
Let $C_i$ and $X_{j_1}, ..., X_{j_i}$ be as guaranteed by the invariant $P(i)$.
For ease of notation and without loss of generality, let us assume that the index $j_r$ is $r$.
So, $C_i$ gives a good approximation w.r.t. points in the set $X_1 \cup X_2 \cup .... \cup X_i$ and these clusters may be thought of as ``covered" clusters (in the approximation sense).
Suppose we $D^2$-sample a point $p$ w.r.t. center set $C_i$. The probability that $p$ belongs to some ``uncovered cluster" $X_r$ where $r \in [i+1,k]$ is given as $\frac{\Phi(C_i, X_r)}{\Phi(C_i, X)}$. If this quantity is small, then the points sampled using $D^2$ sampling in subsequent iterations may not be good representatives for the uncovered clusters. This may cause the analysis to break down. However, we argue that since our data is $(k, \veps)$-irreducible, this does not occur.
\footnote{This is Lemma 5 in \cite{jks}. We give the proof for self-containment.}
\begin{lemma}
$\frac{\Phi(C_i, X_{i+1} \cup ... \cup X_{k})}{\Phi(C_i, X)} \geq \frac{\veps}{4}$.
\end{lemma}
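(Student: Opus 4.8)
The plan is to bound the total cost $\Phi(C_i, X)$ from above in terms of the optimal cost, and simultaneously bound the cost on the uncovered clusters $\Phi(C_i, X_{i+1} \cup \cdots \cup X_k)$ from below, so that the ratio is at least $\veps/4$. For the upper bound on the numerator's complement, observe that $\Phi(C_i, X) = \Phi(C_i, X_1 \cup \cdots \cup X_i) + \Phi(C_i, X_{i+1} \cup \cdots \cup X_k)$, so it suffices to control $\Phi(C_i, X_1 \cup \cdots \cup X_i)$, the cost on the covered clusters. By the invariant $P(i)$ (on which we have conditioned), $\Phi(C_i, X_1 \cup \cdots \cup X_i) \leq (1 + \veps/16) \sum_{r=1}^i \Delta_1(X_r) \leq 2\Delta_k(X)$, a fixed quantity. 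So the real work is to show that the uncovered clusters contribute a constant fraction of the \emph{optimal} cost, i.e. $\Phi(C_i, X_{i+1} \cup \cdots \cup X_k) \geq \Omega(\veps) \cdot \Delta_k(X)$; combined with the covered-cluster bound this gives the desired ratio after adjusting constants.

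To lower bound $\Phi(C_i, X_{i+1} \cup \cdots \cup X_k)$, I would invoke $(k,\veps)$-irreducibility. Consider the center set $C' = C_i \cup \{\mu_{i+1}, \ldots, \mu_k\}$, which has at most $k-1$ centers. Assigning the points of $X_1, \ldots, X_i$ to $C_i$ and the points of each $X_r$ ($r > i$) to their true mean $\mu_r$, we get $\Phi(C', X) \leq \Phi(C_i, X_1 \cup \cdots \cup X_i) + \sum_{r=i+1}^k \Delta_1(X_r)$. By irreducibility, any $(k-1)$-center solution costs at least $(1+\veps)\Delta_k(X)$, so $\Phi(C_i, X_1 \cup \cdots \cup X_i) + \sum_{r=i+1}^k \Delta_1(X_r) \geq (1+\veps)\Delta_k(X)$. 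Using $P(i)$ again for the covered part and $\Delta_k(X) = \sum_{r=1}^k \Delta_1(X_r)$, a short rearrangement isolates $\sum_{r=i+1}^k \Delta_1(X_r) \geq \Omega(\veps)\Delta_k(X)$. Finally I would pass from this lower bound on the \emph{optimal} cost of the uncovered clusters to a lower bound on $\Phi(C_i, X_{i+1} \cup \cdots \cup X_k)$ via the trivial inequality $\Phi(C_i, X_r) \geq \Delta_1(X_r)$ (assigning to any center set costs at least the $1$-means optimum).

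Putting the pieces together: $\Phi(C_i, X) \leq 2\Delta_k(X) + \Phi(C_i, X_{i+1} \cup \cdots \cup X_k)$ while $\Phi(C_i, X_{i+1} \cup \cdots \cup X_k) \geq \Omega(\veps)\Delta_k(X)$, so the ratio $\Phi(C_i, X_{i+1} \cup \cdots \cup X_k)/\Phi(C_i, X)$ is bounded below by a constant times $\veps$, and chasing the constants through (with the $\veps \leq 1/2$ assumption) should yield exactly $\veps/4$. The main obstacle I anticipate is getting the constants to line up precisely — in particular, the step that converts $\sum_{r > i}\Delta_1(X_r) \geq \Omega(\veps)\Delta_k(X)$ into the stated $\veps/4$ bound on the ratio requires being slightly careful about whether to use the crude bound $\Phi(C_i,\text{covered}) \leq 2\Delta_k$ or the sharper $(1+\veps/16)\sum_{r\le i}\Delta_1(X_r)$, since the latter keeps $\sum_{r \le i}\Delta_1(X_r)$ and $\sum_{r>i}\Delta_1(X_r)$ as complementary pieces of $\Delta_k(X)$ and makes the arithmetic cleaner. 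Everything else is a direct application of Lemma~\ref{lemma:fact}-style facts and the irreducibility definition.
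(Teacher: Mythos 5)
There is a genuine gap, and in fact the intermediate claim your argument hinges on is false. You form $C' = C_i \cup \{\mu_{i+1},\ldots,\mu_k\}$ and assert it has at most $k-1$ centers, but $|C_i| = i$ and you add $k-i$ means, so $|C'|$ is (up to) $k$, not $k-1$. Irreducibility therefore gives you nothing about $\Phi(C', X)$. Worse, the inequality you want to extract, $\sum_{r>i}\Delta_1(X_r) \geq \Omega(\veps)\,\Delta_k(X)$, cannot be true in general: using the invariant $P(i)$ you have $\Phi(C_i, X_1\cup\cdots\cup X_i) + \sum_{r>i}\Delta_1(X_r) \leq (1+\veps/16)\sum_{r\leq i}\Delta_1(X_r) + \sum_{r>i}\Delta_1(X_r) \leq (1+\veps/16)\Delta_k(X)$, which is \emph{strictly less} than the $(1+\veps)\Delta_k(X)$ you claim to derive. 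Conceptually the issue is that irreducibility prevents any $(k-1)$-center set from being cheap; it does not say that the $1$-means cost of the individual uncovered clusters is large --- an uncovered cluster can have $\Delta_1(X_r)$ arbitrarily close to $0$ (e.g.\ two coincident points far from everything else) while irreducibility still holds because it would be \emph{expensive to merge} that cluster into another, not because it is internally costly. So the strategy of lower-bounding $\Phi(C_i, X_{i+1}\cup\cdots\cup X_k)$ via $\Phi(C_i, X_r) \geq \Delta_1(X_r)$ cannot work.

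The paper argues by contradiction in the opposite direction: assume the ratio is less than $\veps/4$. Then $\Phi(C_i, X) = \Phi(C_i, \text{cov}) + \Phi(C_i, \text{uncov}) \leq \frac{1}{1-\veps/4}\Phi(C_i,\text{cov})$, and by the invariant $\Phi(C_i,\text{cov}) \leq (1+\veps/16)\sum_{j\leq i}\Delta_1(X_j) \leq (1+\veps/16)\Delta_k(X)$, so $\Phi(C_i, X) \leq \frac{1+\veps/16}{1-\veps/4}\Delta_k(X) \leq (1+\veps)\Delta_k(X)$. Since $C_i$ has $i \leq k-1$ centers, this exhibits a $(k-1)$-center solution of cost at most $(1+\veps)\Delta_k(X)$, contradicting $(k,\veps)$-irreducibility. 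Note that the relevant ``few-center set'' is $C_i$ itself, not a set augmented by optimal means, and the large quantity one exploits is $\Phi(C_i, \text{uncov})$ (assigning uncovered points to wrong centers is expensive), not $\sum_{r>i}\Delta_1(X_r)$.
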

\begin{proof}
For the sake of contradiction, assume that the above statement does not hold. Then we have:
\begin{eqnarray*}
\Phi(C_i, X) &=& \Phi(C_i, X_1\cup ...\cup X_i) + \Phi(C_i,X_{i+1} \cup ... \cup X_k) \\
&\leq& \Phi(C_i, X_1\cup ...\cup X_i) + \frac{(\veps/4)}{1 - (\veps/4)} \cdot \Phi(C_i, X_1\cup ...\cup X_i) \quad \textrm{(using our assumption)} \\
&=& \frac{1}{1 - \veps/4} \cdot \Phi(C_i, X_1\cup ...\cup X_i) \\
&\leq& \frac{1 + \veps/16}{1 - \veps/4} \cdot \sum_{j=1}^{i} \Delta_1(X_j) \quad \textrm{(using invariant)} \\
&\leq& (1 + \veps) \cdot \sum_{j=1}^{k} \Delta_1(X_j)
\end{eqnarray*}
This contradicts with the $(k, \veps)$-irreducibility of $X$.\qed
\end{proof}

The following simple corollary of the above lemma will be used in the analysis later.
\begin{corollary}\label{cor:1}
There exists an index $j \in \{i+1, ..., k\}$ such that $\frac{\Phi(C_i, X_j)}{\Phi(C_i, X)} \geq \frac{\veps}{4k}$.
\end{corollary}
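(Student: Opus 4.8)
The plan is a one-line averaging argument on top of the preceding lemma. Observe that the $k$-means cost additively decomposes over the Voronoi-type groups of points, so in particular
\[
\Phi(C_i, X_{i+1} \cup \cdots \cup X_k) = \sum_{j=i+1}^{k} \Phi(C_i, X_j).
\]
This sum has $k - i \leq k$ terms, so by the pigeonhole/averaging principle at least one index $j \in \{i+1, \ldots, k\}$ satisfies $\Phi(C_i, X_j) \geq \frac{1}{k - i} \cdot \Phi(C_i, X_{i+1} \cup \cdots \cup X_k) \geq \frac{1}{k} \cdot \Phi(C_i, X_{i+1} \cup \cdots \cup X_k)$.

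Now I would just invoke the lemma immediately preceding the corollary, which gives $\Phi(C_i, X_{i+1} \cup \cdots \cup X_k) \geq \frac{\veps}{4} \cdot \Phi(C_i, X)$. Combining the two displayed facts yields, for this particular $j$,
\[
\frac{\Phi(C_i, X_j)}{\Phi(C_i, X)} \;\geq\; \frac{1}{k}\cdot\frac{\Phi(C_i, X_{i+1} \cup \cdots \cup X_k)}{\Phi(C_i, X)} \;\geq\; \frac{\veps}{4k},
\]
which is exactly the claim. There is essentially no obstacle here: the only thing to be slightly careful about is that $\Phi(C_i, X)$ could in principle be zero, but in that degenerate case $C_i$ already achieves cost $0$ and the whole analysis is trivial, so one may harmlessly assume $\Phi(C_i, X) > 0$ throughout (or note that $P(i)$ together with irreducibility with $\veps > 0$ forces some uncovered cluster to have positive cost). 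Thus the entire corollary is just "decompose the cost of the uncovered clusters into $\le k$ pieces and take the largest piece," and no new idea beyond the lemma is needed.
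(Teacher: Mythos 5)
Your proposal is correct and is exactly the argument the paper has in mind: the paper simply calls this a "simple corollary of the above lemma," and the intended reasoning is the averaging/pigeonhole step you spell out, namely decomposing $\Phi(C_i, X_{i+1}\cup\cdots\cup X_k)$ into the $k-i\le k$ terms $\Phi(C_i,X_j)$ and taking the largest. Nothing further is needed.
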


The above corollary says that there is an uncovered cluster which will have a non-negligible representation in the set $S$ that is sampled in iteration $(i+1)$ of the algorithm {\tt Query-$k$-means}.
The next lemma shows that conditioned on sampling from an uncovered cluster $l \in \{i+1, ..., k\}$, the probability of sampling a point $x$ from $X_l$ is at least $\frac{\veps}{64}$ times its sampling probability if it were sampled uniformly from $X_l$ (i.e., with probability at least $\frac{\veps}{64} \cdot \frac{1}{m_l}$).
\footnote{This is Lemma~6 from \cite{jks}. We give the proof for self-containment.}

\begin{lemma}\label{lemma:cond-prob}
For any $l \in \{i+1, ..., k\}$ and $x \in X_{l}$,  $\frac{\Phi(C_i, \{x\})}{\Phi(C_i, X_l)} \geq \frac{\veps}{64} \cdot \frac{1}{m_l}$.
\end{lemma}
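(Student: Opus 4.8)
The plan is to reduce the claim, via Lemma~\ref{lemma:fact}, to a lower bound on $\Phi(C_i,\{x\})$ expressed in terms of the single quantity $t := \Phi(C_i,\{\cs{l}\}) = \min_{c\in C_i}\|\cs{l}-c\|^2$, and then to prove that lower bound by splitting on how far $x$ is from $\cs{l}$. Throughout we work under the conditioning on the invariant $\mathbf{P(i)}$, with the covered clusters relabeled as $X_1,\dots,X_i$ and $C_i=\{c_1,\dots,c_i\}$, where $c_r$ is good for $X_r$.

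First I would extract what $\mathbf{P(i)}$ gives about the geometry of $C_i$. From $\Phi(\{c_r\},X_r)\le (1+\veps/16)\Delta_1(X_r)$, Lemma~\ref{lemma:fact} and $\Delta_1(X_r)=m_r r_r$ yield $\|c_r-\cs{r}\|^2\le \frac{\veps}{16}r_r$, so every center of $C_i$ lies very close to the mean of a distinct covered cluster. Then, for each $r\le i$, the approximate triangle inequality (Lemma~\ref{lemma:approx}) together with the irreducibility gap $\|\cs{l}-\cs{r}\|^2\ge \veps(r_l+r_r)$ (Lemma~\ref{lemma:query-1}) gives $\|\cs{l}-c_r\|^2\ge \frac12\|\cs{l}-\cs{r}\|^2-\|c_r-\cs{r}\|^2\ge \frac{\veps}{2}r_l+\frac{7\veps}{16}r_r$. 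Minimizing over $r\le i$ produces the two facts I will use: $t\ge \frac{\veps}{2}r_l$ (hence $r_l\le \frac{2t}{\veps}$), and $r_r\le \frac{16t}{7\veps}$ for every covered $r$. Also, taking $c^\star$ to be the center attaining the minimum defining $t$, Lemma~\ref{lemma:fact} gives $\Phi(C_i,X_l)\le \Phi(\{c^\star\},X_l)=\Delta_1(X_l)+m_l t=m_l(r_l+t)$. So it suffices to prove $\Phi(C_i,\{x\})\ge \frac{\veps}{64}(r_l+t)$; and since $r_l\le \frac{2t}{\veps}$ and $\veps\le 1$ we have $\frac{\veps}{64}(r_l+t)\le \frac{3t}{64}$, so it is enough to show $\Phi(C_i,\{x\})\ge \frac{3t}{64}$ for all $x\in X_l$.

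Now the case analysis. If $\|x-\cs{l}\|\le \frac{2}{3}\sqrt t$, then for every $c\in C_i$ we have $\|x-c\|\ge \|\cs{l}-c\|-\|\cs{l}-x\|\ge \sqrt t-\frac{2}{3}\sqrt t=\frac13\sqrt t$, so $\Phi(C_i,\{x\})\ge \frac{t}{9}\ge \frac{3t}{64}$. If instead $\|x-\cs{l}\|> \frac{2}{3}\sqrt t$, the bound just used is worthless, since a priori the nearest center could sit near $\cs{l}$. The structural point here is that $x\in X_l$ in the optimal Voronoi partition, so $\|x-\cs{r}\|\ge \|x-\cs{l}\|$ for every $r$ — in particular for the mean $\cs{r}$ of the covered cluster that the nearest center $c_x$ is good for — and $\|\cs{r}-c_x\|^2\le \frac{\veps}{16}r_r\le \frac{t}{7}$ by the bound from the previous paragraph. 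Hence $\|x-c_x\|\ge \|x-\cs{r}\|-\|\cs{r}-c_x\|\ge \|x-\cs{l}\|-\sqrt{t/7}> \bigl(\frac23-\frac{1}{\sqrt 7}\bigr)\sqrt t$, and a short computation gives $\bigl(\frac23-\frac{1}{\sqrt 7}\bigr)^2>\frac{3}{64}$, so $\Phi(C_i,\{x\})=\|x-c_x\|^2\ge \frac{3t}{64}$ again.

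The main obstacle — and the only place this is not a routine chain of (approximate) triangle inequalities — is the second case: one cannot lower bound $\Phi(C_i,\{x\})$ for $x\in X_l$ lying far from $\cs{l}$ merely from the fact that $C_i$ is far from $\cs{l}$. The resolution is to use that membership $x\in X_l$ in the optimal clustering forces $x$ to be at least as far from every other optimal mean $\cs{r}$ ($r\le i$) as from $\cs{l}$, while each center of $C_i$ is pinned to within squared distance $O(\veps r_r)=O(t)$ of such a mean; combining these two facts pushes $x$ away from all of $C_i$. The threshold $\frac{2}{3}\sqrt t$ is just a convenient choice; any value in a small interval makes both cases go through.
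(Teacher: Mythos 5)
Your reduction to a clean lower bound on $\Phi(C_i,\{x\})$, and the bound $\Phi(C_i,X_l)\le m_l(r_l+t)$ with $t:=\min_{c\in C_i}\|\cs{l}-c\|^2$, are both fine, and your case~1 is correct. The gap is the assertion ``$r_r\le \frac{16t}{7\veps}$ for every covered $r$.'' What the chain $\|\cs{l}-c_r\|^2\ge \frac{\veps}{2}r_l+\frac{7\veps}{16}r_r$ gives, after taking the minimum over $r$, is $r_{r^*}\le\frac{16t}{7\veps}$ \emph{only} for the index $r^*$ achieving $t=\|\cs{l}-c_{r^*}\|^2$; for an arbitrary covered $r$, all we get is $r_r\le\frac{16}{7\veps}\|\cs{l}-c_r\|^2$, and $\|\cs{l}-c_r\|^2$ can greatly exceed $t$. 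In case~2 the nearest center $c_x$ to $x$ need not be $c_{r^*}$, so the bound $\|\cs{r}-c_x\|^2\le t/7$ you use there is unjustified, and with the threshold fixed at $\frac23\sqrt t$ the inequality $\|x-c_x\|\ge\|x-\cs{l}\|-\sqrt{t/7}$ can be vacuous (the right side can be negative when $\|\cs{l}-c_x\|\gg\sqrt t$).

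The argument is salvageable with a local fix: work with $s:=\|\cs{l}-c_x\|\ge\sqrt t$ for the nearest center $c_x$ and its associated mean $\cs{r}$. You do have $\|\cs{r}-c_x\|^2\le\frac{\veps}{16}r_r\le\frac17 s^2$. Then split on $\|x-\cs{l}\|$ versus $\frac23 s$ (not $\frac23\sqrt t$): if $\|x-\cs{l}\|\le\frac23 s$ then $\|x-c_x\|\ge s-\|x-\cs{l}\|\ge\frac13 s\ge\frac13\sqrt t$; if $\|x-\cs{l}\|>\frac23 s$ then $\|x-c_x\|\ge\|x-\cs{r}\|-\|\cs{r}-c_x\|\ge\|x-\cs{l}\|-\frac{s}{\sqrt 7}>\left(\frac23-\frac{1}{\sqrt 7}\right)s\ge\left(\frac23-\frac{1}{\sqrt 7}\right)\sqrt t$, and $\left(\frac23-\frac{1}{\sqrt 7}\right)^2>\frac{3}{64}$. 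With that repair, your route is a legitimate alternative to the paper's: the paper instead fixes the single center $c_{t}$ nearest to $x$, upper-bounds $\Phi(C_i,X_l)$ by $\Phi(\{c_t\},X_l)$, and lower-bounds $\Phi(C_i,\{x\})$ directly via $\|x-\cs{t}\|\ge\|\cs{l}-\cs{t}\|/2$, expressing everything in terms of $\|\cs{l}-\cs{t}\|^2$ with no case split.
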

\begin{proof}
Let $t \in \{1, ..., i\}$ be the index such that $x$ is closest to $c_t$ among all centers in $C_i$. We have:
\begin{eqnarray*}
\Phi(C_i, X_l) &=& m_l \cdot r_l + m_l \cdot ||\cs{l} - c_t||^2 \quad \textrm{(using Lemma~\ref{lemma:fact})}\\
&\leq& m_l \cdot r_l + 2 m_l \cdot (||\cs{l} - \cs{t}||^2 + ||\cs{t} - c_t||^2) \quad \textrm{(using Lemma~\ref{lemma:approx})}\\
&\leq& m_l \cdot r_l + 2 m_l \cdot (||\cs{l} - \cs{t}||^2 + \frac{\veps}{16} \cdot r_t) \quad \textrm{(using invariant and Lemma~\ref{lemma:fact})}
\end{eqnarray*}
Also, we have:
\begin{eqnarray*}
\Phi(C_i, \{x\}) = ||x - c_t||^2 &\geq& \frac{||x - \cs{t}||^2}{2} - ||\cs{t} - c_t||^2 \quad \textrm{(using Lemma~\ref{lemma:approx})}\\
&\geq& \frac{||\cs{l} - \cs{t}||^2}{8} - ||\cs{t} - c_t||^2 \quad \textrm{(since $||x - \cs{t}|| \geq ||\cs{l} - \cs{t}||/2$)}\\
&\geq&  \frac{||\cs{l} - \cs{t}||^2}{8} - \frac{\veps}{16} \cdot r_t \quad \textrm{(using invariant and Lemma~\ref{lemma:fact})}\\
&\geq& \frac{||\cs{l} - \cs{t}||^2}{16} \quad \textrm{(using Lemma~\ref{lemma:query-1})}
\end{eqnarray*}
Combining the inequalities obtained above, we get the following:
\begin{eqnarray*}
\frac{\Phi(C_i, \{x\})}{\Phi(C_i, X_l)} &\geq& \frac{||\cs{l} - \cs{t}||^2}{16 \cdot m_l \cdot \left( r_l + 2 ||\cs{l} - \cs{t}||^2 + \frac{\veps}{8} \cdot r_t\right)} \\
&\geq& \frac{1}{16 \cdot m_l} \cdot \frac{1}{(1/\veps) + 2 + (1/8)} \geq \frac{\veps}{64} \cdot \frac{1}{m_l}
\end{eqnarray*}
This completes the proof of the lemma.\qed
\end{proof}

With the above lemmas in place, let us now get back to the inductive step of the proof. Let $J \subseteq \{i+1, ..., k\}$ denote the subset of indices (from the uncovered cluster indices) such that $\forall j \in J, \frac{\Phi(C_i, X_j)}{\Phi(C_i, X)} \geq \frac{\veps}{8k}$. For any index $j \in J$, let $Y_j \subseteq X_j$ denote the subset of points in $X_j$ such that $\forall y \in Y_j, \frac{\Phi(C_i, \{y\})}{\Phi(C_i, X_j)} \leq \frac{2}{m_j}$. That is, $Y_j$ consists of all the points such that the conditional probability of sampling any point $y$ in $Y_j$, given that a point is sampled from $X_j$, is upper bounded by $2/m_j$. Note that from Lemma~\ref{lemma:cond-prob}, the conditional probability of sampling a point $x$ from $X_j$, given that a point is sampled from $X_j$, is lower bounded by $\frac{\veps}{64} \cdot \frac{1}{m_j}$. This gives the following simple and useful lemma.
\begin{lemma}\label{lemma:9}
For all $j \in \{i+1, ..., k\}$ the following two inequalities hold:
\begin{enumerate}
\item $\frac{\Phi(C_i, Y_j)}{\Phi(C_i, X)} \geq \frac{\veps}{128} \cdot \frac{\Phi(C_i, X_j)}{\Phi(C_i, X)}$, and
\item For any $y \in Y_j$ and any $x \in X_j$, $\frac{\veps}{128} \cdot \Phi(C_i, \{y\}) \leq \Phi(C_i, \{x\})$.
\end{enumerate}
\end{lemma}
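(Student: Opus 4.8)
The plan is to derive both inequalities directly from the definition of $Y_j$ together with Lemma~\ref{lemma:cond-prob}; the only ingredient that is not completely immediate is a Markov-type lower bound on the size of $Y_j$, and even that is elementary.

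First I would establish that $|Y_j| > m_j/2$. Since $\Phi(C_i, X_j) = \sum_{x \in X_j} \Phi(C_i, \{x\})$, every point $x \in X_j \setminus Y_j$ contributes more than $\frac{2}{m_j} \cdot \Phi(C_i, X_j)$ to this sum (that is exactly the negation of the defining condition of $Y_j$), so $\Phi(C_i, X_j) \geq \sum_{x \in X_j \setminus Y_j} \Phi(C_i, \{x\}) > |X_j \setminus Y_j| \cdot \frac{2}{m_j} \cdot \Phi(C_i, X_j)$, which forces $|X_j \setminus Y_j| < m_j/2$ and hence $|Y_j| > m_j/2$ (in particular $Y_j \neq \emptyset$). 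Here I am implicitly using $\Phi(C_i, X_j) > 0$, which holds for an uncovered cluster; the degenerate case is trivial.

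For the second inequality I would just chain two bounds. For $y \in Y_j$ the definition gives $\Phi(C_i, \{y\}) \leq \frac{2}{m_j} \cdot \Phi(C_i, X_j)$, and for $x \in X_j$, since $j \in \{i+1, \dots, k\}$ is an uncovered index, Lemma~\ref{lemma:cond-prob} gives $\Phi(C_i, \{x\}) \geq \frac{\veps}{64} \cdot \frac{1}{m_j} \cdot \Phi(C_i, X_j)$. Therefore $\frac{\veps}{128} \cdot \Phi(C_i, \{y\}) \leq \frac{\veps}{128} \cdot \frac{2}{m_j} \cdot \Phi(C_i, X_j) = \frac{\veps}{64 m_j} \cdot \Phi(C_i, X_j) \leq \Phi(C_i, \{x\})$, which is part~(2).

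For the first inequality I would write $\Phi(C_i, Y_j) = \sum_{y \in Y_j} \Phi(C_i, \{y\})$ and lower bound each summand via Lemma~\ref{lemma:cond-prob} by $\frac{\veps}{64 m_j} \cdot \Phi(C_i, X_j)$; combined with $|Y_j| > m_j/2$ this yields $\Phi(C_i, Y_j) > \frac{m_j}{2} \cdot \frac{\veps}{64 m_j} \cdot \Phi(C_i, X_j) = \frac{\veps}{128} \cdot \Phi(C_i, X_j)$, and dividing both sides by $\Phi(C_i, X)$ gives part~(1). I do not anticipate any real obstacle: everything reduces to Lemma~\ref{lemma:cond-prob} and a one-line counting argument; the only point requiring a little care is checking that the cluster index lies in the uncovered range $\{i+1, \dots, k\}$ so that Lemma~\ref{lemma:cond-prob} is applicable, which is exactly the hypothesis of the statement.
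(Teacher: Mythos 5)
Your proof is correct and follows the same route as the paper's: establish $|Y_j| \geq m_j/2$ (the paper asserts this without proof; your Markov-style counting argument is exactly the intended justification), then combine with the pointwise lower bound $\Phi(C_i,\{x\}) \geq \frac{\veps}{64 m_j}\Phi(C_i,X_j)$ from Lemma~\ref{lemma:cond-prob} for part~(1), and pair that lower bound with the defining upper bound $\Phi(C_i,\{y\}) \leq \frac{2}{m_j}\Phi(C_i,X_j)$ on $Y_j$ for part~(2). No gaps; you have simply spelled out the one-line argument the paper leaves implicit.
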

\begin{proof}
Inequality (1) follows from the fact that $|Y_j| \geq m_j/2$, and $\frac{\Phi(C_i, \{y\})}{\Phi(C_i, X_j)} \geq \frac{\veps}{64} \cdot \frac{1}{m_j}$ for all $y \in X_j$. Inequality (2) follows from the fact that for all $x \in X_j, \frac{\Phi(C_i, \{x\})}{\Phi(C_i, X_j)} \geq \frac{\veps}{64} \cdot \frac{1}{m_j}$ and for all $y \in Y_j, \frac{\Phi(C_i, \{y\})}{\Phi(C_i, X_j)} \leq \frac{2}{m_j}$.\qed
\end{proof}

Let us see the outline of our plan before continuing with our formal analysis. What we hope to get in line (3.2) of the algorithm is a point $s$ that belongs to one of the uncovered clusters with index in the set $J$. That is, $s$ belongs to an uncovered cluster that is likely to have a good representation in the $D^2$-sampled set $S$ obtained in line (3.1). In addition to $s$ belonging to $X_j$ for some $j \in J$, we would like $s$ to belong to $Y_j$. This is crucial for the uniform sampling in line (3.3) to succeed. We will now show that the probability of $s$ returned in line (3.2) satisfies the above conditions is large.

\begin{lemma}
Let $S$ denote the $D^2$-sample obtained w.r.t. center set $C_i$ in line (3.1) of the algorithm.
\[
\pr[\exists j \in J \textrm{ such that $S$ does not contain any point from $Y_j$}] \leq \frac{1}{4k}.
\]
\end{lemma}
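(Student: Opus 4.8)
The plan is to bound, for each fixed $j \in J$, the probability that the $D^2$-sampled multiset $S$ misses $Y_j$ entirely, and then take a union bound over all indices in $J$. First I would observe that a single point $D^2$-sampled from $X$ with respect to $C_i$ lands in $Y_j$ with probability exactly $\Phi(C_i, Y_j)/\Phi(C_i, X)$. By part (1) of Lemma~\ref{lemma:9}, this is at least $\frac{\veps}{128} \cdot \frac{\Phi(C_i, X_j)}{\Phi(C_i, X)}$; and since $j \in J$ means $\frac{\Phi(C_i, X_j)}{\Phi(C_i, X)} \geq \frac{\veps}{8k}$ by definition of $J$, the probability that one $D^2$-sample hits $Y_j$ is at least $\frac{\veps^2}{1024\, k}$.

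Next, the $N$ points of $S$ are drawn independently, so the probability that none of them falls in $Y_j$ is at most $\left(1 - \frac{\veps^2}{1024\, k}\right)^{N} \leq e^{-N\veps^2/(1024\, k)}$. Substituting the value $N = \frac{2^{12} k^3}{\veps^2}$ from the constants table, the exponent becomes $\frac{2^{12} k^3}{1024\, k} = 4k^2$, so this probability is at most $e^{-4k^2}$.

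Finally, since $J \subseteq \{i+1, \dots, k\}$ has at most $k$ elements, a union bound gives $\pr[\exists j \in J \textrm{ such that } S \cap Y_j = \emptyset] \leq k \cdot e^{-4k^2}$, and $k e^{-4k^2} \leq \frac{1}{4k}$ for every integer $k \geq 1$ (the inequality is very loose, e.g.\ already $e^{-4} < 1/4$ at $k=1$, and the left side decays super-exponentially). This yields the claimed bound. There is essentially no obstacle here: the only points requiring care are invoking the correct half of Lemma~\ref{lemma:9} (part (1), the statement about $\Phi(C_i, Y_j)$) and using the $\veps/(8k)$ threshold in the definition of $J$ rather than the $\veps/(4k)$ threshold of Corollary~\ref{cor:1} (the latter is what ensures $J \neq \emptyset$, which is needed elsewhere but not for this lemma).
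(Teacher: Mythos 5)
Your proof is correct and follows essentially the same route as the paper's: bound the per-$j$ miss probability via Lemma~\ref{lemma:9}(1) together with the $\veps/(8k)$ threshold defining $J$, use independence of the $N$ $D^2$-samples to get an exponential tail, and finish with a union bound over $|J| \le k$. The arithmetic ($e^{-4k^2}$ per index, $k e^{-4k^2} \le 1/(4k)$) matches the paper's $1/(4k^2)$-per-index bound, just stated more explicitly.
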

\begin{proof}
We will first get bound on the probability for a fixed $j \in J$ and then use the union bound. From property (1) of Lemma~\ref{lemma:9}, we have that for any $j \in J$, $\frac{\Phi(C_i, Y_j)}{\Phi(C_i, X)} \geq \frac{\veps}{128} \cdot \frac{\veps}{8k} = \frac{\veps^2}{(2^{10})k}$. Since the number of sampled points is $N = \frac{(2^{12})k^3}{\veps^2}$, we get that the probability that $S$ has no points from $Y_j$ is at most $\frac{1}{4k^2}$. Finally, using the union bound, we get the statement of the lemma.\qed
\end{proof}

\begin{lemma}
Let $S$ denote the $D^2$-sample obtained w.r.t. center set $C_i$ in line (3.1) of the algorithm and let $S_j$ denote the representatives of $X_j$ in $S$. Let $max = \arg\max_{j \in \{i+1, ..., k\}} |S_j|$. Then $\pr[max \notin J] \leq \frac{1}{4k}$.
\end{lemma}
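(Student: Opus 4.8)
The plan is to exploit the factor-two gap between the threshold $\veps/(8k)$ appearing in the definition of $J$ and the stronger lower bound $\veps/(4k)$ guaranteed by Corollary~\ref{cor:1}. Throughout, condition on $C_i$ (equivalently, on the event $P(i)$). The multiset $S$ consists of $N$ points drawn independently from the $D^2$-distribution with respect to $C_i$, so for each index $j$ the count $|S_j| = |S \cap X_j|$ is a sum of $N$ independent $\{0,1\}$ random variables, each equal to $1$ with probability $p_j := \Phi(C_i, X_j)/\Phi(C_i, X)$; in particular $\E[|S_j|] = N p_j$. By Corollary~\ref{cor:1}, fix an index $j^\star \in \{i+1,\dots,k\}$ with $p_{j^\star} \geq \veps/(4k)$; since $\veps/(4k) > \veps/(8k)$ we have $j^\star \in J$. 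By definition of $J$, every $j \in \{i+1,\dots,k\} \setminus J$ satisfies $p_j < \veps/(8k)$.

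Next I would fix the threshold $\tau := \tfrac{3 N \veps}{16 k}$, which lies strictly between $N \cdot \tfrac{\veps}{8k}$ and $N \cdot \tfrac{\veps}{4k}$, so that $\E[|S_{j^\star}|] \geq \tfrac{4}{3}\tau$ while $\E[|S_j|] < \tfrac{2}{3}\tau$ for all $j \in \{i+1,\dots,k\}\setminus J$. A Chernoff lower-tail bound then gives $\pr[|S_{j^\star}| < \tau] \leq \pr[|S_{j^\star}| \leq \tfrac{3}{4}\E[|S_{j^\star}|]] \leq \exp(-\E[|S_{j^\star}|]/32)$, and a Chernoff upper-tail bound gives, for each $j \in \{i+1,\dots,k\}\setminus J$, $\pr[|S_j| \geq \tau] \leq \exp(-c\tau)$ for an absolute constant $c > 0$ (the regime where $\E[|S_j|]$ is much smaller than $\tau$ only makes this smaller). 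Substituting $N = 2^{12} k^3/\veps^2$ makes both $\E[|S_{j^\star}|]/32$ and $\tau$ of order $k^2/\veps \geq k^2$, so each of these probabilities is at most, say, $\tfrac{1}{8k^2}$.

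Finally, a union bound over $j^\star$ together with the fewer than $k$ indices in $\{i+1,\dots,k\}\setminus J$ shows that, with probability at least $1 - \tfrac{1}{4k}$, we simultaneously have $|S_{j^\star}| \geq \tau$ and $|S_j| < \tau$ for every $j \in \{i+1,\dots,k\}\setminus J$. On this event, since $j^\star$ is among the indices over which the maximum defining $max$ is taken, $|S_{max}| \geq |S_{j^\star}| \geq \tau > |S_j|$ for every $j \notin J$, and hence $max \in J$. This establishes $\pr[max \notin J] \leq \tfrac{1}{4k}$.

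I expect no serious obstacle: the argument is a routine application of concentration, and the only point that needs care is the bookkeeping that yields a genuine multiplicative gap — one must invoke the $\veps/(4k)$ bound of Corollary~\ref{cor:1} for $j^\star$ rather than the weaker $\veps/(8k)$ that merely defines membership in $J$, and one must observe that $j^\star$ is itself a candidate for the argmax so that $|S_{max}| \geq |S_{j^\star}|$. Because $N$ is cubic in $k$, there is ample slack for the two exponentially small tails to sum to at most $1/(4k)$.
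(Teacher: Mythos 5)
Your proof is correct and follows essentially the same route as the paper: you pick the index $j^\star$ with $D^2$-mass at least $\veps/(4k)$ from Corollary~\ref{cor:1}, set the threshold $\tau = 3N\veps/(16k)$ (which is exactly the paper's $3\alpha/4$ with $\alpha = N\veps/(4k)$), apply Chernoff lower- and upper-tail bounds to $|S_{j^\star}|$ and to each $|S_j|$ with $j\notin J$ respectively, and close with a union bound over at most $k$ bad events. The only cosmetic difference is that you spell out $\tau$ explicitly and argue via $j^\star$ being a candidate for the argmax, whereas the paper phrases the final step as bounding $\pr[|S_{j^\star}|\leq 3\alpha/4] + \pr[\exists r\notin J : |S_r| > 3\alpha/4]$; these are the same argument.
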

\begin{proof}
From Corollary~\ref{cor:1}, we know that there is an index $j \in \{i+1, ..., k\}$ such that $\frac{\Phi(C_i, X_j)}{\Phi(C_i, X)} \geq \frac{\veps}{4k}$. 
Let $\alpha = N \cdot \frac{\veps}{4k}$. 
The expected number of representatives from $X_j$ in $S$ is at least $\alpha$. 
So, from Chernoff bounds,
we have:
\[
\pr[|S_j| \leq 3\alpha/4] \leq e^{-\alpha/32}
\]
On the other hand, for any $r \in \{i+1, ..., k\} \setminus J$, the expected number of points in $S$ from $X_r$ is at most $\frac{\veps}{8k} \cdot N = \alpha/2$. So, from Chernoff bounds, 
we have:
\[
\pr[|S_r| > 3\alpha/4] \leq e^{-\alpha/24}
\]
So, the probability that there exists such an $r$ is at most $k \cdot e^{-\alpha/24}$ by union bound.
Finally, the probability that $max \notin J$ is at most $\pr[|S_j| \leq 3\alpha/4] + \pr[\exists r \in \{i+1, ..., k\}\setminus J | |S_r|> 3\alpha/4]$ which is at most $\frac{1}{4k}$ due to our choice of $N = \frac{(2^{12})k^3}{\veps^2}$.\qed
\end{proof}

From the previous two lemmas, we get that with probability at least $(1 - \frac{1}{2k})$, the $s$ returned in line (3.2) belongs to $Y_j$ for some $j \in J$.
Finally, we will need the following claim to argue that the set $T$ returned in line (3.3) is a uniform sample from one of the sets $X_j$ for $j \in \{i+1, ..., k\}$.

\begin{lemma}
Let $S$ denote the $D^2$-sample obtained w.r.t. center set $C_i$ in line (3.1) and $s$ be the point returned in line (3.2) of the algorithm. Let $j$ denote the index of the cluster to which $s$ belongs. If $j \in J$ and $s \in Y_j$, then with probability at least $(1-\frac{1}{4k})$, $T$ returned in line (3.3) is a uniform sample from $X_j$ with size at least $\frac{64k}{\veps}$.
\end{lemma}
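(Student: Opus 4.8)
The plan is to establish two separate facts under the hypotheses $j\in J$ and $s\in Y_j$: (i) the multi-set $T$ output by {\tt UniformSample} is \emph{always} a uniform sample from $X_j$, because the retention step exactly cancels the bias of $D^2$-sampling; and (ii) with probability at least $1-\frac{1}{4k}$ its size is at least $M=\frac{64k}{\veps}$, which follows from a Chernoff bound.

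For (i), first I would check that $\frac{\veps}{128}\cdot\frac{\Phi(C_i,\{s\})}{\Phi(C_i,\{x\})}$ is a legitimate probability (at most $1$) for every point $x\in X_j$ that {\tt UniformSample} can sample: this is exactly part (2) of Lemma~\ref{lemma:9} applied with $y=s\in Y_j$, and $\Phi(C_i,\{x\})>0$ since $X_j$ is uncovered so contains no point of $C_i$. Then, fixing $x\in X_j$ and looking at a single iteration of the for-loop, the probability that $x$ is the sampled point \emph{and} is retained equals
\[
\frac{\Phi(C_i,\{x\})}{\Phi(C_i,X)}\cdot\frac{\veps}{128}\cdot\frac{\Phi(C_i,\{s\})}{\Phi(C_i,\{x\})}=\frac{\veps}{128}\cdot\frac{\Phi(C_i,\{s\})}{\Phi(C_i,X)},
\]
which does not depend on $x$; call this common value $q$. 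Since only points passing {\tt SameCluster}$(s,\cdot)$ are ever inserted, $T\subseteq X_j$ always, and each iteration independently adds either nothing (w.p.\ $1-m_jq$) or a uniformly random element of $X_j$ (w.p.\ $m_jq$). Hence, conditioned on $|T|=t$, the multi-set $T$ consists of $t$ i.i.d.\ uniform samples from $X_j$ — this is the precise sense in which $T$ is ``a uniform sample from $X_j$'' — and $|T|\sim\mathrm{Bin}(L,m_jq)$.

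For (ii), I would lower bound $m_jq=m_j\cdot\frac{\veps}{128}\cdot\frac{\Phi(C_i,\{s\})}{\Phi(C_i,X)}$: Lemma~\ref{lemma:cond-prob} with $l=j$, $x=s$ gives $\Phi(C_i,\{s\})\ge\frac{\veps}{64}\cdot\frac{\Phi(C_i,X_j)}{m_j}$, and $j\in J$ gives $\frac{\Phi(C_i,X_j)}{\Phi(C_i,X)}\ge\frac{\veps}{8k}$, so $m_jq\ge\frac{\veps^3}{2^{16}k}$. Therefore $\E[|T|]=L\,m_jq\ge\frac{2^{23}k^2}{\veps^4}\cdot\frac{\veps^3}{2^{16}k}=\frac{128k}{\veps}=2M$, and a standard multiplicative Chernoff bound on the lower tail of the binomial $|T|$ yields $\pr[|T|<M]\le\pr\big[|T|<\tfrac12\E|T|\big]\le e^{-\E|T|/8}\le e^{-16k/\veps}\le\frac{1}{4k}$ for all $k\ge1$ and $0<\veps\le\frac12$, which is the claim.

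The only conceptual step — hence the main obstacle — is the cancellation in the display above: once one notices that retaining a $D^2$-sampled point with probability proportional to $1/\Phi(C_i,\{x\})$ flattens the per-round acceptance probability over $X_j$, the rest is bookkeeping with the constants $N,M,L$ and a routine concentration estimate. The one point needing mild care is stating the conclusion correctly, namely separating ``$T$ is uniform'' (true unconditionally, in the sense that conditioned on its size it is i.i.d.\ uniform from $X_j$) from ``$|T|\ge M$'' (the high-probability event).
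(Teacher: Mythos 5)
Your proof is correct and follows essentially the same route as the paper's: define per-iteration random variables, verify via part (2) of Lemma~\ref{lemma:9} that the retention probability is a legitimate probability, observe that the $\Phi(C_i,\{x\})$ factors cancel so that each element of $X_j$ is equally likely to be added in a given round, and then apply Lemma~\ref{lemma:cond-prob} together with $j\in J$ to lower-bound the per-round success probability by $\frac{\veps^3}{2^{16}k}$, whence $\E[|T|]\ge 2M$ and a multiplicative Chernoff bound finishes. Your explicit remark that ``uniform sample'' should be read as ``conditioned on $|T|=t$, $T$ is $t$ i.i.d.\ uniform draws from $X_j$'' is a welcome clarification of what the paper leaves implicit, but it is not a different argument.
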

\begin{proof}
Consider the call to sub-routine {\tt UniformSample($X, C_i, s$)} with $s$ as given in the statement of the lemma. If $j$ is the index of the cluster to which $s$ belongs, then $j \in J$ and $s \in Y_j$. Let us define $L$ random variables $Z_1, ..., Z_L$ one for every iteration of the sub-routine. These random variables are defined as follows: for any $r \in [1,L]$, if the sampled point $x$ belongs to the same cluster as $s$ and it is picked to be included in multi-set $S$, then $Z_r  = x$, otherwise $Z_r = \bot$. We first note that for any $r$ and any $x, y \in X_j$, $\pr[Z_r = x] = \pr[Z_r = y]$.
This is because for any $x \in X_j$, we have $\pr[Z_r = x] = \frac{\Phi(C_i, \{x\})}{\Phi(C_i, X)} \cdot \frac{\frac{\veps}{128}  \cdot \Phi(C_i, \{s\})}{\Phi(C_i, \{x\})} = \frac{\veps}{128} \cdot \frac{\Phi(C_i, \{s\})}{\Phi(C_i, X)}$. It is important to note that $\frac{\frac{\veps}{128}  \cdot \Phi(C_i, \{s\})}{\Phi(C_i, \{x\})} \leq 1$ from property (2) of Lemma~\ref{lemma:9} and hence valid in the probability calculations above.

Let us now obtain a bound on the size of $T$. Let $T_r = I(Z_r)$ be the indicator variable that is $1$ if $Z_r \neq \bot$ and $0$ otherwise. Using the fact that $j \in J$, we get that for any $r$:
\[
\E[T_r] = \pr[T_r = 1] = \frac{\veps}{128} \cdot \frac{\sum_{x \in X_j} \Phi(C_i, \{s\})}{\Phi(C_i, X)} \geq \frac{\veps}{128} \cdot \frac{\veps}{8k} \cdot \frac{\veps}{64} = \frac{\veps^3}{(2^{16})k}.
\]
Given that $L = \frac{2^{23}k^2}{\veps^4}$, applying Chernoff bounds, 
we get the following:
\[
\pr \left[|T| \geq \frac{64k}{\veps}\right] = 1-\pr \left[|T| \leq \frac{64k}{\veps} \right] \geq \left(1 - \frac{1}{4k} \right)
\]
This completes the proof of the lemma.\qed
\end{proof}

Since a suitable $s$ (as required by the above lemma) is obtained in line (3.2) with probability at least $(1-\frac{1}{2k})$, the probability that $T$ obtained in line (3.3) is a uniform sample from some uncovered cluster $X_j$ is at least $(1-\frac{1}{2k}) \cdot (1 - \frac{1}{4k})$.
Finally, the probability that the centroid $\mu(T)$ of the multi-set $T$ that is obtained is a good center for $X_j$ is at least $(1 - \frac{1}{4k})$ using Inaba's lemma.
Combining everything, we get that with probability at least $(1 - \frac{1}{k})$ an uncovered cluster will be covered in the $i^{th}$ iteration.
This completes the inductive step and hence the approximation guarantee of $(1+\veps)$ holds for any dataset that satisfies the $(k, \veps)$-irreducibility assumption.
For the number of queries and running time, note that every time sub-routine {\tt UncoveredCluster} is called, it uses at most $kN$ same cluster queries. For the subroutine {\tt UniformSample}, the number of same-cluster queries made is $L$. So, the total number of queries is $O(k (k N + L)) = O(k^5/\veps^4)$.
More specifically, we have proved the following theorem.

\begin{theorem}\label{thm:main-query-irreducible}
Let $0 < \veps \leq 1/2$, $k$ be any positive integer, and $X \subseteq \R^d$ such that $X$ is $(k, \veps)$-irrducible.
Then {\tt Query-$k$-means($X, k, \veps$)} runs in time $O(nd k^5/\veps^4)$ and with probability at least $(1/4)$ outputs a center set $C$ such that $\Phi(C, X) \leq (1 + \veps) \cdot \Delta_k(X)$.
Moreover, the number of same-cluster queries used by {\tt Query-$k$-means($X, k, \veps$)} is $O(k^5/\veps^4)$.
\end{theorem}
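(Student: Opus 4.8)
The plan is to establish the $(1+\veps)$-approximation guarantee by proving the invariant $\mathbf{P}(i)$ for all $i$ up to $k$ (with probability at least $(1-1/k)^k$), and then to read off the query and running-time bounds directly from the pseudocode. All the quantitative ingredients have already been assembled above: Lemma~\ref{lemma:query-1} ($\veps$-separation of the cluster means relative to their mean radii), Corollary~\ref{cor:1} (some uncovered cluster carries an $\Omega(\veps/k)$ fraction of the $D^2$-mass), Lemma~\ref{lemma:cond-prob} (a lower bound on the conditional sampling probability of any point in an uncovered cluster), Lemma~\ref{lemma:9} (the ``nice'' subset $Y_j$ contains at least half of $X_j$ and each of its points is cheap in the sense of property (2)), the three lemmas bounding the behaviour of the $D^2$-sample $S$ and of {\tt UniformSample}, and Inaba's Lemma~\ref{lemma:inaba}; what remains is to chain them inside one induction on $i$ and then do the bookkeeping.

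First I would run the induction. The base case $\mathbf{P}(0)$ holds vacuously, since $C_0=\emptyset$. For the inductive step I would condition on $\mathbf{P}(i)$, relabel the $i$ already-covered clusters as $X_1,\dots,X_i$, and analyse iteration $i+1$. By the first two of the three ``$S$'' lemmas the point $s$ returned in line (3.2) lies in $Y_j$ for some $j\in J$ except with probability at most $1/(2k)$: one lemma ensures that $S$ meets every $Y_j$ with $j\in J$, and the other that the largest uncovered part of $S$ corresponds to an index in $J$. Conditioned on $s\in Y_j$, the third ``$S$'' lemma shows that $T$ returned in line (3.3) is a genuine uniform sample from $X_j$ of size at least $M=64k/\veps$ except with probability $1/(4k)$; that $s\in Y_j$ is exactly what forces the acceptance probability $\frac{\veps}{128}\cdot\frac{\Phi(C_i,\{s\})}{\Phi(C_i,\{x\})}$ to stay $\leq 1$ (property (2) of Lemma~\ref{lemma:9}), and $j\in J$ is what lower-bounds $\E[|T|]$ so that Chernoff applies. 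Finally I would invoke Inaba's Lemma~\ref{lemma:inaba} with $\delta=1/(4k)$, so that $\delta M=16/\veps$ and the conclusion reads $\Phi(\{\mu(T)\},X_j)\leq(1+\veps/16)\Delta_1(X_j)$ except with probability $1/(4k)$. A union bound over the three failure events gives that $X_j$ becomes a newly covered cluster of the required quality with probability at least $1-1/(2k)-1/(4k)-1/(4k)=1-1/k$, hence $\mathbf{P}(i+1)$ holds with probability at least $(1-1/k)^{i+1}$. Taking $i=k$ and using $(1-1/k)^k\geq 1/4$ for $k\geq 2$ (the case $k=1$ being a single application of Inaba's lemma to a uniform sample), $\mathbf{P}(k)$ holds with probability at least $1/4$, and on that event $\Phi(C_k,X)\leq\sum_r\Phi(\{c_r\},X_r)\leq(1+\veps/16)\sum_r\Delta_1(X_r)=(1+\veps/16)\Delta_k(X)\leq(1+\veps)\Delta_k(X)$.

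Then I would count resources. There are $k$ outer iterations; in each, {\tt UncoveredCluster} makes at most $kN$ same-cluster queries (each of the $N$ sampled points is compared against one representative per index) and {\tt UniformSample} makes $L$ queries, so the total is $O(k(kN+L))=O(k^5/\veps^4)$ after substituting $N=\Theta(k^3/\veps^2)$ and $L=\Theta(k^2/\veps^4)$. For the running time, each $D^2$-sample is drawn in $O(nd)$ time (maintaining, per iteration, the distance of every point to $C$ and a cumulative distribution), at most $O(N+L)$ samples are drawn per outer iteration, and the partitioning inside {\tt UncoveredCluster} costs $O(kN)$; over $k$ iterations this is $O(ndk(N+L)+k^2N)=O(ndk^5/\veps^4)$.

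The step I expect to be the main obstacle is the conditional probabilistic analysis in the inductive step: making the events line up so that ``$s\in Y_j$ for some $j\in J$'' is precisely the hypothesis the {\tt UniformSample} lemma consumes --- it simultaneously certifies that the retention probabilities are legitimate (property (2) of Lemma~\ref{lemma:9}) and that $|T|$ concentrates above $M$ --- and choosing the constants $N,M,L$ so that the three per-iteration failure probabilities sum to at most $1/k$ while $\delta M=16/\veps$ makes Inaba's lemma yield a factor $(1+\veps/16)\leq(1+\veps)$. Once the constants are pinned down consistently, the rest is a routine assembly of the lemmas already in hand.
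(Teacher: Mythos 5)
Your proposal is correct and follows essentially the same path as the paper's proof: an induction on the invariant $\mathbf{P}(i)$, driven by the two lemmas guaranteeing that $s$ lands in $Y_j$ for some $j\in J$, the lemma certifying that {\tt UniformSample} returns a genuine uniform sample of size at least $M$, and an application of Inaba's lemma with $\delta=1/(4k)$ to get the $(1+\veps/16)$ factor, followed by the $(1-1/k)^k\geq 1/4$ amplification and the $O(k(kN+L))$ query count. The only cosmetic difference is that you handle the per-iteration success via a union bound rather than multiplying $(1-1/(2k))(1-1/(4k))(1-1/(4k))$ as the paper does; both give at least $1-1/k$.
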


To complete the proof of Theorem~\ref{thm:main-query}, we must remove the irreducibility assumption in the above theorem.
We do this by considering the following two cases:
\begin{enumerate}
\item Dataset $X$ is $(k, \frac{\veps}{(4 + \veps/2)k})$-irreducible.

\item Dataset $X$ is not $(k, \frac{\veps}{(4 + \veps/2)k})$-irreducible.
\end{enumerate}

In the former case, we can apply Theorem~\ref{thm:main-query-irreducible} to obtain Theorem~\ref{thm:main-query}.
Now, consider the latter case.
Let $1 < i \leq k$ denote the largest index such that $X$ is $(i, \frac{\veps}{(1+\veps/2)k})$-irreducible, otherwise $i = 1$. Then we have:
\[
\Delta_i(X) \leq \left( 1 + \frac{\veps}{(4+\veps/2)k}\right)^{k-i} \cdot \Delta_k(X) \leq \left(1 + \frac{\veps}{4} \right) \cdot \Delta_k(X).
\]
This means that a $(1 + \veps/4)$-approximation for the $i$-means problem on the dataset $X$ gives the desired approximation for the $k$-means problem.
Note that our approximation analysis works for the $i$-means problem with respect to the algorithm being run only for $i$ steps in line (3) (instead of $k$).
That is, the centers sampled in the first $i$ iterations of the algorithm give a $(1 + \veps/16)$-approximation for the $i$-means problem for any fixed $i$.
This simple observation is sufficient for Theorem~\ref{thm:main-query}.
Note since Theorem~\ref{thm:main-query-irreducible} is used with value of the error parameter as $O(\veps/k)$, the bounds on the query and running time get multiplied by a factor of $k^4$.

\section{Query Lower Bound (proof of Theorem~\ref{thm:main2})}\label{sec:main2}
In this section, we will obtain a conditional lower bound on the number of same-cluster queries assuming the Exponential Time Hypothesis (ETH). This hypothesis has been used to obtain lower bounds in various different contexts (see \cite{m16} for reference). We start by stating the Exponential Time Hypothesis (ETH).

\begin{quote}
\underline{\bf Hypothesis 1} {\it (Exponential Time Hypothesis (ETH)\cite{IP01,IPZ01})}: There does not exist an algorithm that can decide whether any $3$-SAT formula with $m$ clauses is satisfiable with running time $2^{o(m)}$.
\end{quote}

Since we would like to obtain lower bounds in the approximation domain, we will need a {\em gap} version of the above ETH hypothesis. The following version of the PCP theorem will be very useful in obtaining a gap version of ETH.

\begin{theorem}[Dinur's PCP Theorem~\cite{Din07}]
For some constants $\veps, d > 0$, there exists a polynomial time reduction that takes a $3$-SAT formula $\psi$ with $m$ clauses and produces another $3$-SAT formula $\phi$ with $m' = O(m\ polylog\ m)$ clauses such that:
\begin{itemize}
\item If $\psi$ is satisfiable, then $\phi$ is satisfiable,
\item if $\psi$ is unsatisfiable, then $val(\phi) \leq 1 - \veps$, and
\item each variable of $\phi$ appears in at most $d$ clauses.
\end{itemize}
Here $val(\phi)$ denotes the maximum fraction of clauses of $\phi$ that are satisfiable by any assignment.
\end{theorem}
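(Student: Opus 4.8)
The plan is to follow Dinur's combinatorial gap-amplification proof of the PCP theorem, while keeping track of two extra bookkeeping items that the statement insists on: the nearly-linear size blowup and the bounded-occurrence property. First I would recast the input $3$-SAT formula $\psi$ as a \emph{constraint graph}, i.e.\ a $2$-CSP over a fixed constant-size alphabet $\Sigma_0$ whose variables are vertices, whose constraints are bounded-arity edges, and whose unsat value $\mathrm{gap}(G)$ is the minimum fraction of constraints violated by any assignment. Under a standard encoding one gets $\psi$ satisfiable $\iff \mathrm{gap}(G)=0$ and $\psi$ unsatisfiable $\iff \mathrm{gap}(G)\ge 1/|E(G)| = \Omega(1/m)$, with perfect completeness for free and $|E(G)| = O(m)$ after an initial degree-reduction.

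The core is the iterated amplification step, which turns a constraint graph $G$ with alphabet $\Sigma_0$, degree $d$, $s$ constraints and gap $\veps$ into a new constraint graph $G'$ with the \emph{same} alphabet $\Sigma_0$ and degree $d$, with $s' \le C\cdot s$ for an absolute constant $C$, and with $\mathrm{gap}(G')\ge \min(2\veps,\alpha)$ for an absolute constant $\alpha>0$. This step decomposes into three sub-steps: (i) \emph{preprocessing} -- make $G$ regular and a good constant-degree expander by replacing each vertex with an expander cloud carrying equality constraints and superimposing an expander, at a constant cost in both size and gap; (ii) \emph{graph powering} -- replace $G$ by $G^t$ for a suitable constant $t$, whose vertices carry assignments to radius-$\Theta(t)$ balls and whose edges are length-$t$ walks checked for local consistency, so that expansion forces the gap to grow by $\Theta(\sqrt t)$ up to the cap $\alpha$, at the cost of blowing the alphabet up to $|\Sigma_0|^{d^{O(t)}}$ and the size by $d^{O(t)}$, both still constant since $t$ is constant; (iii) \emph{alphabet reduction} -- compose with a fixed constant-size assignment tester (PCP of proximity) to bring the alphabet back down to $\Sigma_0$, again at a constant cost in size and gap. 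Choosing $t$ large enough that the net effect on the gap is a factor at least $2$ below the cap yields the claimed round.

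Starting from $\mathrm{gap}=\Omega(1/m)$ and iterating $O(\log m)$ times drives the gap up to the constant $\alpha$; a careful accounting of the per-round size growth (this is the part of Dinur's argument that needs the delicate bookkeeping in \cite{Din07} rather than a naive product of constants) gives a constraint graph with $O(m\cdot\mathrm{polylog}\,m)$ constraints, constant alphabet, constant degree $d$, constant gap $\alpha$, and still perfect completeness. Finally I would translate this $2$-CSP back into a $3$-CNF formula $\phi$: expand each constant-arity constraint, via a standard Tseitin/CNF encoding, into a constant number of $3$-clauses, so that any violated constraint still forces at least one violated clause; the gap stays a constant (after rescaling $\veps$), the number of clauses remains $O(m\cdot\mathrm{polylog}\,m)$, and the constant degree of the constraint graph becomes exactly the bounded-occurrence parameter $d$ of $\phi$. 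The three bullet points of the statement then follow.

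I expect the main obstacle to be the alphabet-reduction/composition sub-step: one needs a self-contained \emph{constant-size} assignment tester and must verify that composing it into every edge of $G^t$ preserves perfect completeness while losing only a constant factor in soundness -- this is the least elementary ingredient. A secondary but genuine difficulty is the size accounting: ensuring that the accumulated blowup over $\Theta(\log m)$ rounds stays $m\cdot\mathrm{polylog}\,m$ rather than a larger polynomial, since it is precisely this nearly-linear bound that makes the theorem usable for ETH-based lower bounds in the next section.
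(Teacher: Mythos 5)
The paper does not prove this theorem; it is quoted from Dinur~\cite{Din07} and used as a black box in the chain of reductions of Section~\ref{sec:main2}, so there is no in-paper proof to compare against. Your reconstruction of Dinur's argument is at the right level of detail, and the three-step amplification round (preprocessing, graph powering with a constant power $t$, alphabet reduction by composition with a constant-size assignment tester), together with the final translation of the bounded-degree, constant-alphabet $2$-CSP back into bounded-occurrence $3$-SAT, are all accurately described.

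There is, however, a genuine gap in the size accounting, and it sits exactly at the point you flag as ``delicate bookkeeping.'' Each amplification round multiplies the size by a fixed constant $C > 1$, and climbing from the initial gap $\Omega(1/m)$ to the constant $\alpha$ requires $\Theta(\log m)$ rounds, so the cumulative blowup is $C^{\Theta(\log m)} = m^{\Theta(\log C)}$ --- a polynomial, not $\mathrm{polylog}\, m$. No refinement of the bookkeeping of a per-round constant factor over $\Theta(\log m)$ rounds can reduce this to polylogarithmic. Dinur's nearly-linear-size statement is obtained by a different route: she does not start from the raw $3$-SAT instance (gap $\Omega(1/m)$) but from an already nearly-linear-size, constant-query PCP with a mild, inverse-polylogarithmic gap over a large alphabet, built by algebraic means (Ben-Sasson and Sudan), and then runs only $O(\log\log m)$ rounds of her amplification to boost the gap to a constant, so the cumulative size increase is $C^{O(\log\log m)} = \mathrm{polylog}\, m$. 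Your outline as written proves the PCP theorem with polynomial blowup, which suffices for plain $\mathsf{NP}$-hardness but not for the ETH-based query lower bound in this paper: the $m' = O(m\, \mathrm{polylog}\, m)$ bound is precisely what lets Hypothesis~2 and ultimately Theorem~\ref{thm:main2} go through, so the missing ingredient is load-bearing here.
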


The following new hypothesis follows from ETH and will be useful in our analysis.

\begin{quote}
\underline{\bf Hypothesis 2}: There exists constants $\veps, d > 0$ such that the following holds: There does not exist an algorithm that, given a $3$-SAT formula $\psi$ with $m$ clauses with each variable appearing in at most $d$ clauses, distinguishes whether $\psi$ is satisfiable or $val(\psi) \leq (1 - \veps)$, runs in time $2^{\Omega \left(\frac{m}{poly \log m} \right)}$.
\end{quote}

The following simple lemma follows from Dinur's PCP theorem given above.

\begin{lemma}\label{lemma:H1}
If Hypothesis 1 holds, then so does Hypothesis 2.
\end{lemma}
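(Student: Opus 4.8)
The plan is to derive Hypothesis 2 from Hypothesis 1 by composing the standard sparsification step underlying ETH with Dinur's PCP theorem, and then tracking how the blow-up in the number of clauses interacts with the claimed running-time lower bound. First I would recall that ETH, via the Sparsification Lemma of Impagliazzo--Paturi--Zane, implies that there is no $2^{o(m)}$-time algorithm for $3$-SAT even when restricted to formulas in which the number of \emph{clauses} is linear in the number of variables; in particular each variable appears in at most a constant number of clauses after a trivial duplication/cleanup, so the ``bounded-occurrence'' hypothesis in the non-gap regime follows from Hypothesis 1. (If one wants to avoid invoking the Sparsification Lemma explicitly, Dinur's reduction already outputs a formula $\phi$ with $m' = O(m\,\mathrm{polylog}\,m)$ clauses in which every variable appears in at most $d$ clauses, so bounded occurrence comes for free; the only subtlety is the clause-count blow-up, addressed below.)

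Next I would apply Dinur's PCP Theorem to the starting formula $\psi$ with $m$ clauses, obtaining $\phi$ with $m' = O(m\,\mathrm{polylog}\,m)$ clauses, such that satisfiability is preserved and unsatisfiability becomes $\mathrm{val}(\phi) \le 1 - \veps$ for the absolute constant $\veps > 0$ from the theorem, with every variable of $\phi$ occurring in at most $d$ clauses. The key observation is that a hypothetical algorithm $\mathcal{A}$ as forbidden by Hypothesis 2 --- one that distinguishes satisfiable from $\mathrm{val} \le 1-\veps$ on bounded-occurrence formulas in time $2^{O(m'/\mathrm{polylog}\,m')}$ --- could be run on $\phi$. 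Since $m' = O(m\,\mathrm{polylog}\,m)$, we have $\mathrm{polylog}\,m' = O(\mathrm{polylog}\,m)$, hence the running time on $\phi$ is $2^{O(m' / \mathrm{polylog}\,m')} = 2^{O(m\,\mathrm{polylog}\,m / \mathrm{polylog}\,m)} = 2^{O(m \cdot \mathrm{polylog}\,m / \mathrm{polylog}\,m)}$; here one must choose the exponent of the $\mathrm{polylog}$ in the hypothesis large enough to dominate the $\mathrm{polylog}$ blow-up from Dinur's reduction, so that the net exponent is $o(m)$. Combined with the polynomial-time reduction producing $\phi$, this would give a $2^{o(m)}$-time algorithm deciding satisfiability of $\psi$, contradicting Hypothesis 1. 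Therefore Hypothesis 2 holds.

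The main obstacle --- really the only place where care is needed --- is bookkeeping the $\mathrm{polylog}$ factors so that the two hypotheses line up: Hypothesis 2 is stated with a running time $2^{\Omega(m/\mathrm{polylog}\,m)}$ (i.e. the lower bound allows the algorithm at most that much time to be ruled out), and we need Dinur's $m' = O(m\,\mathrm{polylog}\,m)$ expansion not to wash out the contradiction. Concretely, if Dinur's reduction blows up the clause count by a factor $(\log m)^{c}$ for some constant $c$, then an algorithm running in time $2^{m'/(\log m')^{c'}}$ for $c' > c$ runs on $\phi$ in time roughly $2^{m (\log m)^{c} / (\log m)^{c'}} = 2^{m / (\log m)^{c' - c}} = 2^{o(m)}$, which suffices. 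So I would phrase Hypothesis 2 (and hence Lemma~\ref{lemma:H1}) with the $\mathrm{polylog}$ understood to have a sufficiently large but fixed exponent, or more robustly state it as: for \emph{every} constant $c'$ there is no $2^{m/(\log m)^{c'}}$-time distinguishing algorithm. The rest is routine: satisfiability preservation, the soundness gap, the bounded-occurrence property, and polynomial-time computability of the reduction are all given verbatim by Dinur's theorem, so no further work is required.
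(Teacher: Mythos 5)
Your proof is correct and takes the same route the paper (implicitly) relies on: the paper only says Lemma~\ref{lemma:H1} ``follows from Dinur's PCP theorem,'' and your argument supplies exactly the contrapositive reduction that makes this precise. Given $\psi$ with $m$ clauses, apply Dinur's reduction to obtain $\phi$ with $m' = O(m\,\mathrm{polylog}\,m)$ clauses, bounded occurrence, and the completeness/soundness gap; a distinguishing algorithm for the gap problem running in time $2^{m'/(\log m')^{c'}}$ with $c'$ chosen larger than the exponent $c$ in Dinur's blow-up would decide $\psi$ in time $2^{O(m/(\log m)^{c'-c})} = 2^{o(m)}$, contradicting Hypothesis~1. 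The polylog bookkeeping you do is precisely the care that the paper elides, and your observation that the ``$\mathrm{poly}\log$'' in Hypothesis~2 must be understood as ``for a sufficiently large fixed exponent'' is the right reading of that (loosely worded) hypothesis. One minor remark: the opening appeal to the Sparsification Lemma is unnecessary, as you yourself note --- Hypothesis~1 is already stated in terms of the clause count $m$, and Dinur's reduction already produces a bounded-occurrence formula, so there is nothing for sparsification to do here.
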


We now see a reduction from the gap version of $3$-SAT to the gap version of the Vertex Cover problem that will be used to argue the hardness of the $k$-means problem. The next result is a standard reduction and can be found in a survey by Luca Trevisan~\cite{luca-survey}.

\begin{lemma}\label{lemma:sat-vc}
Let $\veps, d > 0$ be some constants.
There is a polynomial time computable function mapping $3$-SAT instances $\psi$ with $m$ variables and where each variable appears in at most $d$ clauses, into graphs $G_{\psi}$ with $3m$ vertices and maximum degree $O(d)$ such that if $\psi$ is satisfiable, then $G_{\psi}$ has a vertex cover of size at most $2m$ and if $val(\psi) \leq (1 - \veps)$, then every vertex cover of $G_{\psi}$ has size at least $2m(1+\veps/2)$.
\end{lemma}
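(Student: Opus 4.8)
The plan is to give the standard gadget reduction from \textsc{Vertex Cover} to \textsc{$3$-SAT} that encodes clauses and literals with a small graph; since the statement says this is a standard reduction found in Trevisan's survey, I would just reconstruct that gadget and verify the two size bounds. First I would set up the vertex set: for each of the $m$ variables $x$ create two vertices $x$ and $\bar x$ joined by an edge (a \emph{variable gadget}), and for each clause $C$ with three literals create a triangle on three vertices, one per literal occurrence (a \emph{clause gadget}). Then for each literal occurrence in a clause, add an edge between that occurrence-vertex in the triangle and the corresponding literal-vertex in the variable gadget. This yields $2m + 3\cdot(\#\text{clauses})$ vertices; since each variable appears in at most $d$ clauses there are at most $dm/1$-ish clauses, but more carefully each variable contributes at most $d$ literal occurrences so the total number of occurrence-vertices is $O(dm)$, giving $3m$ vertices only if we are counting just the literal-occurrence vertices — I would need to match the survey's exact bookkeeping, probably by taking $G_\psi$ to have vertex set equal to the $3m$ clause-gadget vertices (three per clause, with the number of clauses bounded by a constant times $m$ after normalization) or by noting the problem intends $m$ to count clauses; in any case the vertex count is $\Theta(m)$ and the degree is $O(d)$ because each occurrence-vertex has two triangle-neighbors plus one variable-gadget neighbor, and each variable-gadget vertex has $\le d$ occurrence-neighbors plus one partner, so $\Delta(G_\psi) = O(d)$.

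Next I would prove completeness: if $\psi$ is satisfiable, fix a satisfying assignment $\tau$; put into the cover every literal-vertex that $\tau$ sets to \emph{false} (that is $m$ vertices, one per variable gadget, covering all $m$ variable-gadget edges), and in each clause triangle pick one literal made \emph{true} by $\tau$ (which exists since $\tau$ satisfies the clause) and put the other two triangle-vertices into the cover ($2$ per clause, covering all three triangle edges). Every cross edge between an occurrence-vertex $v$ and its variable-vertex $\ell$ is covered: if $\tau(\ell)=\text{false}$ then $\ell$ is in the cover; if $\tau(\ell)=\text{true}$ then $v$ was the chosen true-occurrence only possibly for one clause, and for the clause where $v$ is \emph{not} chosen $v\in$ cover, while for the clause where $v$ \emph{is} chosen, $\ell\in$ cover iff $\tau(\ell)=\text{false}$ — so I should instead say: put $\ell$ in the cover exactly when $\tau(\ell)=\text{false}$, and in each triangle omit exactly one occurrence-vertex corresponding to a literal with $\tau=\text{true}$; then a cross edge $(v,\ell)$ with $v$ omitted has $\tau(\ell)=\text{true}$ so $\ell$ omitted too — that breaks it, so the standard fix is to also force: among the true literals in a clause choose one whose occurrence-vertex we omit, but then cross-edge to that $\ell$ is uncovered. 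The genuinely standard resolution (which I would look up precisely) is that the cover consists of: all \emph{false} literal-vertices ($m$ of them) $+$ in each triangle, the two vertices whose literals are false or "not selected", amounting to $2$ per clause — and one checks a cross edge $(v,\ell)$ is covered because either $\ell$ is false (in cover) or $\ell$ is true, in which case $v$ is one of the two selected triangle vertices in that clause unless $v$ is the unique unselected one, and the unique unselected one is chosen to be a true literal, so $\ell$... I will just cite the survey's construction verbatim here; the size is $m + 2\cdot(\text{\# clauses})$, and after the normalization in the statement where "$m$ variables" and the clause count is proportional, this equals $2m$.

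For soundness, the contrapositive: suppose $G_\psi$ has a vertex cover $S$ of size $< 2m(1+\veps/2)$; I would extract an assignment satisfying more than $(1-\veps)$ fraction of clauses. A vertex cover must contain at least one endpoint of each variable-gadget edge ($\ge m$ vertices, at least one per variable) and at least two of three vertices of each clause triangle ($\ge 2\cdot\#\text{clauses}$). Define $\tau(x)=\text{true}$ if $\bar x \in S$ and $\text{false}$ if $x\in S$ (if both, pick either; if neither, impossible). Call a clause \emph{good} if its triangle has exactly two vertices in $S$, i.e.\ the "surplus" of $S$ on that clause+its variable edges is minimal. A counting argument: $|S| \ge m + 2\cdot\#\text{clauses} + (\text{number of clauses that are not good})$, roughly, so the number of not-good clauses is at most $|S| - m - 2\cdot\#\text{clauses} < 2m(1+\veps/2) - m - 2\cdot\#\text{clauses}$; with the normalization making $\#\text{clauses}$ and $m$ comparable this bounds the bad fraction by $\veps$. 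For a good clause, the single vertex $v$ outside $S$ has its cross-edge $(v,\ell)$ covered by $\ell\in S$, hence $\tau(\ell)=\text{true}$ if $\ell=\bar x$... meaning the literal at $v$ is satisfied by $\tau$, so the clause is satisfied. Hence $\tau$ satisfies all good clauses, i.e.\ more than a $(1-\veps)$ fraction, contradicting $val(\psi)\le 1-\veps$.

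The main obstacle is bookkeeping, not ideas: getting the vertex-count "$3m$" and the cover sizes "$2m$" and "$2m(1+\veps/2)$" to come out \emph{exactly} requires adopting the survey's specific normalization (whether "$m$" counts variables or clauses, and whether each clause has exactly $3$ literals with each appearing with a fixed multiplicity). Since Lemma~\ref{lemma:sat-vc} explicitly defers to Trevisan's survey, I would present the gadget, state completeness and soundness with the constant-factor counting as above, and remark that the exact constants $2m$ and $1+\veps/2$ follow from the survey's accounting; the degree bound $O(d)$ and polynomial-time computability are immediate from the construction.
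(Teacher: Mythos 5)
Your construction is the wrong one, and that is why the bookkeeping refuses to close. You built the \emph{Karp} reduction from 3-SAT to Vertex Cover: an edge per variable plus a triangle per clause plus connecting edges. That gadget has $2(\#\text{vars}) + 3(\#\text{clauses})$ vertices, so it cannot produce $3m$ vertices for any reasonable interpretation of $m$, and its completeness/soundness analysis is exactly the tangle you ran into. The reduction the paper (via Trevisan's survey) intends is the \emph{FGLSS} reduction, which has no variable gadgets at all: create one triangle per clause, with a vertex for each of the three literal occurrences, and add an edge between any two occurrence-vertices (in different triangles) that represent complementary literals $\ell$ and $\bar\ell$. That is the entire graph. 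It has exactly $3m$ vertices where $m$ is the number of \emph{clauses} — the lemma statement's ``$m$ variables'' appears to be a typo, consistent with Hypothesis 2 immediately above it which counts clauses. The degree bound is immediate: each vertex has $2$ triangle neighbors plus at most $d-1$ complementary occurrences (its variable appears in at most $d$ clauses), so $\Delta = O(d)$.

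With the FGLSS graph the gap preservation is one clean line, bypassing all of your case analysis. An independent set picks at most one vertex per triangle and never two complementary literals, so it corresponds bijectively (in cardinality) to the maximum number of simultaneously satisfiable clauses: the maximum independent set has size $val(\psi)\cdot m$, hence the minimum vertex cover has size $3m - val(\psi)\cdot m$. If $\psi$ is satisfiable this is $2m$; if $val(\psi)\le 1-\veps$ this is at least $3m - (1-\veps)m = 2m(1+\veps/2)$. Your soundness argument, with its notion of ``good'' clauses and a surplus-counting bound, is trying to reinvent this correspondence inside the wrong gadget, which is why you could only gesture at the constants. You correctly identified that the variables-vs-clauses normalization is the crux, but the resolution is not to adjust the accounting of your gadget — it is to discard the variable gadgets entirely.
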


We formulate the following new hypothesis that holds given that hypothesis 2 holds. Eventually, we will chain all these hypothesis together.

\begin{quote}
{\bf Hypothesis 3}: There exists constants $\veps, d > 0$ such that the following holds: There does not exist an algorithm that, given a graph $G$ with $n$ vertices and maximum degree $d$, distinguishes between the case when $G$ has a vertex cover of size at most $2n/3$ and the case when $G$ has a vertex cover of size at least $\frac{2n}{3} \cdot (1 + \veps)$, runs in time $2^{\Omega \left(\frac{n}{poly \log n}\right)}$.
\end{quote}

The following lemma is a simple implication of Lemma~\ref{lemma:sat-vc}

\begin{lemma}\label{lemma:H2}
If Hypothesis 2 holds, then so does Hypothesis 3.
\end{lemma}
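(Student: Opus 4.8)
The plan is to chain the reduction of Lemma~\ref{lemma:sat-vc} with the hardness asserted by Hypothesis~2, and then rescale the vertex-cover threshold so that it matches the $2n/3$ form demanded by Hypothesis~3. First I would fix the constants $\veps, d > 0$ for which Hypothesis~2 holds, and let $\veps', d'$ be the constants produced by Lemma~\ref{lemma:sat-vc} when applied with these $\veps, d$ (so $d' = O(d)$, and $\veps'$ is the gap constant in the vertex-cover conclusion; in fact the statement of Lemma~\ref{lemma:sat-vc} already gives gap factor $(1+\veps/2)$, so one may simply take $\veps' = \veps/2$). I will argue the contrapositive: suppose there were an algorithm $B$ violating Hypothesis~3 for these constants, i.e.\ $B$ distinguishes ``vertex cover $\le 2n/3$'' from ``vertex cover $\ge \frac{2n}{3}(1+\veps')$'' on $n$-vertex graphs of max degree $d'$ in time $2^{O(n/\poly\log n)}$. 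I would then build an algorithm $B'$ for the gap-$3$-SAT problem of Hypothesis~2 as follows: given $\psi$ with $m$ variables, each appearing in at most $d$ clauses, run the polynomial-time map of Lemma~\ref{lemma:sat-vc} to get $G_\psi$ with $N = 3m$ vertices and max degree $O(d) = d'$, then run $B$ on $G_\psi$ and report ``satisfiable'' iff $B$ reports ``small vertex cover''.

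The correctness of $B'$ is exactly the bookkeeping on the thresholds. If $\psi$ is satisfiable, Lemma~\ref{lemma:sat-vc} gives a vertex cover of size $\le 2m = \tfrac{2N}{3}$, so $B$ sees the ``small'' case. If $val(\psi) \le 1-\veps$, every vertex cover has size $\ge 2m(1+\veps/2) = \tfrac{2N}{3}(1+\veps')$ with $\veps' = \veps/2$, so $B$ sees the ``large'' case. Hence $B'$ decides the gap-$3$-SAT problem. For the running time: the reduction is polynomial in $m$, and $B$ runs in time $2^{O(N/\poly\log N)} = 2^{O(3m/\poly\log(3m))} = 2^{O(m/\poly\log m)}$, so $B'$ runs in time $2^{O(m/\poly\log m)}$, contradicting Hypothesis~2. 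Therefore no such $B$ exists and Hypothesis~3 holds.

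I do not expect a genuine obstacle here; this is a routine ``reduction plus threshold rescaling'' argument, and the only points requiring a little care are (i) making sure the max-degree bound $O(d)$ from Lemma~\ref{lemma:sat-vc} is absorbed into the constant $d'$ of Hypothesis~3 (it is, since $d$ is a fixed constant), and (ii) verifying that the quantity $\frac{2m(1+\veps/2)}{3m} = \frac{2}{3}(1+\veps/2)$ is exactly of the form $\frac{2}{3}(1+\veps')$ so that the two gap instances line up with the statement of Hypothesis~3 verbatim — which forces the choice $\veps' = \veps/2$. The mild ``soft'' point is that $\poly\log(3m) = \Theta(\poly\log m)$, so the $2^{\Omega(\cdot/\poly\log(\cdot))}$ lower-bound form is preserved under the linear change of instance size $m \mapsto 3m$; this is immediate. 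If one wanted the threshold to be literally $2n/3$ rather than $\le 2n/3$ one can note that in the satisfiable case the cover size is at most $2m$, which is what the ``at most $2n/3$'' branch of Hypothesis~3 asks for, so no padding of the graph is needed.
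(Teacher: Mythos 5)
Your proof is correct and matches the paper's intent---the paper offers no explicit proof, merely remarking that the lemma is ``a simple implication of Lemma~\ref{lemma:sat-vc},'' and the chain-and-rescale argument you spell out (contrapositive, apply the reduction, identify $2m$ with $2N/3$, absorb the $O(d)$ degree bound into the constant) is precisely that implication. The one detail I would add is a sentence reconciling the parameter conventions: Hypothesis~2 counts the number of \emph{clauses} $m$, whereas Lemma~\ref{lemma:sat-vc} as stated counts \emph{variables}, but under the bounded-occurrence hypothesis the two quantities are within a constant factor of each other, so the graph size is $\Theta(m)$ in either convention and the $2^{\Omega(m/\mathrm{poly}\log m)}$ form is preserved.
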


We are getting closer to the $k$-means problem that has a reduction from the vertex cover problem on triangle-free graphs~\cite{acks15}.
So, we will need reductions from vertex cover problem to vertex cover problem on triangle-free graphs and then to the $k$-means problem.
These two reductions are given by Awasthi \etal \cite{acks15}.

\begin{lemma}[Follows from Theorem 21~\cite{acks15}]\label{lemma:vc-tvc}
Let $\veps, d > 0$ be some constants. There is a polynomial-time computable function mapping any graph $G = (V, E)$ with maximum degree $d$ to a triangle-free graph $\hat{G} = (\hat{V}, \hat{E})$ such that the following holds:
\begin{itemize}
\item $|\hat{V}| = poly(d, 1/\veps) \cdot |V|$ and maximum degree of vertices in $\hat{G}$ is $O(d^3/\veps^2)$, and
\item $\left( 1 - \frac{|VC(G)|}{|V|} \right) \leq \left( 1 - \frac{|VC(\hat{G})|}{|\hat{V}|} \right) \leq (1 + \veps) \cdot \left( 1 - \frac{|VC(G)|}{|V|}\right)$.
\end{itemize}
Here $VC(G)$ denote the size of the minimum vertex cover of $G$.
\end{lemma}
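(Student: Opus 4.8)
Since this statement is quoted as a consequence of \cite[Theorem~21]{acks15}, the plan is to recall the shape of that reduction and to verify that it delivers exactly the four claims: triangle-freeness of $\hat G$, the vertex blow-up $|\hat V| = \poly(d,1/\veps)\cdot|V|$, the degree bound $O(d^3/\veps^2)$, and the two-sided preservation of the \emph{independence ratio} $\alpha(G)/|V| = 1 - |VC(G)|/|V|$. The natural way to produce $\hat G$ is a purely local gadget replacement that simultaneously (a) destroys every triangle of $G$ and (b) leaves the independence ratio essentially unchanged, moving it only upward and by at most a multiplicative $(1+\veps)$. The construction I would use has two ingredients applied together: first a \emph{vertex blow-up} — replace each vertex $v$ of $G$ by an independent set $I_v$ of a common size $s = \poly(d,1/\veps)$ — and then, in place of each edge $(u,v)$ of $G$, a triangle-free ``subdivided biclique'' gadget: a fresh independent ``middle'' set $M_{uv}$ together with all edges between $I_u$ and $M_{uv}$ and all edges between $M_{uv}$ and $I_v$, but no direct edges between $I_u$ and $I_v$. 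Because $I_u$, $I_v$, $M_{uv}$ are each independent and every edge of $\hat G$ runs between consecutive layers of some such gadget, a triangle would force a vertex lying simultaneously in a middle set and in an endpoint set, which is impossible; hence $\hat G$ is triangle-free. Choosing $|M_{uv}|$ comparable to $s$, the degree of a vertex of $I_v$ is at most $\deg_G(v)\cdot|M_{uv}| = O(ds)$ and that of a middle vertex is $2s$, and $|\hat V| = s|V| + \sum_{(u,v)\in E}|M_{uv}| = \Theta(s(|V|+|E|))$; taking $s = \Theta(d^2/\veps^2)$ then yields the stated bounds $|\hat V| = \poly(d,1/\veps)\cdot|V|$ and maximum degree $O(d^3/\veps^2)$.

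The ratio estimate has the usual two directions. For the lower bound $1 - |VC(\hat G)|/|\hat V| \ge 1 - |VC(G)|/|V|$ one lifts a minimum vertex cover $S$ of $G$ to a cover of $\hat G$ by taking all of $I_v$ for $v \in S$ and, for each edge $(u,v)$ with (say) $u\in S$, all of $M_{uv}$; balancing $|M_{uv}|$ against $s$ makes the resulting ratio at least that of $G$. For the upper bound $1 - |VC(\hat G)|/|\hat V| \le (1+\veps)\bigl(1 - |VC(G)|/|V|\bigr)$ one starts from an arbitrary vertex cover $\hat S$ of $\hat G$, uses a local exchange argument inside every gadget to argue that $\hat S$ may be assumed to contain all of $I_u$, or all of $I_v$, or a prescribed-size internal cover of $M_{uv}$, projects to the set of original vertices $v$ with $I_v\subseteq\hat S$, shows this set is a vertex cover of $G$, and bounds its size. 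The point is that ``covering a gadget internally'' never saves more than an $O(1/s)$ fraction of the gadget, so with $s$ polynomially large in $d,1/\veps$ the cumulative saving is at most an $\veps$ fraction of $|\hat V|$; in the regime relevant to our chain of hypotheses the independence ratio is bounded away from $0$, so this additive $\veps$ slack translates into the required multiplicative $(1+\veps)$.

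The main obstacle is the simultaneous control of these competing requirements: the gadgets must be large enough that the subdivision overhead per edge, and any ``cheap internal cover'' of a gadget, is a negligible ($\le \veps$) fraction of the whole — which pushes $s$ and $|M_{uv}|$ upward — while the maximum degree, which grows like $d\cdot s$, must stay polynomial in $d$ and $1/\veps$ — which pushes them downward — and the soundness exchange argument must be carried out uniformly across all gadgets. Getting all of this to close is precisely the technical content of \cite[Theorem~21]{acks15}, from which the parameters $|\hat V| = \poly(d,1/\veps)\cdot|V|$ and maximum degree $O(d^3/\veps^2)$ emerge; accordingly I would invoke their construction rather than reprove the optimization from scratch. \qed
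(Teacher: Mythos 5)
The paper gives no proof of this lemma; it is stated as a direct consequence of \cite[Theorem~21]{acks15} and left at that, so the only question is whether the construction you sketch could realize it. It cannot. Every edge of your $\hat G$ joins a vertex of some $I_v$ to a vertex of some $M_{uv}$, so $\hat G$ is \emph{bipartite}, with parts $\bigcup_v I_v$ and $\bigcup_{(u,v)\in E} M_{uv}$. Any bipartite graph has an independent set containing at least half of its vertices, so $1 - |VC(\hat G)|/|\hat V| \ge 1/2$ no matter how you tune $s$ and $|M_{uv}|$. But in the regime where the paper applies this lemma (Hypotheses~3 and~4), $|VC(G)| \approx \tfrac{2}{3}|V|$, so $1 - |VC(G)|/|V| \approx 1/3$, and the required upper bound $1 - |VC(\hat G)|/|\hat V| \le (1+\veps)\bigl(1 - |VC(G)|/|V|\bigr) \approx (1+\veps)/3$ sits strictly below $1/2$ for every $\veps < 1/2$. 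Subdivision-style edge gadgets kill triangles precisely by making the graph bipartite, and that same bipartiteness makes it impossible for them to preserve independence ratios below $1/2$.

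This is not a parameter-tuning issue but a structural one: the gadget actually used in \cite[Theorem~21]{acks15} must produce a $\hat G$ that is triangle-free yet \emph{non-bipartite} (so it contains odd cycles of length at least $5$), which is a qualitatively different shape from blow-up-plus-subdivision. Your supporting steps also do not go through as stated. For the lower bound, the lifted cover $\bigcup_{v\in S} I_v \cup \bigcup_{u\in S,\,v\notin S} M_{uv}$ has size $s|S| + |M_{uv}|\,e_1$, where $e_1$ is the number of edges with exactly one endpoint in $S$; the inequality you need, $(s|S| + |M_{uv}|\,e_1)/(s|V| + |M_{uv}||E|) \le |S|/|V|$, reduces to $e_1|V| \le |S||E|$, which already fails when $G$ is a disjoint union of $|V|/2$ edges and $S$ picks one endpoint per edge. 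For the upper bound, the exchange claim that an internal cover of a gadget ``saves at most an $O(1/s)$ fraction'' is false --- taking $M_{uv}$ alone covers the entire gadget while omitting all of $I_u \cup I_v$, a constant fraction of it --- and a cover using $M_{uv}$ projects to a set containing neither $u$ nor $v$, so the projected set need not cover $G$. These are symptoms of the primary bipartiteness obstruction; the template you propose is the wrong template for this reduction.
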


We can formulate the following hypothesis that will follow from Hypothesis 3 using the above lemma.

\begin{quote}
{\bf Hypothesis 4}: There exists constants $\veps, d > 0$ such that the following holds: There does not exist an algorithm that, given a triangle-free graph $G$ with $n$ vertices and maximum degree $d$, distinguishes between the case when $G$ has a vertex cover of size at most $\frac{2n}{3}$ and the case when $G$ has a vertex cover of size at least $\frac{2n}{3} \cdot (1 + \veps)$, runs in time $2^{\Omega \left(\frac{n}{poly \log n}\right)}$.
\end{quote}

The next claim is a simple application of Lemma~\ref{lemma:vc-tvc}.

\begin{lemma}\label{lemma:H3}
If Hypothesis 3 holds, then so does Hypothesis 4.
\end{lemma}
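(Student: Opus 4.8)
The plan is to transfer the hardness gap from Hypothesis 3 to Hypothesis 4 by applying the reduction of Lemma~\ref{lemma:vc-tvc} as a black box and checking that it preserves (up to constants) both the gap in the vertex cover fraction and the required running-time lower bound. First I would take an instance $G$ on $n$ vertices with maximum degree $d$ coming from Hypothesis 3, so that we are promised either $|VC(G)| \leq 2n/3$ or $|VC(G)| \geq \frac{2n}{3}(1+\veps)$, and feed it to the polynomial-time map of Lemma~\ref{lemma:vc-tvc}, obtaining a triangle-free graph $\hat G$ on $\hat n = poly(d,1/\veps)\cdot n$ vertices with maximum degree $O(d^3/\veps^2)$, which is again a constant since $d,\veps$ are constants.

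Next I would translate the two cases. Using the displayed double inequality $\left(1 - \frac{|VC(G)|}{n}\right) \leq \left(1 - \frac{|VC(\hat G)|}{\hat n}\right) \leq (1+\veps)\left(1 - \frac{|VC(G)|}{n}\right)$: in the YES case $1 - |VC(G)|/n \geq 1/3$, hence $1 - |VC(\hat G)|/\hat n \geq 1/3$, i.e. $|VC(\hat G)| \leq 2\hat n/3$; in the NO case $1 - |VC(G)|/n \leq \frac{1}{3} - \frac{2\veps}{3}$, hence $1 - |VC(\hat G)|/\hat n \leq (1+\veps)\left(\frac13 - \frac{2\veps}{3}\right) = \frac13 - \frac{\veps}{3} - \frac{2\veps^2}{3} \leq \frac13\left(1 - \veps'\right)$ for a suitable constant $\veps' > 0$ (any $\veps' \leq \veps$ works for small enough $\veps$), which gives $|VC(\hat G)| \geq \frac{2\hat n}{3}\cdot\frac{1}{1-\veps'/2}\cdot(\text{something})$; more cleanly, $|VC(\hat G)| \geq \frac{2\hat n}{3}(1+\veps'')$ for a constant $\veps'' > 0$. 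So the promise gap is preserved with possibly shrunken but still absolute constants, and one simply renames the constant in the statement of Hypothesis 4.

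Then I would verify the running-time bookkeeping, which is the only place any care is needed. Since $\hat n = poly(d,1/\veps)\cdot n = \Theta(n)$ (as $d,\veps$ are fixed constants), a hypothetical algorithm distinguishing the two cases for $\hat G$ in time $2^{\Omega(\hat n / poly\log \hat n)}$ would, composed with the polynomial-time reduction, distinguish the two cases for $G$ in time $poly(n) + 2^{\Omega(\hat n/poly\log\hat n)} = 2^{\Omega(n/poly\log n)}$, contradicting Hypothesis 3. Hence no such algorithm exists, which is exactly Hypothesis 4.

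I do not expect a genuine obstacle here; the proof is a routine "plug into the previous lemma and chase the inequality" argument. The only mild subtlety — and the step I would be most careful about — is confirming that the multiplicative slack $(1+\veps)$ in Lemma~\ref{lemma:vc-tvc} does not swallow the entire gap: one must pick the source gap constant in Hypothesis 3 small relative to the $\veps$ of Lemma~\ref{lemma:vc-tvc}, or equivalently observe that all the $\veps$'s appearing in Hypotheses 2--4 are existentially quantified constants that we are free to shrink, so a strictly positive gap survives. Everything else (triangle-freeness, bounded degree, polynomial blow-up in the number of vertices) is handed to us directly by the lemma. I would therefore state the proof in one or two short sentences: apply Lemma~\ref{lemma:vc-tvc}, use the displayed inequality to map the YES/NO cases, and note $\hat n = \Theta(n)$ to carry over the time bound. \qed
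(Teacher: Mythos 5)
Your argument is correct and is exactly the routine reduction the paper intends: the paper does not spell out a proof (it only remarks that the claim is ``a simple application of Lemma~\ref{lemma:vc-tvc}''), and your chain — apply the reduction, push the YES/NO promise through the displayed inequality, observe $\hat n = \Theta(n)$ so the $2^{\Omega(n/\mathrm{poly}\log n)}$ bound carries over — is precisely that application, with the arithmetic in the NO case coming out to $|VC(\hat G)| \geq \tfrac{2\hat n}{3}\bigl(1 + \tfrac{\veps + 2\veps^2}{2}\bigr)$, a strictly positive constant gap. The one sentence with ``$\cdot(\text{something})$'' is needlessly vague given that you then give the clean form, but the substance is right.
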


Finally, we use the reduction from the vertex cover problem in triangle-free graphs to the $k$-means problem to obtain the hardness result for the $k$-means problem.
We will use the following reduction from Awasthi \etal \cite{acks15}.

\begin{lemma}[Theorem 3~\cite{acks15}]
There is an efficient reduction from instances of Vertex Cover (in triangle free graphs) to those of $k$-means that satisfies the following properties:
\begin{itemize}
\item if the Vertex Cover instance has value $k$, then the $k$-means instance has cost at most $(m-k)$
\item if the Vertex Cover instance has value at least $k(1+\veps)$, then the optimal $k$-means cost is at least $m - (1 - \Omega(\veps))k$. Here $\veps$ is some fixed constant $> 0$.
\end{itemize}
Here $m$ denotes the number of edges in the vertex cover instance.
\end{lemma}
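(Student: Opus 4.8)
The plan is to recall the reduction of Awasthi \etal\cite{acks15}: given a triangle-free graph $G=(V,E)$ with $|E|=m$, produce the $k$-means instance $X=\{x_e : e\in E\}\subseteq\R^{|V|}$, where $x_e$ is the $0/1$ indicator vector of the two-element endpoint set of $e$, and keep the target number of centers equal to $k$; this map is computable in time linear in the size of $G$. For $C\subseteq E$ write $d_C(w)$ for the number of edges of $C$ incident to $w$, and abbreviate $\Delta_1(C):=\Delta_1(\{x_e:e\in C\})$. Since $\|x_e\|^2=2$ and the $w$-coordinate of $\sum_{e\in C}x_e$ is $d_C(w)$, a short computation of the centroid yields
\begin{equation*}
\Delta_1(C)\;=\;2|C|-\frac{1}{|C|}\sum_{w}d_C(w)^2 ,
\end{equation*}
and hence $\Delta_1(C)\ge |C|-1$, with equality exactly when all edges of $C$ share a common vertex (a ``star''). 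I would then check the two implications separately.

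For completeness, take a vertex cover $S$ of $G$ with $|S|=k$, chosen inclusion-minimal so that every $v\in S$ has an incident edge, and assign each edge to one of its endpoints in $S$. This partitions $X$ into $k$ nonempty parts, the part of $v$ being a sub-star centered at $v$, whose $1$-means cost equals (its size)$-1$ by the identity above; summing over $S$, the clustering has cost exactly $m-k$, so the optimal $k$-means cost $\Delta_k(X)$ is at most $m-k$.

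For soundness I would argue the contrapositive. Fix an optimal $k$-clustering $C_1^{\star},\dots,C_k^{\star}$ of $X$, so $\sum_i\Delta_1(C_i^{\star})=\Delta_k(X)$, with $\ell\le k$ nonempty parts, and for each nonempty part choose a minimum vertex cover $S_i$ of the subgraph $(V,C_i^{\star})$; since $\bigcup_i C_i^{\star}=E$, the union $\bigcup_i S_i$ is a vertex cover of $G$, so $VC(G)\le\sum_i|S_i|$. The heart of the argument is the following combinatorial fact, which is the single place triangle-freeness is used: for triangle-free $G$ and any $C\subseteq E(G)$, the minimum vertex cover number $\tau(C)$ of $(V,C)$ satisfies $\tau(C)\le \Delta_1(C)-|C|+2$, equivalently $\Delta_1(C)\ge |C|+\tau(C)-2$. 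Granting this, $VC(G)\le\sum_i\bigl(\Delta_1(C_i^{\star})-|C_i^{\star}|+2\bigr)=\Delta_k(X)-m+2\ell\le\Delta_k(X)-m+2k$, so $VC(G)\ge k(1+\veps)$ forces $\Delta_k(X)\ge m-(1-\veps)k$, which is the claimed bound (with the hidden constant equal to $1$; a weaker lemma of the form $\tau(C)\le c\,(\Delta_1(C)-|C|+1)+1$ would still give $m-(1-\veps/c)k$, of the required shape).

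The main obstacle is thus the combinatorial lemma $\Delta_1(C)\ge |C|+\tau(C)-2$ for triangle-free graphs. I would prove it by induction on $\tau(C)$: the base case $\tau=1$ is exactly the star case, where equality holds; for the step, delete from $C$ a vertex $v$ of a minimum vertex cover together with all edges of $C$ at $v$, apply the inductive hypothesis to the remaining edge set $C'$ (whose cover number is at least $\tau(C)-1$), and lower bound $\Delta_1(C)-\Delta_1(C')$. Triangle-freeness is exactly what makes the step go through: the $C$-neighbours of $v$ form an independent set, so deleting the star at $v$ decreases the degree of each of them by exactly one and of no two adjacent vertices simultaneously, which keeps the change in $\sum_w d_C(w)^2$ small enough to absorb the extra ``$+1$'' that $\tau$ can pick up. Getting this degree bookkeeping to close — and noting that a triangle ($\Delta_1(C)=|C|-1$ but $\tau(C)=2$) would break the inequality, so the hypothesis is genuinely needed — is the delicate part, and it is precisely what is established as Theorem~3 of \cite{acks15}.
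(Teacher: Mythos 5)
The paper does not prove this statement: it is a verbatim citation of Theorem~3 of Awasthi \etal\cite{acks15}, and the paper uses it as a black box. So the only thing to judge is whether your reconstruction of their argument is sound.

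Your recollection of the reduction and of the completeness direction is right: edges to $\{0,1\}^{|V|}$ indicator vectors, the identity $\Delta_1(C)=2|C|-\tfrac{1}{|C|}\sum_w d_C(w)^2$, stars having cost exactly $|C|-1$, and the inclusion-minimal cover giving a clustering of cost $m-k$.

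The soundness half, however, rests on a combinatorial lemma that is false. You claim that for triangle-free $C$, $\Delta_1(C)\ge |C|+\tau(C)-2$. Take $C$ to be the path $a$--$b$--$c$--$d$ on three edges. This is triangle-free, $|C|=3$, $\tau(C)=2$, and the degrees are $(1,2,2,1)$, so $\sum_w d_C(w)^2=10$ and $\Delta_1(C)=6-\tfrac{10}{3}=\tfrac{8}{3}<3=|C|+\tau(C)-2$. So the lemma already fails at $\tau=2$, and your induction cannot get off the ground: deleting $b$ (a vertex of the minimum cover) and its two incident edges leaves a single edge $C'=\{cd\}$ with $\Delta_1(C')=0$ and $\tau(C')=1$, and the inductive step would require $\Delta_1(C)-\Delta_1(C')\ge d_C(b)+1=3$, whereas it is only $\tfrac{8}{3}$. (The same example also shows your parenthetical ``equality exactly for stars'' is only meant under the triangle-free hypothesis; a triangle has $\Delta_1=|C|-1$ too.)

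Your fallback, a multiplicative bound $\tau(C)\le c\,(\Delta_1(C)-|C|+1)+1$ for some absolute $c$, would indeed suffice for the final arithmetic, and $P_4$ only forces $c\ge\tfrac{3}{2}$, so it is not obviously false. But it is not implied by, and cannot be proved by, the additive induction you sketch, and you give no argument for it. The actual proof in \cite{acks15} is genuinely more delicate: it does not go through a clean additive relation between $\tau(C)$ and $\Delta_1(C)-|C|$; instead it analyzes how the $\ell_2$ mass of the degree profile $\bigl(d_C(w)\bigr)_w$ (equivalently, the quantity $\tfrac{1}{|C|}\sum_w d_C(w)^2$ above) can concentrate in a triangle-free graph, and extracts a per-cluster cost excess of order $\Omega(1)$ only for clusters that are not essentially stars, then counts how many such clusters there must be when $VC(G)\ge k(1+\veps)$. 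As written, your proposal has a genuine gap at its single load-bearing step.
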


The next hypothesis follows from Hypothesis 4 due to the above lemma.

\begin{quote}
{\bf Hypothesis 5}: There exists constant $c > 1$ such that the following holds: There does not exist an algorithm that gives an approximation guarantee of $c$ for the $k$-means problem that runs in time $poly(n, d) \cdot 2^{\Omega\left( \frac{k}{poly \log k}\right)}$.
\end{quote}

\begin{claim}\label{lemma:H4}
If Hypothesis 4 holds, then so does Hypothesis 5.
\end{claim}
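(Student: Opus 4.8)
The plan is to argue by contradiction in the usual chaining style: assume Hypothesis 5 fails, i.e., for \emph{every} constant $c > 1$ there is an algorithm achieving a $c$-approximation for $k$-means in time $\poly(n,d) \cdot 2^{O(k/\poly\log k)}$, and then derive an algorithm that violates Hypothesis 4. We work with the constants $\veps, d > 0$ guaranteed by Hypothesis 4 for triangle-free graphs. Given a triangle-free graph $G$ on $n$ vertices with maximum degree $d$, I would first apply the reduction of Awasthi \etal (Theorem 3 in \cite{acks15}, quoted above) with $k := 2n/3$ to produce a $k$-means instance with some number $m$ of "edges" (points), where $m = O(nd)$ since $G$ has bounded degree, hence $m = O(n)$. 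Under this reduction: if $G$ has a vertex cover of size at most $2n/3 = k$, the optimal $k$-means cost is at most $m - k$; if every vertex cover has size at least $\frac{2n}{3}(1+\veps) = k(1+\veps)$, the optimal $k$-means cost is at least $m - (1 - \Omega(\veps))k$.

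The key step is to turn this additive gap into a \emph{multiplicative} gap that a constant-factor approximation algorithm can detect. The two completeness/soundness values are $m-k$ and $m - (1-\Omega(\veps))k = (m-k) + \Omega(\veps)k$, so the ratio between soundness and completeness optimum is $1 + \frac{\Omega(\veps)k}{m-k}$. Since $m = O(n)$ and $k = 2n/3 = \Theta(n)$, we have $m - k = O(n) = O(k)$, so this ratio is $1 + \Omega(\veps) =: c_0 > 1$, a constant depending only on $\veps, d$. (One must check $m-k$ is a constant fraction of $k$, i.e.\ that the reduction does not blow up the edge count relative to $k$ — this holds because each vertex has degree at most $d$, so $m \le dn/2 = O(k)$; this is the one quantitative point to verify carefully.) Now fix any constant $c$ with $1 < c < c_0$ and invoke the hypothetical $c$-approximation algorithm from the negation of Hypothesis 5: in the completeness case it returns a solution of cost $\le c(m-k)$, and in the soundness case every solution, and in particular the optimum, has cost $\ge (m-k) + \Omega(\veps)k > c(m-k)$ by choice of $c$; so comparing the returned cost against the threshold $c(m-k)$ distinguishes the two cases.

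Finally I would track the running time. The reduction is polynomial, producing an instance with $N := m + (\text{number of points}) = \poly(n, d)$ points in dimension $D = \poly(n,d)$, and the value of $k$ in the $k$-means instance is $\Theta(n)$. The assumed algorithm runs in time $\poly(N, D) \cdot 2^{O(k/\poly\log k)} = \poly(n, d) \cdot 2^{O(n/\poly\log n)}$, which contradicts Hypothesis 4's assertion that no $2^{\Omega(n/\poly\log n)}$-time algorithm distinguishes the two vertex-cover cases for triangle-free bounded-degree graphs. Hence Hypothesis 5 holds. The main obstacle I anticipate is purely bookkeeping: making sure the various $\veps$'s from the different reductions (Dinur's PCP, SAT-to-VC, VC-to-triangle-free-VC, triangle-free-VC-to-$k$-means) are composed consistently, and that all of $m$, $N$, $D$, and the $k$-means parameter $k$ remain $\poly(n)$ and, crucially, that $k = \Theta(n)$ so that the $2^{k/\poly\log k}$ bound really does translate to $2^{n/\poly\log n}$ — this last point is what ultimately yields the $\frac{k}{\poly\log k}$ query lower bound in Theorem~\ref{thm:main2}.
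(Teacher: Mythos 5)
The paper states Claim~\ref{lemma:H4} without proof, treating it as a routine chaining of the reductions already quoted, so there is no paper proof to compare against line by line; your argument fills exactly that gap and is, in substance, the argument the paper is implicitly invoking. You correctly identify the one non-trivial quantitative point: Awasthi \etal's reduction gives only an \emph{additive} gap $(m-k)$ versus $m-(1-\Omega(\veps))k$, and turning this into a constant multiplicative gap requires $m-k = O(k)$, which you obtain from the bounded degree ($m \le dn/2$) together with $k = 2n/3 = \Theta(n)$. Your running-time bookkeeping is also right: because $k=\Theta(n)$ in the produced $k$-means instance and the instance size and dimension are $\poly(n)$, an algorithm running in $\poly(N,D)\cdot 2^{O(k/\poly\log k)}$ translates to a $2^{O(n/\poly\log n)}$-time distinguisher for the triangle-free vertex-cover gap problem, contradicting Hypothesis~4.

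One small caveat worth making explicit if you wanted to tighten this: the multiplicative ratio $1 + \Omega(\veps)k/(m-k)$ presupposes $m - k > 0$. In the instances produced by the preceding reductions this holds (the graphs coming out of Dinur's PCP and the SAT-to-VC and triangle-free amplification steps have $m = \Theta(n)$ with $m \ge k$), and in any case if $m - k \le 0$ the completeness-side optimum is $0$ while the soundness-side optimum is still $\Omega(\veps)k > 0$, so the gap is only larger; but it is cleaner to note that explicitly rather than leave the denominator unexamined. With that remark added, your proof is complete and matches the intended argument.
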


Now using Lemmas~\ref{lemma:H1}, \ref{lemma:H2}, \ref{lemma:H3}, and \ref{lemma:H4}, get the following result.

\begin{lemma}
If the Exponential Time Hypothesis (ETH) holds then there exists a constant $c > 1$ such that any $c$-approximation algorithm for the $k$-means problem cannot have running time better than $poly(n, d) \cdot 2^{\Omega \left( \frac{k}{poly \log k}\right)}$.
\end{lemma}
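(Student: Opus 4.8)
The plan is simply to chain the sequence of hypotheses that has already been assembled: Hypothesis 1 is literally the Exponential Time Hypothesis, and Hypothesis 5 is a restatement of the conclusion we want. So first I would note that ETH is exactly Hypothesis 1, and then apply Lemma~\ref{lemma:H1}, Lemma~\ref{lemma:H2}, Lemma~\ref{lemma:H3}, and Claim~\ref{lemma:H4} in turn to obtain Hypothesis 2, Hypothesis 3, Hypothesis 4, and finally Hypothesis 5. The statement of Hypothesis 5 asserts precisely that there is a constant $c > 1$ for which no $c$-approximation algorithm for $k$-means runs in time $\poly(n,d)\cdot 2^{\Omega(k/\poly\log k)}$, which is what the lemma claims.

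The only thing requiring care — and this is really the content of the intermediate lemmas rather than of this final one — is verifying that the parameter blow-ups along the chain of reductions stay within budget. Dinur's PCP theorem already introduces a polylogarithmic overhead, turning $m$ clauses into $m' = O(m\,\poly\log m)$; this is exactly why the lower bounds in Hypotheses 2--5 have the form $2^{\Omega(\cdot/\poly\log(\cdot))}$ rather than $2^{\Omega(\cdot)}$. The subsequent reductions — $3$-SAT to Vertex Cover via Lemma~\ref{lemma:sat-vc}, Vertex Cover to triangle-free Vertex Cover via Lemma~\ref{lemma:vc-tvc}, and triangle-free Vertex Cover to $k$-means — each change the relevant size parameter by at most a polynomial (indeed linear, up to constants depending on $\veps,d$) factor, and the number of clusters $k$ in the final instance is $\Theta$ of the vertex-cover value, hence $\Theta(n)$ in the graph and $\Theta(m)$ in the original formula. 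Since a polynomial of a polylogarithm is still polylogarithmic and $2^{\Omega(N/\poly\log N)}$ is preserved under $N \mapsto \Theta(N)$, the exponent survives all the way down to the $k$-means instance.

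I expect the main (and essentially only) obstacle to be bookkeeping. One must check that (i) the gap constant $c > 1$ emerging at the end is genuinely bounded away from $1$ after the loss in the Vertex-Cover-to-$k$-means reduction, where the cost separation is additive ($m-k$ versus $m-(1-\Omega(\veps))k$), so it is necessary that $k$ is a constant fraction of $m$ for this additive gap to become a multiplicative constant; and (ii) the bounded-degree side conditions are propagated consistently, since Lemma~\ref{lemma:vc-tvc} and the final reduction both require the input graph to have bounded degree. None of this is deep — it is just a matter of composing the four displayed lemmas in the stated order and tracking constants.
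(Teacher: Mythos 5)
Your proposal is correct and follows exactly the paper's own argument: the paper proves this lemma in one line by chaining Hypothesis~1 (ETH) through Hypotheses 2--5 via Lemmas~\ref{lemma:H1}, \ref{lemma:H2}, \ref{lemma:H3} and Claim~\ref{lemma:H4}, with Hypothesis~5 being the stated conclusion. Your additional remarks on parameter bookkeeping (polylog blow-up from Dinur's PCP, linearity of the later reductions, the need for $k = \Theta(m)$ to turn the additive $k$-means gap into a multiplicative constant) are the right sanity checks, though they belong to the proofs of the intermediate lemmas rather than to this final chaining step.
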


This proves Theorem~\ref{thm:main2} since if there is a query algorithm that runs in time $poly(n, d, k)$ and makes $\frac{k}{poly \log k}$ same-cluster queries, then we can convert it to a non-query algorithm that runs in time $poly(n, d) \cdot 2^{\frac{k}{poly \log k}}$ in a brute-force manner by trying out all possible answers for the queries and then picking the best $k$-means solution.

\section{Query Approximation Algorithm with Faulty Oracle}\label{sec:faulty}

In this section, we describe how to extend our approximation algorithm for $k$-means clustering in the SSAC framework when the oracle is {\em faulty}. 
That is, the answer to the same-cluster query may be incorrect.
Let us denote the faulty same-cluster oracle as $\OA^E$. 
We consider the following error model: for a query with points $u$ and $v$, the query answer $\OA^E(u,v)$ is wrong independently with probability at most $q$ that is strictly less than $1/2$. 
More specifically, if $u$ and $v$ belong to the same optimal cluster, then $\OA^E(u,v)=1$ with probability at least $(1-q)$ and $\OA^E(u,v)=0$ with probability at most $q$. 
Similarly, if $u$ and $v$ belong to different optimal clusters, then $\OA^E(u,v)=1$ with probability at most $q$ and $\OA^E(u,v)=0$ with probability at least $(1-q)$.

The modified algorithm giving $(1+\eps)$-approximation for $k$-means with faulty oracle $\OA^E$ is given in Figure \ref{fig:algo3}. 
Let $X_1,\ldots,X_k$ denote the $k$ optimal clusters for the dataset $X$. 
Let $C=\{c_1,\ldots,c_i\}$ denote the set of $i$ centers chosen by the algorithm at the end of iteration $i$. 
Let $S$ denote the sample obtained using $D^2$-sampling in the $(i+1)^{\text{st}}$ iteration. 
The key idea for an efficient $(1+\veps)$-approximation algorithm for $k$-means in the SSAC framework with a {\em perfect} oracle was the following. Given a sample $S$, we could compute using at most $k|S|$ same-cluster queries the partition $S_1,\ldots,S_k$ of $S$ among the $k$ optimal clusters such that $S_j=S\cap X_j$ for all $j$. 
In the following, we discuss how we achieve this partitioning of $S$ among $k$ optimal clusters when the oracle $\OA^E$ is faulty.

We reduce the problem of finding the partitions of $S$ among the optimal clusters to the problem of recovering dense (graph) clusters in a {\em stochastic block model} (SBM). 
An instance of an SBM is created as follows. Given any arbitrary partition $V_1,\ldots,V_k$ of a set $V$ of vertices, an edge is added between two vertices belonging to the same partition with probability at least $(1-q)$ and between two vertices in different partitions with probability at most $q$. 
We first construct an instance $I$ of an SBM using the sample $S$. 
By querying the oracle $\OA^E$ with all pairs of points $u,v$ in $S$, we obtain a graph $I$ on $|S|$ vertices, where vertices in $I$ correspond to the points in $S$, and an edge exists in $I$ between vertices $u$ and $v$ if $\OA^E(u,v)=1$. 
Since oracle $\OA^E$ errs with probability at most $q$, for any $u,v\in S_j$ for some $j\in [k]$, there is an edge between $u$ and $v$ with probability at least  $(1-q)$. 
Similarly, there is an edge $(u,v)\in I$ for any two points $u\in S_y$ and $v\in S_z, y\neq z$ belonging to different optimal clusters with probability at most  $q$. 
Note that the instance $I$ created in this manner would be an instance of an SBM. 
Since $q<1/2$, this procedure, with high probability, creates more edges between vertices belonging to the same partition than the number of edges between vertices in different partitions. 
Intuitively, the partitions of $S$ would correspond to the dense (graph) clusters in SBM instance $I$, and if there were no errors, then each partition would correspond to a clique in $I$. 
One way to figure out the partitions $S_1,\ldots,S_k$ would be to retrieve the dense (graph) clusters from the instance $I$. 
Ailon et al.~\cite{ACX2015} gave a randomized algorithm to retrieve all large clusters of any SBM instance. 
Their algorithm on a graph of $n$ vertices retrieves all clusters of size at least $\sqrt{n}$ with high probability. 
Their main result in our context is given as follows.

\begin{table}[htbp]
\centering
\begin{tabular}{| l | l |}
\hline
{\bf Constants}: $N = \frac{(2^{13})k^3}{\eps^2}$, $M = \frac{64k}{\veps}$, $L = \frac{(2^{23})k^2}{\veps^4}$ & \\
\cline{1-1}
{\tt Faulty-Query-$k$-means($X, k, \veps$)} &
{\tt UncoveredCluster($C, S, J$)}\\
\hspace*{0.1in} {\bf (1)} $J \leftarrow \emptyset$ &
\hspace*{0.1in} - For all $i \in \{1, ..., k\}$: $S_i \leftarrow \emptyset$\\
\hspace*{0.1in} {\bf (2)} $C \leftarrow \emptyset$ &
\hspace*{0.1in} - $i \leftarrow 1$ \\
\hspace*{0.1in} {\bf (3)} for $i$ = $1$ to $k$  &
\hspace*{0.1in} - For all $y \in J$: \{$S_i \leftarrow y$; $i$++\}\\
\hspace*{0.3in} {\bf (3.1)} $D^2$-sample a multi-set $S$ of $N$ points &
\hspace*{0.1in} - $T_1,\ldots,T_l$= {\tt PartitionSample($S$)} for $l<k$\\
\hspace*{0.6in} from $X$ with respect to center set $C$ &
\hspace*{0.1in} - for $j=1,\ldots,l$\\
\hspace*{0.3in} {\bf (3.2)} $s \leftarrow $ {\tt UncoveredCluster($C, S, J$)} &
\hspace*{0.3in} - if IsCovered$(C,T_j)$ is FALSE\\
\hspace*{0.3in} {\bf (3.3)} $T \leftarrow $ {\tt UniformSample($X, C, s$)} &
\hspace*{0.5in} - if $\exists t$ such that $S_t=\emptyset$, then $S_t=T_j$ \\
\hspace*{0.3in} {\bf (3.4)} If ($|T| < M$) continue &
\hspace*{0.1in} - Let $S_i$ be the largest set such that $i > |J|$ \\
\hspace*{0.3in} {\bf (3.5)} $J \leftarrow J \cup \{s\}$ &
\hspace*{0.1in} - Let $s \in S_i$ be the element with smallest\\
\hspace*{0.3in} {\bf (3.6)} $C \leftarrow C \cup \mu(T)$  &
\hspace*{0.2in} value of $\Phi(C, \{s\})$ in $S_i$\\
\hspace*{0.1in} {\bf (4)} return($C$) &
\hspace*{0.1in} - return($s$)  \\ \cline{1-1}
{\tt UniformSample($X, C, s$)} &
 \hspace{0.2in} \\ \cline{2-2}
\hspace*{0.1in} - $S \leftarrow \emptyset$  &
\hspace*{0.1in} \\
\hspace*{0.1in} - For $i$ = $1$ to $L$: & {\tt PartitionSample($S$)} \\
\hspace*{0.3in} - $D^2$-sample point $x\in X$ with respect to center set $C$ & - Construct SBM instance $I$ by querying $\OA^E(s,t)$ $\forall s,t\in S$\\
\hspace*{0.3in} - $U=U\cup \{x\}$ & - Run cluster recovery algorithm of Ailon et al.~\cite{ACX2015} on $I$\\
\hspace*{0.1in} - $T_1,\ldots,T_l$ = {\tt PartitionSample($U$)} for $l<k$ & - Return $T_1,\ldots,T_l$ for $l< k$\\ \cline{2-2}
\hspace*{0.1in} - for $j=1,\ldots,l$ & \\
\hspace*{0.3in} - If ({\tt IsCovered($s, T_j$)} is TRUE) & \\
\hspace*{0.5in} - $\forall x\in T_j$, with probability $\left(\frac{\veps}{128} \cdot \frac{\Phi(C, \{s\})}{\Phi(C, \{x\})} \right)$ add $x$ & {\tt IsCovered(C,U)}\\
\hspace*{0.5in} in multi-set $S$ & - for $c \in C$\\
\hspace*{0.1in} - return $(S)$ & \hspace*{0.1in}-- if for majority of $u\in U$, $\OA^E$$(c,u)=1$, Return TRUE \\
\hspace*{0.1in}  & - Return FALSE \\
 \hline
\end{tabular}
\caption{Approximation algorithm for $k$-means (top-left frame) using faulty oracle. Note that $\mu(T)$ denotes the centroid of $T$ and $D^2$-sampling w.r.t. empty center set $C$ means just uniform sampling. The algorithm {\tt UniformSample($X, C, s$)} (bottom-left) returns a uniform sample of size $\Omega(1/\veps)$ (w.h.p.) from the optimal cluster in which point $s$ belongs.}
\label{fig:algo3}
\end{table}

\begin{lemma}[\cite{ACX2015}] \label{recovery1} 
There exists a polynomial time algorithm that, given an instance of a stochastic block model on $n$ vertices, retrieves all clusters of size at least $\Omega(\sqrt{n})$ with high probability, provided $q<1/2$. 
\end{lemma}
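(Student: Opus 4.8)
The plan is to recover the parts one at a time, largest first, by a convex optimization, and to show that a single round correctly extracts the largest part of the current graph whenever that part has size $\Omega(\sqrt{m})$, where $m$ is the current number of vertices. Iterating this and taking a union bound over the at most $O(\sqrt{n})$ rounds then recovers every part of size $\Omega(\sqrt{n})$: such a part, at the moment it becomes the largest remaining one, still has size $\Omega(\sqrt{n}) = \Omega(\sqrt{m})$ since $m \le n$, so the per-round guarantee applies; and deleting a part from an SBM instance leaves an SBM instance on the remaining parts, so the recursion is well defined. (One could instead try to recover all large blocks from a single spectral or SDP solve, but the peeling version matches the ``all parts of size $\Omega(\sqrt{n})$'' guarantee most cleanly, since many tiny parts are allowed to remain unrecovered.)

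For one round, let $V^\star \subseteq [m]$ be the unknown largest part, encoded by the rank-one matrix $Z^\star = \mathbf{1}_{V^\star}\mathbf{1}_{V^\star}^{T}$, and let $A$ be the observed adjacency matrix. Write $A = \mathbb{E}[A] + W$ with $W$ symmetric, zero-mean, and entrywise bounded in $[-1,1]$. Two facts drive the argument: (i) since $\mathbb{E}[A_{uv}] \ge 1-q$ when $u,v$ lie in the same part and $\le q$ otherwise, edges inside $V^\star$ are on average heavier than edges leaving it by the constant margin $1-2q > 0$; and (ii) $\|W\|_{\mathrm{op}} = O(\sqrt{m})$ with high probability, by a standard concentration bound for random symmetric matrices with bounded independent entries. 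I would then solve a convex relaxation of ``find the densest large block'', of the form $\max_{Z}\ \langle A, Z\rangle - \lambda \|Z\|_*$ over $0 \le Z_{uv} \le 1$, $Z \succeq 0$ (the nuclear-norm penalty forcing a single block), and argue that its optimum equals $Z^\star$ with high probability.

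The heart of the proof is a dual certificate: produce Lagrange multipliers under which $Z^\star$ is feasible, optimal, and the unique optimum with high probability. The optimality conditions reduce to a positive-semidefiniteness requirement on a matrix built from $\lambda$ and the projection of $W$ onto the orthogonal complement of the tangent space of rank-one matrices at $Z^\star$, together with sign conditions on the complementary-slackness entries; after taking expectations, these all reduce to the margin $1-2q$ dominating the spectral, the $\ell_\infty$, and the tangent-space-projected fluctuations of $W$, which holds exactly when $|V^\star| \gtrsim \sqrt{m}/(1-2q)^{O(1)}$. Controlling $W$ simultaneously in these several senses is where the $\sqrt{m}$ threshold is forced and is the main technical obstacle; by contrast, rounding $Z^\star$ to the vertex set $V^\star$ (threshold its leading eigenvector, or a row), deleting $V^\star$ and recursing, and the union bound over rounds are all routine. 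Finally, each returned block is validated by a majority/degree check (exactly as in {\tt IsCovered}) to discard any spurious output, so that with high probability precisely the parts of size $\Omega(\sqrt{n})$ are reported.
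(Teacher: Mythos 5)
The paper does not prove this lemma; it is invoked as a black-box citation to Ailon, Chen and Xu \cite{ACX2015}, so there is no ``paper's own proof'' to compare against line by line. What can be said is that your blind sketch is in the same family as the argument in the cited work: \cite{ACX2015} also proceed via a trace-norm (nuclear-norm) convex relaxation whose analysis rests on a dual certificate in which the constant margin $1-2q$ must dominate an $O(\sqrt{m})$ spectral fluctuation of the centered adjacency matrix, which is exactly where the $\Omega(\sqrt{n})$ cluster-size threshold comes from; they too validate and then remove recovered vertices and recurse. The main structural difference is in what a single convex solve is asked to do. In \cite{ACX2015} one solve recovers \emph{all} clusters above the current threshold at once -- the target is a block-diagonal clustering matrix of rank equal to the number of large clusters, and a ``low-rank plus sparse/disagreement'' decomposition absorbs the sub-threshold clusters into the error term -- and the subsequent peeling is used to push the recoverable size \emph{below} $\sqrt{n}$, not to extract the $\ge\sqrt{n}$ clusters one at a time. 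Your variant instead certifies a single rank-one block $\mathbf{1}_{V^\star}\mathbf{1}_{V^\star}^T$ per round and peels $O(\sqrt{n})$ times. Both routes can in principle establish the stated $\Omega(\sqrt{n})$ guarantee, but the one-block-per-round version carries some extra bookkeeping that your sketch glosses over: the convex program you write, $\max_Z \langle A,Z\rangle - \lambda\|Z\|_*$ with $Z\succeq 0$ and $0\le Z_{uv}\le 1$, has $\|Z\|_*=\mathrm{tr}(Z)$ on the PSD cone, so as written it is a linear objective that needs at least a trace or row-sum normalization to force a nontrivial rank-one optimum and to make ``largest block wins'' precise; and the union bound over rounds has to be phrased so that the residual graph (a random subgraph selected by earlier rounds) is still analyzed correctly -- the clean way, as you implicitly do, is to condition on the high-probability event that each earlier round removed exactly one true part, under which the residual is again an SBM on the surviving parts with the original edge randomness. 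With those points tightened, your plan is a faithful reconstruction of the mechanism behind the cited result, just organized around rank-one peeling rather than \cite{ACX2015}'s multi-block recovery per solve.
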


We use Lemma \ref{recovery1} to retrieve the large clusters from our SBM instance $I$. We also need to make sure that the sample $S$ is such that its overlap with at least one uncovered optimal cluster is large, where an optimal cluster $S_j$ for some $j$ is {\em uncovered} if $C\cap S_j=\emptyset$. 
More formally, we would require the following: $\exists j\in [k]$ such that $|S_j|\geq \Omega(\sqrt{|S|})$, and $X_j$ is uncovered by $C$. From Corollary \ref{cor:1}, given a set of centers $C$ with $|C|<k$, there exists an uncovered cluster such that any point sampled using $D^2$-sampling would belong to that uncovered cluster with probability at least $\frac{\eps}{4k}$. Therefore, in expectation, $D^2$-sample $S$ would contain at least $\frac{\eps}{4k}|S|$ points from one such uncovered optimal cluster. In order to ensure that this quantity is at least as large as $\sqrt{|S|}$, we need $|S|=\Omega(\frac{16k^2}{\eps^2})$. Our bounds for $N$ and $L$, in the algorithm, for the size of $D^2$-sample $S$ satisfy this requirement with high probability. This follows from a simple application of Chernoff bounds.

\begin{lemma}\label{rec2} For $D^2$-sample $S$ of size at least $\frac{2^{12} k^2}{\eps^2}$, there is at least one partition $S_j=S\cap X_j$ among the partitions returned by the sub-routine {\tt PartitionSample} corresponding to an uncovered cluster $X_j$ with probability at least $(1-\frac{1}{16k})$.\end{lemma}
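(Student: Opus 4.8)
\medskip
\noindent
The plan is to establish two facts and then take a union bound over their (small) failure probabilities. The first is that the $D^2$-sample $S$ drawn in the $(i{+}1)^{\text{st}}$ iteration contains $\Omega(\sqrt{|S|})$ points of \emph{some} optimal cluster $X_j$ that is still uncovered by the current center set $C$. The second is that, conditioned on the first, the cluster recovery algorithm of Ailon \etal\cite{ACX2015} invoked inside {\tt PartitionSample} returns exactly the block $S_j = S\cap X_j$ among its output sets.

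For the first fact I would invoke Corollary~\ref{cor:1}, which is available because we are inside the inductive step where the invariant holds: there is an index $j\in\{i{+}1,\ldots,k\}$ with $\frac{\Phi(C,X_j)}{\Phi(C,X)}\geq\frac{\veps}{4k}$, and the corresponding $X_j$ is uncovered. Since the $|S|$ points of $S$ are $D^2$-sampled independently, $|S_j| = |S\cap X_j|$ is a sum of i.i.d.\ Bernoulli variables, each with mean at least $\frac{\veps}{4k}$, so
\[
\E[|S_j|]\ \geq\ \frac{\veps}{4k}\cdot|S|\ \geq\ \frac{\veps}{4k}\cdot\frac{2^{12}k^2}{\veps^2}\ =\ \frac{2^{10}k}{\veps}\ \geq\ 16\sqrt{|S|},
\]
using $|S|\geq\frac{2^{12}k^2}{\veps^2}$ (the gap only widens for larger $|S|$). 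A Chernoff bound then gives $|S_j|\geq\tfrac12\E[|S_j|]\geq 8\sqrt{|S|} = \Omega(\sqrt{|S|})$ except with probability $e^{-\Omega(k/\veps)}\leq\frac{1}{32k}$.

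For the second fact I would note that the graph $I$ built by {\tt PartitionSample}, by querying $\OA^E$ on all $\binom{|S|}{2}$ pairs of $S$, is a legitimate stochastic block model instance with hidden partition $S\cap X_1,\ldots,S\cap X_k$: the error model guarantees that each query is flipped independently with probability at most $q<1/2$ (and is not re-randomised on repetition), so an intra-block pair is joined with probability $\geq 1-q$ and an inter-block pair with probability $\leq q$, independently across pairs. This randomness is independent of the $D^2$-sampling, so conditioning on the event of the previous paragraph does not affect it. Lemma~\ref{recovery1} then guarantees that with probability $1-|S|^{-\Omega(1)}$ the recovery algorithm outputs exactly every block of size $\Omega(\sqrt{|S|})$; since $|S| = \Omega(k^2)$ this is at least $1-\frac{1}{32k}$, and the block $S_j$ (of size $\Omega(\sqrt{|S|})$ by the first fact) is among them. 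A union bound over the two failure events, each of probability $\frac{1}{32k}$, yields the claimed probability $1-\frac{1}{16k}$.

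The step I expect to demand the most care is matching constants, not anything conceptual: one must check that $8\sqrt{|S|}$ comfortably exceeds the constant hidden in the $\Omega(\sqrt{n})$ threshold of Lemma~\ref{recovery1} --- which is exactly why the faulty algorithm uses $N=\Theta(k^3/\veps^2)$, a factor $k$ above the bare minimum $\Theta(k^2/\veps^2)$ appearing in the lemma statement --- and that the unspecified ``with high probability'' of Lemma~\ref{recovery1} is at least $1-\frac{1}{32k}$ for $n=|S|$ as small as $\frac{2^{12}k^2}{\veps^2}$; both hold because $|S|$ is polynomially large in $k$. The one point worth making explicit in the write-up is that the per-pair oracle errors are mutually independent and do not refresh under repeated queries, since this is precisely what makes $I$ an instance to which Lemma~\ref{recovery1} applies.
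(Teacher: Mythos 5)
Your proposal is correct and takes essentially the same route as the paper: invoke Corollary~\ref{cor:1} to find an uncovered cluster with $D^2$-mass at least $\veps/4k$, bound $\E[|S_j|]\geq 2^{10}k/\veps$, apply a Chernoff bound to show $|S_j|$ exceeds the $\Omega(\sqrt{|S|})$ recovery threshold of Lemma~\ref{recovery1}, and conclude. The one place you are more careful than the paper's own write-up is in explicitly adding a separate $\tfrac{1}{32k}$ term for the failure probability of the Ailon \etal recovery algorithm and union-bounding; the paper folds this into an informal ``with high probability'' and charges only the Chernoff failure to the $\tfrac{1}{16k}$ budget.
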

\begin{proof} From Corollary \ref{cor:1}, for any point $p$ sampled using $D^2$-sampling, the probability that point $p$ belongs to some uncovered cluster $X_j$ is at least $\frac{\eps}{4k}$. In expectation, the number of points sampled from uncovered cluster $X_j$ is $\E[|S_j|]=\frac{\eps|S|}{4k}=\frac{2^{10}k}{\eps}$. Exact recovery using Lemma \ref{recovery1} requires $|S_j|$ to be at least $\frac{2^6 k}{\eps}$. Using Chernoff bounds, the probability of this event is at least $(1-\frac{1}{16k})$. \qed \end{proof}

Following Lemma \ref{rec2}, we condition on the event that there is at least one partition corresponding to an uncovered cluster among the partitions returned by the sub-routine {\tt PartitionSample}. Next, we figure out using the sub-routine {\tt IsCovered} which of the partitions returned by {\tt PartitionSample} are covered and which are uncovered. Let $T_1,\ldots,T_l$ be the partitions returned by {\tt PartitionSample} where $l<k$. Sub-routine {\tt IsCovered} determines whether a cluster is covered or uncovered in the following manner. For each $j\in [l]$, we check whether $T_j$ is covered by some $c\in C$. We query oracle $\OA^E$ with pairs $(v,c)$ for $v\in T_j$ and $c\in C$. If majority of the query answers for some $c\in C$ is $1$, we say cluster $T_j$ is covered by $C$. If for all $c\in C$ and some $T_j$, the majority of the query answers is $0$, then we say $T_j$ is uncovered by $C$. Using Chernoff bounds, we show that with high probability uncovered clusters would be detected. 

\begin{lemma}\label{lemma-isCov} With probability at least $(1-\frac{1}{16k})$, all covered and uncovered clusters are detected correctly by the sub-routine {\tt IsCovered}. \end{lemma}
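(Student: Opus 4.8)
The plan is a routine concentration-plus-union-bound argument, and the first thing I would do is pin down what must be conditioned on before \texttt{IsCovered} can be analysed. By Lemma~\ref{recovery1} and Lemma~\ref{rec2} we may assume the event that every partition $T_j$ returned by \texttt{PartitionSample} in iteration $i+1$ is contained in a single optimal cluster, say $T_j \subseteq X_{l(j)}$, and moreover has size $|T_j| = \Omega(\sqrt{N}) = \Omega(k^{3/2}/\veps)$, since the SBM recovery algorithm of Ailon et al.\ only returns clusters of size $\Omega(\sqrt{|S|})$. (For the call to \texttt{IsCovered} made inside \texttt{UniformSample} the analogous bound is $\Omega(\sqrt{L}) = \Omega(k/\veps^2)$, with $|C| = 1$ there.) On top of this event I would fix a single partition $T_j$ and a single center $c \in C$ and do a two-case analysis: if $c \in X_{l(j)}$ then each answer $\OA^E(c,u)$, $u \in T_j$, equals $1$ independently with probability at least $1-q$, so in expectation more than $|T_j|/2$ of the answers are $1$; if $c \notin X_{l(j)}$ then each answer equals $1$ independently with probability at most $q$, so in expectation fewer than $|T_j|/2$ of them are $1$. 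Here I would stress that the queried pairs $(c,u)$ are all distinct, so the "independent errors" assumption of the faulty model applies honestly --- we never repeat a query.

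Next I would apply a two-sided Chernoff (or Hoeffding) bound: the majority vote among the $|T_j|$ answers lands on the wrong side of $1/2$ only if the empirical fraction of $1$-answers deviates from its mean by at least the constant $\tfrac12 - q$, an event of probability at most $e^{-\Omega((1/2-q)^2 |T_j|)} = e^{-\Omega(\sqrt{N})}$. A union bound over the at most $k$ centers in $C$ and the at most $l < k$ partitions $T_j$ (and likewise over the calls inside \texttt{UniformSample}) then bounds the probability that \emph{any} vote is decided incorrectly by $O(k^2)\cdot e^{-\Omega(\sqrt{N})}$, and I would check that this is at most $\tfrac{1}{16k}$ using $N = (2^{13})k^3/\veps^2$ --- indeed $\sqrt{N}$ dominates $\log(16k^3)$ for every $k \ge 1$ and $\veps \le 1/2$. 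Finally I would translate "all votes correct" into "all clusters classified correctly": if $X_{l(j)}$ is covered, the center $c^\ast \in C \cap X_{l(j)}$ yields a $1$-majority and \texttt{IsCovered} returns TRUE as it should; if $X_{l(j)}$ is uncovered, every $c \in C$ yields a $0$-majority and \texttt{IsCovered} returns FALSE as it should.

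I do not expect a genuine obstacle here; if anything, the step needing the most care is verifying that the Chernoff bound is actually meaningful, i.e.\ that the recovered partitions are simultaneously "pure" (each inside one optimal cluster) and large ($\Omega(\sqrt{N})$ points). Both facts are exactly what Lemma~\ref{recovery1} together with Lemma~\ref{rec2} buy us, which is precisely why this lemma is placed after them; beyond invoking those, everything reduces to a standard two-sided tail bound that uses only that $q$ is a constant strictly below $1/2$.
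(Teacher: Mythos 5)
Your proposal is correct and follows essentially the same route as the paper: condition on \texttt{PartitionSample} returning pure, large ($\Omega(\sqrt{N})$-sized) partitions via Lemma~\ref{rec2}, apply a Chernoff bound to the majority vote for each (partition, center) pair, and finish with a union bound over at most $k^2$ such pairs. You are somewhat more explicit than the paper about the conditioning event, the two-case (covered/uncovered) split, the distinctness of queried pairs (so the independence assumption applies), and the analogous call inside \texttt{UniformSample}, but the underlying argument is the same.
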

\begin{proof} First, we figure out the probability that any partition $T_j$ for $j\in [l]$ is detected correctly as covered or uncovered. Then we use union bound to bound the probability that all clusters are detected correctly. Recall that each partition returned by {\tt PartitionSample} has size at least ${|T_j|}\geq \frac{2^6k}{\eps}$ for $j\in [l]$. We first compute for one such partition $T_j$ and some center $c\in C$, the probability that majority of the queries $\OA^E(v,c)$ where $v\in T_j$ are wrong. Since each query answer is wrong independently with probability $q<1/2$, in expectation the number of wrong query answers would be $q|T_j|$. Using Chernoff bound, the probability that majority of the queries is wrong is at most $e^{-\frac{2^6 k}{3\eps} (1-\frac{1}{2q})^2}$. The probability that the majority of the queries is wrong for at least one center $c\in C$ is at most $ke^{-\frac{2^6 k}{3\eps} (1-\frac{1}{2q})^2}$. Again using union bound all clusters are detected correctly with probability at least $(1-k^2e^{-\frac{2^6 k}{3\eps} (1-\frac{1}{2q})^2}) \geq (1-\frac{1}{16k})$. \qed \end{proof}

With probability at least $(1-\frac{1}{8k})$, given a $D^2$-sample $S$, we can figure out the largest uncovered optimal cluster using the sub-routines {\tt PartitionSample} and {\tt IsCovered}. The analysis of the Algorithm \ref{fig:algo3} follows the analysis of Algorithm \ref{fig:algo}. For completeness, we compute the probability of success, and the query complexity of the algorithm. Note that $s$ in line (3.2) of the Algorithm \ref{fig:algo3} is chosen correctly with probability $(1-\frac{1}{4k})(1-\frac{1}{8k})$. The uniform sample in line (3.3) is chosen properly with probability $(1-\frac{1}{4k})(1-\frac{1}{8k})$. Since, given the uniform sample, success probability using Inaba's lemma is at least $(1-\frac{1}{4k})$, overall the probability of success becomes $(1-\frac{1}{k})$. For query complexity, we observe that {\tt PartitionSample} makes $O(\frac{k^6}{\eps^8})$ same-cluster queries to oracle $\OA^E$, query complexity of {\tt IsCovered} is $O(\frac{k^4}{\eps^4})$. Since {\tt PartitionSample} is called at most $k$ times, total query complexity would be $O(\frac{k^7}{\eps^8})$.
Note that these are bounds for dataset that satisfies $(k, \veps)$-irreducibility condition.
For general dataset, we will use $O(\veps/k)$ as the error parameter. 
This causes the number of same-cluster queries to be $O(k^{15}/\veps^8)$.

\section{Conclusion and Open Problems}
This work explored the power of the SSAC framework defined by Ashtiani \etal\cite{akd16} in the approximation algorithms domain.
We showed how same-cluster queries allowed us to convert the popular $k$-means++ seeding algorithm into an algorithm that gives constant approximation for the $k$-means problem instead of the $O(\log{k})$ approximation guarantee of $k$-means++ in the absence of such queries.
Furthermore, we obtained an efficient $(1 + \veps)$-approximation algorithm for the $k$-means problem within the SSAC framework.
This is interesting because it is known that such an efficient algorithm is not possible in the classical model unless $\mathsf{P} = \mathsf{NP}$.

Our results encourages us to formulate similar query models for other hard problems.
If the query model is reasonable (as is the SSAC framework for center-based clustering), then it may be worthwhile exploring its powers and limitations as it may be another way of circumventing the hardness of the problem.
For instance, the problem closest to center-based clustering problems such as $k$-means is the {\em correlation clustering} problem.
The query model for this problem may be similar to the SSAC framework.
It will be interesting to see if same-cluster queries allows us to design efficient algorithms and approximation algorithms for the correlation clustering problem for which hardness results similar to that of $k$-means is known.

\bibliographystyle{alpha}
\bibliography{paper}

\end{document}